\newtheoremstyle{note}
{3pt}
{1pt}
{}
{\parindent}
{\itshape}
{:}
{.5em}
{\thmname{#1}\thmnumber{ #2}\thmnote{\thmnote{ (#3)}}}
\theoremstyle{note}
\newtheorem{lem}{Lemma}
\newtheorem{ther}{Theorem}
\newtheorem{deft}{Definition}
\newtheorem{prop}{Proposition}
\theoremstyle{definition}
\newtheorem{rem}{Remark}
\newtheoremstyle{dotless}{}{}{\itshape}{}{\bfseries}{}{ }{}
\theoremstyle{dotless}
\newcommand{\Z}{\mathsf{Z}}
\newcommand{\M}{\mathsf{M}}
\newcommand{\E}{\mathsf{E}}
\newcommand{\F}{\mathsf{F}}
\newcommand{\V}{\mathbb{V}}
\newcommand{\R}{\mathbb{R}}
\newcommand{\X}{\mathsf{X}}
\newcommand{\U}{\mathsf{U}}
\newcommand{\Y}{\mathsf{Y}}
\newcommand{\D}{\mathbb{D}}
\newcommand{\W}{\mathsf{W}}
\newcommand {\aplt} {\ {\raise-.5ex\hbox{$\buildrel<\over{\mbox{\scriptsize $\sim$}}$}}\ }
\providecommand{\abs}[1]{\ensuremath{\left\lvert #1 \right\rvert}}
\DeclareMathOperator{\SNR}{\mathsf{SNR}}
\newcommand{\argmax}{\operatornamewithlimits{argmax}}
\begin{document}
%
\title{Achieving Secrecy Capacity of the Gaussian Wiretap Channel with Polar Lattices}
\author{Ling~Liu,
        Yanfei~Yan,
	        and Cong Ling~\IEEEmembership{Member,~IEEE}
    \thanks{This work was supported in part by FP7 project PHYLAWS (EU FP7-ICT 317562) and in part by the China Scholarship Council. This work was presented at the IEEE Int. Symp. Inform. Theory (ISIT), Honolulu, USA, 2014 and the IEEE Inform. Theory Workshop, Jerusalem, ISRAEL, 2015.}
	\thanks{Ling Liu, Yanfei Yan and Cong Ling are with the Department of Electrical and Electronic Engineering,
	 Imperial College London, London, UK (e-mails: l.liu12@imperial.ac.uk, y.yan10@imperial.ac.uk, cling@ieee.org).}
}

\maketitle
\begin{abstract}
In this work, an explicit scheme of wiretap coding based on polar lattices is proposed to achieve the secrecy capacity of the additive white Gaussian noise (AWGN) wiretap channel. Firstly, polar lattices are used to construct secrecy-good lattices for the mod-$\Lambda_s$ Gaussian wiretap channel. Then we propose an explicit shaping scheme to remove this mod-$\Lambda_s$ front end and extend polar lattices to the genuine Gaussian wiretap channel. The shaping technique is based on the lattice Gaussian distribution, which leads to a binary asymmetric channel at each level for the multilevel lattice codes. By employing the asymmetric polar coding technique, we construct an AWGN-good lattice and a secrecy-good lattice with optimal shaping simultaneously. As a result, the encoding complexity for the sender and the decoding complexity for the legitimate receiver are both $O(N\log N\log(\log N))$. The proposed scheme is proven to be semantically secure.
\end{abstract}

\IEEEpeerreviewmaketitle

\section{Introduction}
Wyner \cite{wyner1} introduced the wiretap channel model and showed that both reliability and confidentiality could be attained by coding without any key bits if the channel between the sender and the eavesdropper (wiretapper's channel $W$) is degraded with respect to the channel between the sender and the legitimate receiver (main channel $V$). The goal of wiretap coding is to design a coding scheme that makes it possible to communicate both reliably and securely between the sender and the legitimate receiver. Reliability is measured by the decoding error probability for the legitimate user, namely $\lim\limits_{N\rightarrow\infty} \text{Pr}\{\widehat{\M}\neq \M\}=0$, where $N$ is the length of transmitted codeword, $\M$ is the confidential message and $\widehat{\M}$ is its estimate. Secrecy is measured by the mutual information between $\M$ and the signal received by the eavesdropper $\Z^{[N]}$. In this work, we will follow the strong secrecy condition proposed by Csisz\'{a}r \cite{csis1}, i.e., $\lim\limits_{N\rightarrow\infty}I(\M;\Z^{[N]})=0$, which is more widely accepted than the weak secrecy criterion $\lim\limits_{N\rightarrow\infty}\frac{1}{N}I(\M;\Z^{[N]})=0$. In simple terms, the secrecy capacity is defined as the maximum achievable rate under both the reliability and strong secrecy conditions. When $W$ and $V$ are both symmetric, and $W$ is degraded with respect to $V$, the secrecy capacity is given by $C(V)-C(W)$ \cite{wiretapCapacity}, where $C(\cdot)$ denotes the channel capacity.

In the study of strong secrecy, plaintext messages are often assumed to be
random and uniformly distributed. From a cryptographic point of view, it is crucial that the security does not rely on the distribution of the message.
This issue can be resolved by using the standard notion of \emph{semantic security}~\cite{GoMi84} which
means that, asymptotically, it is impossible to estimate any function of
the message better than to guess it without accessing~$\Z^{[N]}$ at all.
The relation between
strong secrecy and semantic security was recently revealed in~\cite{Bellare_Tessaro_Vardy,cong2},
namely, semantic security is equivalent to achieving strong secrecy
for all distributions $p_{\M}$ of the plaintext messages:
\begin{equation} \label{mis}
\lim_{N\to \infty}\max_{p_{\M}} {I}(\M;\Z^{[N]}) = 0.
\end{equation}

\begin{figure}[h]
    \centering
    \includegraphics[width=12cm]{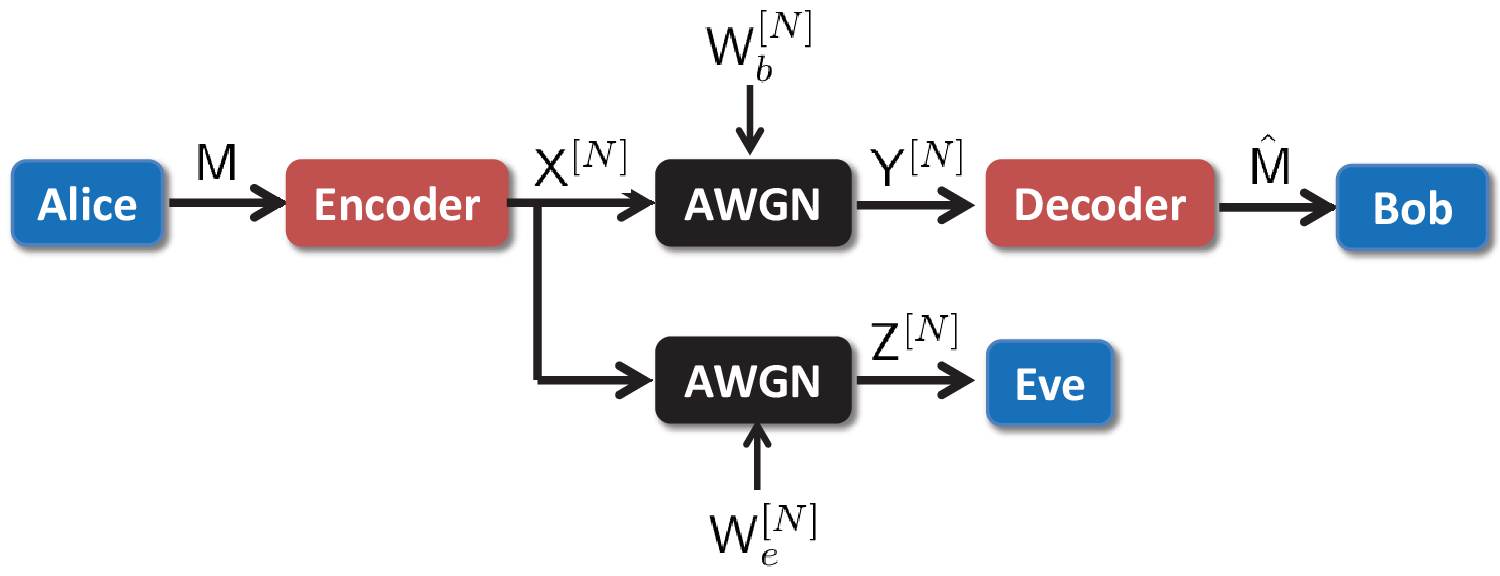}
    \caption{The Gaussian wiretap channel.}
    \label{fig:wiretap}
\end{figure}

In this work, we construct lattice codes for the Gaussian wiretap channel (GWC) which is shown in Fig. \ref{fig:wiretap}. The confidential message $\M$ drawn from the message set $\mathcal{M}$ is encoded by the sender (Alice) into an $N$-dimensional codeword $\X^{[N]}$. The outputs $\Y^{[N]}$ and $\Z^{[N]}$ received by the legitimate receiver (Bob) and the eavesdropper (Eve) are respectively given by
\begin{eqnarray}
\left\{\begin{aligned}
&\Y^{[N]}=\X^{[N]}+\W_{b}^{[N]} \notag\\
&\Z^{[N]}=\X^{[N]}+\W_{e}^{[N]}, \notag\
\end{aligned}\right.
\notag\
\end{eqnarray}
where $\W_{b}^{[N]}$ and $\W_{e}^{[N]}$ are $N$-dimensional Gaussian noise vectors with zero mean and variance $\sigma_{b}^{2}$, $\sigma_{e}^{2}$ respectively. The channel input $\X^{[N]}$ satisfies the power constraint $P_s$, i.e.,
\begin{eqnarray}
\frac{1}{N}E[\|\X^{[N]}\|^2]\leq P_s.
\notag\
\end{eqnarray}

Polar codes \cite{arikan2009channel} have shown their great potential in solving the wiretap coding problem. The polar coding scheme proposed in \cite{polarsecrecy}, combined with the block Markov coding technique \cite{NewPolarSchemeWiretap}, was proved to achieve the strong secrecy capacity when $W$ and $V$ are both binary-input symmetric channels, and $W$ is degraded with respect to $V$. More recently, polar wiretap coding has been extended to general wiretap channels (not necessarily degraded or symmetric) in \cite{BargWiretap} and \cite{GeneralWiretap}. For continuous channels such as the GWC, there also has been notable progress in wiretap lattice coding. On the theoretical aspect, the existence of lattice codes achieving the secrecy capacity to within $\frac{1}{2}$ nat under the strong secrecy as well as semantic security criterion was demonstrated in \cite{cong2}. On the practical aspect, wiretap lattice codes were proposed in \cite{LatticeBelf} and \cite{GaussianFadingWiretap} to maximize the eavesdropper's decoding error probability.

\subsection{Our contribution}
Polar lattices, the counterpart of polar codes in the Euclidean space, have already been proved to be additive white Gaussian noise (AWGN)-good \cite{yan2} and further to achieve the AWGN channel capacity with lattice Gaussian shaping \cite{polarlatticeJ}\footnote{Please refer to \cite{BK:Zamir,AbbeAWGN,BarronAWGN} for other methods of achieving the AWGN channel capacity.}.  Motivated by \cite{polarsecrecy}, we will propose polar lattices to achieve both strong secrecy and reliability over the mod-$\Lambda_s$ GWC. Conceptually, this polar lattice structure can be regarded as a secrecy-good lattice $\Lambda_e$ nested within an AWGN-good lattice $\Lambda_b$ ($\Lambda_e \subset \Lambda_b$). Further, we will propose a Gaussian shaping scheme over $\Lambda_b$ and $\Lambda_e$, using the multilevel asymmetric polar coding technique. As a result, we will accomplish the design of an explicit lattice coding scheme which achieves the secrecy capacity of the GWC with semantic security.
\begin{itemize}
  \item The first technical contribution of this paper is the explicit construction of secrecy-good polar lattices for the mod-$\Lambda_s$ GWC and the proof of their secrecy capacity-achieving. This is an extension of the binary symmetric wiretap coding \cite{polarsecrecy} to the multilevel coding scenario, and can also be considered as the construction of secrecy-good polar lattices for the GWC without the power constraint. The construction for the mod-$\Lambda_s$ GWC provides considerable insight into wiretap coding for the genuine GWC, without deviating to the technicality of Gaussian shaping. This work is also of independent interest to other problems of information theoretic security, e.g., secret key generation from Gaussian sources \cite{CongSecKey}.
  \item Our second contribution is the Gaussian shaping applied to the secrecy-good polar lattice, which follows the technique of \cite{cong2,polarlatticeJ}. The resultant coding scheme is proved to achieve the secrecy capacity of the GWC. It is worth mentioning that our proposed coding scheme is not only a practical implementation of the secure random lattice coding in \cite{cong2}, but also an improvement in the sense that we successfully remove the constant $\frac{1}{2}$-nat gap to the secrecy capacity\footnote{The $\frac{1}{2}$-nat gap in \cite{cong2} was due to a requirement on the volume-to-noise ratio of the secrecy-good lattice. In this paper, we employ mutual information, rather than via the flatness factor, to directly bound information leakage, thereby removing that requirement of the secrecy-good lattice.}. 
\end{itemize}

\subsection{Comparison with the extractor-based approach}
Invertible randomness extractors were introduced into wiretap coding in \cite{HayashiWiretap,InvertExtraWiretap,Bellare_Tessaro_Vardy}. The key idea is that an extractor is used to convert a capacity-achieving code with rate close to $C(V)$ for the main channel into a wiretap code with the rate close to $C(V)-C(W)$. Later, this coding scheme was extended to the GWC in \cite{VardyGWC}. Besides, channel resolvability \cite{BlockResolv} was proposed as a tool for wiretap codes. An interesting connection between the resolvability and the extractor was revealed in \cite{BlochITW}.

The proposed approach and the one based on invertible extractors have their respective advantages. The extractor-based approach is modular, i.e., the error-correction code and extractor are realized separately; it is possible to harness the results of invertible extractors in literature. The advantage of our lattice-based scheme is that the wiretap code designed for Eve is nested within the capacity-achieving code designed for Bob, which represents an integrated approach. More importantly, lattice codes are attractive for emerging applications in network information theory thanks to their useful structures \cite{BK:Zamir}, \cite{ComptForward}; thus the proposed scheme may fit better with this landscape when security is a concern \cite{Liang2009}.

\subsection{Outline of the paper}

The paper is organized as follows: Section II presents some preliminaries of lattice codes. The binary polar codes and multilevel lattice structure \cite{forney6} are briefly reviewed in Section III, where the original polar wiretap coding scheme in \cite{polarsecrecy} is slightly modified to be compatible to the following shaping operation. In Section IV, we construct secrecy-good polar lattices for the mod-$\Lambda_s$ GWC. In Section V, we show how to implement the discrete Gaussian shaping over the polar lattice to remove the mod-$\Lambda_s$ front end, using the polar coding technique for asymmetric channels. Then we prove that our wiretap lattice coding achieves the secrecy capacity with shaping. Finally, we discuss the relationship between the lattice constructions with and without shaping in Section VI.

\subsection{Notation} \color{black}
All random variables (RVs) will be denoted by capital letters. Let $P_\X$ denote the probability distribution of a RV $\X$ taking values $x$ in a set $\mathcal{X}$ and let $H(\X)$ denote its entropy. For multilevel coding, we denote by $\X_\ell$ a RV $\X$ at level $\ell$. The $i$-th realization of $\X_\ell$ is denoted by $x_\ell^i$. We also use the notation $x_\ell^{i:j}$ as a shorthand for a vector $(x_\ell^i,..., x_\ell^j)$, which is a realization of RVs $\X_\ell^{i:j}=(\X_\ell^i,..., \X_\ell^j)$. Similarly, $x_{\ell:j}^i$ will denote the realization of the $i$-th RVs from level $\ell$ to level $j$, i.e., of $\X_{\ell:j}^i=(\X_\ell^i,..., \X_j^i)$. \color{black}For a set $\mathcal{I}$, $\mathcal{I}^c$ denotes its compliment set, and $|\mathcal{I}|$ represents its cardinality. For an integer $N$, $[N]$ will be used to denote the set of all integers from $1$ to $N$. A binary memoryless asymmetric (BMA) channel and a binary memoryless symmetric (BMS) channel will be denoted by $W$ and $\widetilde{W}$, respectively. Following the notation of \cite{arikan2009channel}, we denote $N$ independent uses of channel $W$ by $W^N$. By channel combining and splitting, we get the combined channel $W_N$ and the $i$-th subchannel $W_N^{(i)}$. Specifically, for a channel $W_\ell$ at level $\ell$, $W_{\ell}^N$, $W_{\ell,N}$ and $W_{\ell}^{(i,N)}$ are used to denote its $N$ independent expansion, the combined channel and the $i$-th subchannel after polarization. An indicator function is represented by $\mathds{1}(\cdot)$. Throughout this paper, we use the binary logarithm, denoted by log, and information is measured in bits.

\section{Preliminaries of Lattice Codes}
\subsection{Definitions}
\label{sec:deftn}
A lattice is a discrete subgroup of $\mathbb{R}^{n}$ which can be described by
\begin{eqnarray}
\Lambda=\{ \lambda={B}{x}:{x}\in\mathbb{Z}^{n}\}, \notag\
\end{eqnarray}
where ${B}$ is an $n$-by-$n$ lattice generator matrix and we always assume that it has full rank in this paper.

For a vector ${x}\in\mathbb{R}^{n}$, the nearest-neighbor quantizer associated with $\Lambda$ is $Q_{\Lambda}({x})=\text{arg}\min\limits_{ \lambda\in\Lambda}\|\lambda-{x}\|$. We define the modulo lattice operation by ${x} \text{ mod }\Lambda\triangleq {x}-Q_{\Lambda}({x})$. The Voronoi region of $\Lambda$, defined by $\mathcal{V}(\Lambda)=\{{x}:Q_{\Lambda}({x})=0\}$, specifies the nearest-neighbor decoding region. The Voronoi cell is one example of fundamental region of the lattice. A measurable set $\mathcal{R}(\Lambda)\subset\mathbb{R}^{n}$ is a fundamental region of the lattice $\Lambda$ if $\cup_{\lambda\in\Lambda}(\mathcal{R}(\Lambda)+\lambda)=\mathbb{R}^{n}$ and if $(\mathcal{R}(\Lambda)+\lambda)\cap(\mathcal{R}(\Lambda)+\lambda')$ has measure 0 for any $\lambda\neq\lambda'$ in $\Lambda$. The volume of a fundamental region is equal to that of the Voronoi region $\mathcal{V}(\Lambda)$, which is given by $\text{Vol}(\Lambda)=|\text{det}({B})|$.

The theta series of $\Lambda$ (see, e.g., \cite[p.70]{yellowbook}) is defined as
\begin{eqnarray}
\Theta_{\Lambda}(\tau)=\sum_{\lambda\in\Lambda}e^{-\pi\tau\parallel\lambda\parallel^{2}}, \quad \tau>0.
\notag\
\end{eqnarray}

In this paper, the reliability condition for Bob is measured by the block error probability $P_{e}(\Lambda,\sigma^2)$ of lattice decoding. It is the probability $\text{Pr}\{{x}\notin \mathcal{V}(\Lambda)\}$ that an $n$-dimensional independent and identically distributed (i.i.d.) Gaussian noise vector ${x}$ with zero mean and variance $\sigma^{2}$ per dimension falls outside the Voronoi region $\mathcal{V}(\Lambda)$. For an $n$-dimensional lattice $\Lambda$, define the volume-to-noise ratio (VNR) of $\Lambda$ by
\begin{eqnarray}
\gamma_{\Lambda}(\sigma)\triangleq\frac{\text{Vol}(\Lambda)^\frac{2}{n}}{\sigma^2}. \notag\
\end{eqnarray}
Then we introduce the notion of lattices which are good for the AWGN channel without power constraint.
\begin{deft}[AWGN-good lattices]\label{deft:awgngood}
A sequence of lattices $\Lambda_b$ of increasing dimension $n$ is AWGN-good if, for any fixed $P_{e}(\Lambda_b,\sigma^2)\in(0,1)$, $\lim_{n\to\infty}\gamma_{\Lambda_b}(\sigma)=2\pi e$, and if, for any fixed VNR greater than $2\pi e$,
\begin{eqnarray}
\lim_{n\to\infty} P_e(\Lambda_b,\sigma^2) = 0. \notag\
\end{eqnarray}
\end{deft}

It is worth mentioning here that we do not insist on exponentially vanishing error probabilities, unlike Poltyrev's original treatment of good lattices for coding over the AWGN channel \cite{Poltyrev}. This is because a sub-exponential or polynomial decay of the error probability is often good enough.



Next, we introduce the notion of secrecy-good lattices. For this purpose, we need the capacity $C(\Lambda_e, \sigma^2)$ of the mod-$\Lambda_e$ channel, which will be defined in \eqref{eqn:modcapacity}.

\begin{deft}[Secrecy-good lattices]\label{deft:secrecygood}
A sequence of lattices $\Lambda_e$ of increasing dimension $n$ is secrecy-good if, for any fixed VNR of $\Lambda_e$ smaller than $2 \pi e$, the channel capacity $C(\Lambda_e, \sigma^2)$  vanishes:
\begin{eqnarray}
\lim_{n\to\infty} C(\Lambda_e, \sigma^2)=0. \notag\
\end{eqnarray}
\end{deft}

Note that this definition is different from that in \cite{cong2}, which is based on the flatness factor associated with the lattice Gaussian distribution. We will show that this definition is also sufficient to guarantee vanishing information leakage (see Remark \ref{rmk:sgood}).

\color{black}

\subsection{Flatness factor and lattice Gaussian distribution}
\label{seq:ffactor}
For $\sigma>0$ and ${c}\in\mathbb{R}^{n}$, the Gaussian distribution of mean ${c}$ and variance $\sigma^{2}$ is defined as
\begin{eqnarray}
f_{\sigma,{c}}({x})=\frac{1}{(\sqrt{2\pi}\sigma)^{n}}e^{-\frac{\| {x}-{c}\|^{2}}{2\sigma^{2}}}, \notag\
\end{eqnarray}
for all ${x}\in\mathbb{R}^{n}$. For convenience, let $f_{\sigma}({x})=f_{\sigma,{0}}({x})$.

Given lattice $\Lambda$, we define the $\Lambda$-periodic function
\begin{eqnarray}
f_{\sigma,\Lambda}({x})=\sum\limits_{\lambda\in\Lambda}f_{\sigma,\lambda}({x})=\frac{1}{(\sqrt{2\pi}\sigma)^{n}}\sum\limits_{\lambda\in\Lambda}e^{-\frac{\parallel {x}-\lambda\parallel^{2}}{2\sigma^{2}}}, \notag\
\end{eqnarray}
for ${x}\in\mathbb{R}^n$.

The flatness factor is defined for a lattice $\Lambda$ as \cite{cong2}
\begin{eqnarray}
\epsilon_{\Lambda}(\sigma)\triangleq\max\limits_{{x}\in\mathcal{R}(\Lambda)}\abs{\text{Vol}(\Lambda)f_{\sigma,\Lambda}({x})-1}. \notag\
\end{eqnarray}
It can be interpreted as the maximum variation of $f_{\sigma,\Lambda}({x})$ from the uniform distribution over $\mathcal{R}(\Lambda)$.
The flatness factor can be calculated using the theta series \cite{cong2}:
\[\epsilon_{\Lambda}(\sigma)=\left(\frac{\gamma_{\Lambda}(\sigma)}{2\pi}\right)^{\frac{n}{2}}\Theta_{\Lambda}\left(\frac{1}{2\pi\sigma^{2}}\right)-1.\]

We define the \emph{discrete Gaussian distribution} over $\Lambda$
centered at ${c} \in \R^n$ as the following discrete
distribution taking values in ${ \lambda} \in \Lambda$:
\[
D_{\Lambda,\sigma,{c}}({ \lambda})=\frac{f_{\sigma,{c}}({{ \lambda}})}{f_{\sigma,{c}}(\Lambda)}, \quad \forall { \lambda} \in \Lambda,
\]
where $f_{\sigma,{c}}(\Lambda) \triangleq \sum_{{ \lambda} \in
\Lambda} f_{\sigma,{c}}({{ \lambda}})=f_{\sigma,\Lambda}({c})$. Again for convenience, we write $D_{\Lambda,\sigma}=D_{\Lambda,\sigma,{0}}$.

It is also useful to define the {discrete Gaussian distribution} over a coset of $\Lambda$, i.e., the shifted lattice $\Lambda-{c}$:
\[
D_{\Lambda-{c},\sigma}({ \lambda}-{c})=\frac{f_{\sigma}({{ \lambda}}-{c})}{f_{\sigma, {\bf c}}(\Lambda)}, \quad \forall { \lambda} \in \Lambda.
\]
Note the relation $D_{\Lambda-{c},\sigma}({ \lambda}-{c}) = D_{\Lambda,\sigma,{c}}({ \lambda})$, namely, they are a shifted version of each other.

Each component of a lattice point sampled from $D_{\Lambda-{c},\sigma}$ has an average power always less than $\sigma^2$ by the following lemma.
\begin{lem}[Average power of lattice Gaussian {\cite[Lemma 1]{LingBel13}}]\label{lem:Anvpower}
Let $x=(x_1, x_2, ..., x_n)^T \sim D_{\Lambda-{c},\sigma}$. Then, for each $1 \leq i \leq n$,
\begin{eqnarray}
E[x_i^2] \leq \sigma^2.
\end{eqnarray}
\end{lem}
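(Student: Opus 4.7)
The plan is to apply the Poisson summation formula to rewrite both the numerator and the denominator of $E[x_i^2]$ as sums over the dual lattice $\Lambda^*$, and then to read off the bound from the resulting expression. Writing the expectation explicitly,
\begin{equation*}
E[x_i^2] \;=\; \frac{\sum_{\lambda \in \Lambda}(\lambda_i - c_i)^2 \, f_\sigma(\lambda - c)}{\sum_{\lambda \in \Lambda} f_\sigma(\lambda - c)},
\end{equation*}
I would use $\widehat{f_\sigma}(\xi)=e^{-2\pi^2\sigma^2\|\xi\|^2}$ for the denominator, and for the numerator I would invoke the differentiation rule $\widehat{y_i^2 f_\sigma}(\xi)=-(4\pi^2)^{-1}\partial_{\xi_i}^2\widehat{f_\sigma}(\xi)=(\sigma^2-4\pi^2\sigma^4\xi_i^2)\,e^{-2\pi^2\sigma^2\|\xi\|^2}$.

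Applying Poisson summation to each sum and dividing, the leading $\sigma^2$ contribution from the numerator pairs with the denominator and factors out cleanly, producing
\begin{equation*}
E[x_i^2] \;=\; \sigma^2 \;-\; 4\pi^2\sigma^4 \cdot \frac{\sum_{\lambda^*\in\Lambda^*}(\lambda_i^*)^2\, e^{-2\pi^2\sigma^2\|\lambda^*\|^2}\cos\!\bigl(2\pi\langle\lambda^*,c\rangle\bigr)}{\sum_{\lambda^*\in\Lambda^*} e^{-2\pi^2\sigma^2\|\lambda^*\|^2}\cos\!\bigl(2\pi\langle\lambda^*,c\rangle\bigr)},
\end{equation*}
where the symmetry $\Lambda^*=-\Lambda^*$ has been used to pair $\lambda^*$ with $-\lambda^*$ and replace the complex exponentials by cosines. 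The denominator equals $\text{Vol}(\Lambda)\,f_{\sigma,c}(\Lambda)$, which is strictly positive.

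What remains---and what I expect to be the main obstacle---is to establish that the ratio appearing as the correction is non-negative, so that the subtraction goes the right way. In the centered case $c=\mathbf{0}$ this is immediate: every cosine equals one and the numerator becomes a sum of non-negative terms. For general $c$ the numerator can be identified, up to a positive multiplicative constant, with $-\partial_{c_i}^2 f_{\sigma,c}(\Lambda)$, so the required inequality reduces to coordinate-wise concavity of the $\Lambda$-periodic function $c\mapsto f_{\sigma,c}(\Lambda)$; this is the delicate ingredient on which the argument hinges, and I would expect to handle it by further manipulation of the Poisson-dual representation together with the symmetry of $\Lambda^*$.
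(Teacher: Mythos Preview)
The paper does not supply a proof of this lemma; it is quoted from the cited reference and used as a black box, so there is no paper-side argument to compare against.

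Your Poisson-summation setup is correct, and your identification of the remaining obstacle is accurate---but that obstacle is fatal rather than merely delicate. The coordinate-wise concavity of $c\mapsto f_{\sigma,c}(\Lambda)$ that you need is simply false: this function is $\Lambda$-periodic and non-constant, and a non-constant periodic function cannot be concave on all of $\mathbb{R}^n$. Concretely, take $n=1$, $\Lambda=\mathbb{Z}$, $c=\tfrac12$. Then the dual-side numerator in your display becomes
\[
\sum_{m\in\mathbb{Z}} m^{2}\, e^{-2\pi^{2}\sigma^{2} m^{2}}\,(-1)^{m}
\;=\; 2\sum_{m\ge 1}(-1)^{m} m^{2}\, e^{-2\pi^{2}\sigma^{2} m^{2}},
\]
which is dominated by the $m=1$ term for small $\sigma$ and is therefore negative. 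Your correction term then has the wrong sign, and the argument cannot be completed along these lines.

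This is not an artifact of the method but of the statement itself: for $\Lambda=\mathbb{Z}$, $c=\tfrac12$ and any $\sigma<\tfrac12$, the support of $D_{\mathbb{Z}-1/2,\sigma}$ is $\{\pm\tfrac12,\pm\tfrac32,\dots\}$, so $x^{2}\ge \tfrac14$ pointwise and hence $E[x^{2}]\ge \tfrac14>\sigma^{2}$. The inequality as written can hold only under an additional hypothesis---typically a smoothing/flatness assumption on $\Lambda$ relative to $\sigma$---which is present in the source and in every place the lemma is actually applied in this paper, but has been suppressed in the statement here. Any correct proof must invoke such a condition; without it, no manipulation of the Poisson-dual representation will yield the claim. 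Your argument does go through cleanly in the centred case $c=0$, which is the one instance where the bare inequality is valid unconditionally.
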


If the flatness factor is negligible, the discrete
Gaussian distribution over a lattice preserves the capacity of the AWGN channel.

\begin{ther}[Mutual information of discrete Gaussian distribution {\cite[Th. 2]{LingBel13}}]\label{theorem:capacity}
Consider an AWGN channel ${\Y}={\X}+{\E}$ where the input constellation $\X$ has
a discrete Gaussian distribution $D_{\Lambda-{c},\sigma_s}$
for arbitrary ${c} \in \mathbb{R}^n$, and where the variance
of the noise $\E$ is $\sigma^2$. Let the average signal power be $P_s$ so that $\SNR=P_s/\sigma^2$, and let $\widetilde{\sigma}\triangleq \frac{\sigma_s\sigma}{\sqrt{\sigma_s^2+\sigma^2}}$. Then, if
$\varepsilon = \epsilon_{\Lambda}\left(\widetilde{\sigma}\right) < \frac{1}{2}$ and $\frac{\pi\varepsilon_t}{1-\epsilon_t}\leq \varepsilon$ where
\[
\varepsilon_t \triangleq
\left\{
  \begin{array}{ll}
    \epsilon_{\Lambda}\left(\sigma_s/\sqrt{\frac{\pi}{\pi-t}}\right), & \hbox{$t \geq 1/e$} \\
    (t^{-4}+1)\epsilon_{\Lambda}\left(\sigma_s/\sqrt{\frac{\pi}{\pi-t}}\right), & \hbox{$0< t < 1/e$}
  \end{array}
\right.
\]
the discrete Gaussian constellation results in mutual information
\begin{equation}\label{eq:lattice-capacity}
I_D \geq \frac{1}{2}\log {(1+\SNR)} - \frac{5\varepsilon}{n}
\end{equation}
per channel use. Moreover, the difference between $P_s$ and $\sigma_s^2$ is bounded by
\begin{equation}\label{eq:powergap}
\big| P_s-\sigma_s^2\big| \leq \frac{2 \pi \epsilon_t}{n(1-\epsilon)}\sigma_s^2. \notag
\end{equation}
\end{ther}

A lattice $\Lambda$ or its coset $\Lambda-{c}$ with a discrete Gaussian distribution is referred to as a \emph{good constellation} for the AWGN channel if ${\epsilon_{\Lambda}(\widetilde{\sigma})}$ is negligible \cite{LingBel13}. It is further proved in \cite{LingBel13} that the channel capacity is achieved with Gaussian shaping over an AWGN-good lattice and minimum mean square error (MMSE) lattice decoding. Following Theorem \ref{theorem:capacity}, it has been shown in \cite{polarlatticeJ} that an AWGN-good polar lattice shaped according to the discrete Gaussian distribution achieves the AWGN channel capacity with sub-exponentially vanishing error probability, which means that an explicit polar lattice satisfying the power constraint and the reliability condition for Bob is already in hand. Therefore, the next section will focus on the construction of the secrecy-good polar lattice.

\section{Polar Codes and Polar Lattices}

\subsection{Polar codes: brief review}
\label{sec:polarsym}
We firstly recall some basics of polar codes. Let $\widetilde{W}$ be a BMS channel with uniformly distributed input $\X \in \mathcal{X}=\{0,1\}$ and output $\Y \in \mathcal{Y}$. The input distribution and transition probability of $\widetilde{W}$ are denoted by $P_{\X}$ and $P_{\Y|\X}$ respectively. Let $\X^{[N]}$ and $\Y^{[N]}$ be the input and output vector of $N$ independent uses of $\widetilde{W}$. Let $N=2^m$ be the block length of polar codes for some integer $m\geq1$. The channel polarization is based on the $N$-by-$N$ transform $\U^{[N]}=\X^{[N]}G_N$, where $G_{N}=\left[\begin{smallmatrix}1&0\\1&1\end{smallmatrix}\right]^{\otimes
m}$ is the generator matrix and $\otimes$ denotes the Kronecker product. Then we get an $N$-dimensional combined channel $\widetilde{W}_N$ from $\U^{[N]}$ to $\Y^{[N]}$. For each $i \in [N]$, given the previous bits $\U^{1:i-1}$, the channel $\widetilde{W}_N^{(i)}$ seen by each bit $\U^i$ is called the $i$-th subchannel channel after the channel splitting process \cite{arikan2009channel}, and the transition probability of $\widetilde{W}_N^{(i)}$ is given by
\begin{eqnarray}
\widetilde{W}_{N}^{(i)}(y^{[N]},u^{1:i-1}|u^{i})=\sum\limits_{u^{i+1:N}\in \mathcal{X}^{N-i}}\frac{1}{2^{N-1}}\widetilde{W}_{N}(y^{[N]}|u^{[N]}), \notag\
\end{eqnarray}
where $u^{[N]}$ and $y^{[N]}$ are the realizations of $\U^{[N]}$ and $\Y^{[N]}$, respectively. Ar{\i}kan proved that $\widetilde{W}_{N}^{(i)}$ is also a BMS channel and it becomes either an almost error-free channel or a completely useless channel as $N$ grows. According to \cite{arikan2009channel}, the goodness of a BMS channel can be estimated by its associate Bhattacharyya parameter, which is defined as follows.
\begin{deft}[Bhattacharyya parameter of BMS channels]
\label{deft:symmZ}
Let $\widetilde{W}$ be a BMS channel with transition probability $P_{\Y|\X}$, the symmetric Bhattacharyya parameter $\widetilde{Z}\in[0,1]$ is defined as
\begin{eqnarray}
\widetilde{Z}(\widetilde{W})&\triangleq\sum\limits_{y} \sqrt{P_{\Y|\X}(y|0)P_{\Y|\X}(y|1)}. \notag\
\end{eqnarray}
\end{deft}

\begin{rem}
Although polar codes were originally proposed for binary-input discrete memoryless channels \cite{arikan2009channel}, their extension to continuous channels, such as the binary-input AWGN channel, was given in \cite{Ido}. To construct polar codes efficiently, the authors proposed smart channel degrading and upgrading merging algorithms to quantize continuous channels into their discrete versions. Fortunately, the quantization accuracy can be made arbitrarily small by increasing the quantization level. For this reason, we still use the summation form of Bhattacharyya parameters for continuous channels in this work, which also makes the notations consistent with the literature on polar codes.
\end{rem}

It was further shown in \cite{arikan2009rate, polarchannelandsource} that for any $0<\beta<\frac{1}{2}$,
\begin{eqnarray}
\lim_{m\rightarrow \infty}\frac{1}{N}\left|\{i:\widetilde{Z}(\widetilde{W}_{N}^{(i)})<2^{-N^{\beta}}\}\right|&=&I(\widetilde{W}) \notag \\
\lim_{m\rightarrow \infty}\frac{1}{N}\left|\{i:\widetilde{Z}(\widetilde{W}_{N}^{(i)})>1-2^{-N^{\beta}}\}\right|&=&1-I(\widetilde{W}), \notag
\end{eqnarray}
which means the proportion of such roughly error-free subchannels (with negligible Bhattacharyya parameters) approaches the channel capacity $I(\widetilde{W})$. The set of the indices of all those almost error-free subchannels is usually called the information set $\mathcal{I}$ and its complementary is called the frozen set $\mathcal{F}$. Consequently, the construction of capacity-achieving polar codes is simply to identify the indices in the information set $\mathcal{I}$. However, for a general BMS channel other than binary erasure channel, the complexity of the exact computation for $\widetilde{Z}(\widetilde{W}_{N}^{(i)})$ appears to be exponential in the block length $N$. An efficient estimation method for $\widetilde{Z}(\widetilde{W}_{N}^{(i)})$ was proposed in \cite{Ido}, using the idea of channel upgrading and degrading. It was shown that with a sufficient number of quantization levels, the approximation error is negligible even if $\widetilde{W}$ has continuous output, and the involved computational complexity is acceptable.

In \cite{arikan2009channel}, a bit-wise decoding method called successive cancellation (SC) decoding was proposed to show that polar codes are able to achieve channel capacity with vanishing error probability. \color{black}This decoding method has complexity $O(N\text{log}N)$, and the error probability is given by $P_{e}^{SC} \leq \sum_{i \in \mathcal{I}} \widetilde{Z}(\widetilde{W}_{N}^{(i)})$.

\subsection{Polar codes for the binary symmetric wiretap channel}
\label{sec:polarswtp}
Now we revisit the construction of polar codes for the binary symmetric wiretap channel. We use $\widetilde{V}$ and $\widetilde{W}$ to denote the symmetric main channel between Alice and Bob and the symmetric wiretap channel between Alice and Eve, respectively. Both $\widetilde{V}$ and $\widetilde{W}$ have binary input $\X$ and $\widetilde{W}$ is degraded with respect to $\widetilde{V}$. Let $\Y$ and $\Z$ denote the output of $\widetilde{V}$ and $\widetilde{W}$. After the channel combination and splitting of $N$ independent uses of the $\widetilde{V}$ and $\widetilde{W}$ by the polarization transform $\U^{[N]}=\X^{[N]}G_N$, we define the sets of reliability-good indices for Bob and information-poor indices for Eve as
\begin{eqnarray}\label{eqn:Good&bad}
\begin{aligned}
\mathcal{G}(\widetilde{V})&=\{i:\widetilde{Z}(\widetilde{V}_N^{(i)}) \leq 2^{-N^\beta}\}, \\
\mathcal{N}(\widetilde{W})&=\{i:\widetilde{Z}(\widetilde{W}_N^{(i)}) \geq 1-2^{-N^\beta}\},
\end{aligned}
\end{eqnarray}
where $0<\beta<0.5$ and $\widetilde{V}_N^{(i)}$ ($\widetilde{W}_N^{(i)}$) is the $i$-th subchannel of the main channel (wiretapper's channel) after polarization transform.

Note that in the seminal paper \cite{polarsecrecy} of polar wiretap coding, the information-poor set $\mathcal{N}(\widetilde{W})$ was defined as $\{i:I(\widetilde{W}^{(i,N)}) \leq 2^{-N^\beta}\}$. In contrast, our criterion here is based on the Bhattacharyya parameter\footnote{This idea has already been used in \cite{polarsecrecy} to prove that polar wiretap coding scheme is secrecy capacity-achieving.}. This slight modification will bring us much convenience when lattice shaping is involved in Sect. \ref{sec:SecrecyGoodShap}. The following lemma shows that the modified criterion is similar to the original one in the sense that the mutual information of the subchannels with indices in $\mathcal{N}(\widetilde{W})$ can still be bounded in the same form.

\begin{lem}\label{lem:Mutualbound}
Let $\widetilde{W}_N^{(i)}$ be the $i$-th subchannel after the polarization transform on independent $N$ uses of a BMS channel $\widetilde{W}$. For any $0<\beta<\frac{1}{2}$ and $\delta>0$, if $\widetilde{Z}(\widetilde{W}_N^{(i)}) \geq 1-2^{-N^\beta}$, the mutual information of the $i$-th subchannel can be upper-bounded as
\begin{eqnarray}\label{eqn:Ibound}
I(\widetilde{W}_N^{(i)}) \leq 2^{-N^{\beta'}}, \notag\
\end{eqnarray}
where $\beta(1-\delta) \leq \beta' \leq \beta$ when $N$ is sufficiently large.
\end{lem}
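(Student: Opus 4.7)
The plan is to reduce this to the classical inequality relating mutual information and Bhattacharyya parameter for BMS channels. Since Ar{\i}kan showed that the polarized subchannels $\tilde{W}_N^{(i)}$ of a BMS channel are themselves BMS, the bound $I(\tilde{W}) \leq \sqrt{1 - \tilde{Z}(\tilde{W})^2}$ (which holds for every BMS channel; see Arıkan's original paper) applies directly to $\tilde{W}_N^{(i)}$. This gives the starting estimate
\[
I(\tilde{W}_N^{(i)}) \;\leq\; \sqrt{1 - \tilde{Z}(\tilde{W}_N^{(i)})^2}.
\]

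Next I would substitute the hypothesis $\tilde{Z}(\tilde{W}_N^{(i)}) \geq 1 - 2^{-N^\beta}$. Writing $\epsilon = 2^{-N^\beta}$, the elementary estimate $(1-\epsilon)^2 \geq 1 - 2\epsilon$ yields
\[
1 - \tilde{Z}(\tilde{W}_N^{(i)})^2 \;\leq\; 2\epsilon \;=\; 2^{\,1-N^\beta},
\]
so that
\[
I(\tilde{W}_N^{(i)}) \;\leq\; 2^{(1-N^\beta)/2}.
\]

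Finally, I would show that this exponent dominates $-N^{\beta'}$ for any fixed $\beta' < \beta < 1/2$. The inequality $(1 - N^\beta)/2 \leq -N^{\beta'}$ is equivalent to $N^\beta - 2N^{\beta'} \geq 1$, which clearly holds for all $N$ sufficiently large because $N^\beta$ grows strictly faster than $N^{\beta'}$. Combining the two estimates gives $I(\tilde{W}_N^{(i)}) \leq 2^{-N^{\beta'}}$, as claimed.

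The argument is short; there is no real obstacle provided one imports the BMS inequality $I \leq \sqrt{1-Z^2}$. The only thing to be slightly careful about is ensuring the inequality is applied to a channel that is actually BMS, which is why I would explicitly cite Ar{\i}kan's result that symmetry is preserved under the polarization transform. The rest is bookkeeping on asymptotics and could be folded into a single line once the two key inequalities are in place.
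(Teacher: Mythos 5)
Your proposal is correct and follows essentially the same route as the paper: both invoke Ar{\i}kan's Proposition 1 bound $I(\tilde{W}_N^{(i)}) \leq \sqrt{1-\tilde{Z}(\tilde{W}_N^{(i)})^2}$ (valid since the subchannels remain BMS), then use $1-\tilde{Z}^2 \leq 2\cdot 2^{-N^\beta}$ and absorb the resulting $\sqrt{2\cdot 2^{-N^\beta}}$ into $2^{-N^{\beta'}}$ for large $N$. Your write-up merely makes the elementary asymptotic step explicit; there is no substantive difference.
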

\begin{proof}
When $\widetilde{W}$ is symmetric, $\widetilde{W}_N^{(i)}$ is symmetric as well. By \cite[Proposition 1]{arikan2009channel}, we have
\begin{eqnarray}
\begin{aligned}
I(\widetilde{W}_N^{(i)}) &\leq \sqrt{1-\widetilde{Z}(\widetilde{W}_{N}^{(i)})^2} \notag \\
 &\leq \sqrt{2\cdot 2^{-N^\beta}} \notag \\
 &= 2^{-N^{\beta'}},
\end{aligned}
\end{eqnarray}
where $\beta'<\beta$. Moreover, for sufficiently large $N$, $\beta'$ can be made arbitrarily close to and $\beta$ , i.e., $\beta(1-\delta) \leq \beta'$ for any $\delta>0$.
\end{proof}
\color{black}

Since the mutual information of subchannels in $\mathcal{N}(\widetilde{W})$ can be upper-bounded in the same form, it is not difficult to understand that strong secrecy can be achieved using the index partition proposed in \cite{polarsecrecy}. Similarly, we divide the index set $[N]$ into the following four sets:
\begin{eqnarray}\label{eqn:partition}
\begin{aligned}
&\mathcal{A}=\mathcal{G}(\widetilde{V})\cap \mathcal{N}(\widetilde{W}), \,\,\,\, \mathcal{B}=\mathcal{G}(\widetilde{V})\cap \mathcal{N}(\widetilde{W})^{c}\\
&\mathcal{C}=\mathcal{G}(\widetilde{V})^{c}\cap \mathcal{N}(\widetilde{W}), \,\,\mathcal{D}=\mathcal{G}(\widetilde{V})^{c}\cap \mathcal{N}(\widetilde{W})^{c}.
\end{aligned}
\end{eqnarray}
Clearly, $\mathcal{A} \cup \mathcal{B} \cup \mathcal{C} \cup \mathcal{D}=[N]$. Then we assign set $\mathcal{A}$ with message bits $\M$, set $\mathcal{B}$ with uniformly random bits $\mathsf{R}_b$, set $\mathcal{C}$ with frozen bits $\F$ which are known to both Bob and Eve prior to transmission, and set $\mathcal{D}$ with uniformly random bits $\mathsf{R}_d$. The next lemma shows that this assignment achieves strong secrecy. We note that this proof is similar to that in \cite{NewPolarSchemeWiretap, polarsecrecy} and it is given in \cite[Appendix A]{PolarWiretapJ}.
\begin{lem}\label{lem:strongbound}
According to the partitions of the index set shown in \eqref{eqn:partition}, if we assign the four sets as follows
\begin{eqnarray}\label{eqn:assign}
\begin{aligned}
&\mathcal{A}\leftarrow \M, \,\,\,\,\, \mathcal{B}\leftarrow \mathsf{R}_b,\\
&\mathcal{C}\leftarrow \F, \,\,\,\,\,\, \mathcal{D}\leftarrow \mathsf{R}_d,
\end{aligned}
\end{eqnarray}
the information leakage $I(\M;\Z^{[N]})$ can be upper-bounded as
\begin{eqnarray}\label{eqn:leakageBound}
I(\M;\Z^{[N]}) \leq N\cdot 2^{-N^{\beta'}}, 0<\beta'<0.5.
\end{eqnarray}
\end{lem}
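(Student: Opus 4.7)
The plan is to write $I(\M;\Z^{[N]}) = I(\U^{\mathcal{A}};\Z^{[N]})$ as the difference between the total mutual information $I(\U^{[N]};\Z^{[N]})$ and the ``side-information'' term $I(\U^{\mathcal{A}^c};\Z^{[N]}\mid \U^{\mathcal{A}})$, then bound each piece by combining the polar chain rule with Lemma~\ref{lem:Mutualbound}. For the analysis I would treat the frozen bits $\F = \U^{\mathcal{C}}$ as uniform random bits shared with Eve, so that $\U^{[N]}$ is i.i.d.\ uniform; the resulting average-leakage bound then yields a deterministic choice of $\F$ with the same bound by a standard expurgation argument. With this convention, the identity
\begin{equation*}
I(\U^{[N]};\Z^{[N]}) \;=\; I(\U^{\mathcal{A}};\Z^{[N]}) \;+\; I(\U^{\mathcal{A}^c};\Z^{[N]}\mid \U^{\mathcal{A}})
\end{equation*}
(which uses the independence of $\U^{\mathcal{A}}$ and $\U^{\mathcal{A}^c}$) rearranges immediately to $I(\M;\Z^{[N]}) = I(\U^{[N]};\Z^{[N]}) - I(\U^{\mathcal{A}^c};\Z^{[N]}\mid \U^{\mathcal{A}})$.

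Next, the first term is evaluated by the standard polar identity $I(\U^{[N]};\Z^{[N]}) = \sum_{i=1}^{N} I(\tilde W_N^{(i)})$. For the second, I would expand by the chain rule in SC order over $\mathcal{A}^c$ and argue that each summand satisfies
\begin{equation*}
I\bigl(\U^i;\Z^{[N]}\bigm|\U^{\mathcal{A}},\U^{\mathcal{A}^c\cap[i-1]}\bigr) \;\geq\; I(\tilde W_N^{(i)}),\qquad i\in\mathcal{A}^c.
\end{equation*}
Combining these, I would obtain
\begin{equation*}
I(\M;\Z^{[N]}) \;\leq\; \sum_{i=1}^{N} I(\tilde W_N^{(i)}) \;-\; \sum_{i\in\mathcal{A}^c} I(\tilde W_N^{(i)}) \;=\; \sum_{i\in\mathcal{A}} I(\tilde W_N^{(i)}),
\end{equation*}
after which Lemma~\ref{lem:Mutualbound} applied to $\mathcal{A}\subseteq\mathcal{N}(\tilde W)$ bounds each summand by $2^{-N^{\beta'}}$, and $|\mathcal{A}|\leq N$ produces the advertised $N\cdot 2^{-N^{\beta'}}$.

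The main obstacle is establishing the displayed lower bound. Because $\mathcal{A}$ is generically scattered across $[N]$, the conditioning set $\U^{\mathcal{A}}\cup\U^{\mathcal{A}^c\cap[i-1]}$ mixes past and future SC-indices, so the conditional mutual information is not the canonical polar subchannel capacity $I(\U^i;\Z^{[N]},\U^{[i-1]}) = I(\tilde W_N^{(i)})$ but differs from it by an additional conditioning on the ``future'' bits $\U^{\mathcal{A}\cap[i+1:N]}$. What one needs here is a \emph{lower} bound on this conditional information (so that after subtraction it becomes an \emph{upper} bound on leakage), which is the reverse of the direction one usually exploits in polar analyses. The saving is that $\U^i$ is independent of those future bits, so the identity $I(A;B\mid C) - I(A;B) = I(A;C\mid B) - I(A;C)$ together with $I(A;C) = 0$ forces $I(A;B\mid C) \geq I(A;B)$; applying this with $A = \U^i$, $B = (\Z^{[N]},\U^{[i-1]})$, and $C = \U^{\mathcal{A}\cap[i+1:N]}$ delivers exactly the required inequality.
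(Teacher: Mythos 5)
Your core decomposition is sound and is genuinely different from the paper's route. You bound the message leakage by subtracting a lower bound on the complement's contribution from the conserved total $\sum_i I(\tilde W_N^{(i)})$; the key lower bound $I(\U^i;\Z^{[N]}\mid \U^{\mathcal{A}},\U^{\mathcal{A}^c\cap[i-1]})\geq I(\tilde W_N^{(i)})$ for $i\in\mathcal{A}^c$ is correct, since the conditioning set equals $\U^{[i-1]}\cup\U^{\mathcal{A}\cap[i+1:N]}$ and the i.i.d.\ uniformity of $\U^{[N]}$ makes your identity argument (equivalently, ``adding independent variables to the observation cannot decrease mutual information'') go through. The paper instead expands $I(\M\F;\Z^{[N]})$ directly over the indices of $\mathcal{A}\cup\mathcal{C}$ in SC order and upper-bounds each term by $I(\tilde W_N^{(i_j)})$ by enlarging the conditioning to all of $\tilde\U^{1:i_j-1}$ --- a single inequality rather than a conservation-plus-subtraction argument --- and then invokes the symmetry of the induced channel $\M\F\to\Z^{[N]}$ so that the uniform-input computation dominates all other input distributions, in particular deterministic frozen bits.

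The genuine gap is in your handling of $\F$. The quantity you bound is $I(\U^{\mathcal{A}};\Z^{[N]})$ with $\U^{\mathcal{C}}$ uniform and \emph{marginalized out of Eve's observation}, which contradicts your own convention that $\F$ is ``shared with Eve.'' This is not the average leakage: since $\M$ is independent of $\F$, the average over $f$ of the fixed-frozen-bit leakage is $E[I(\M;\Z^{[N]}\mid\F=f)]=I(\M;\Z^{[N]}\mid\F)=I(\M;\Z^{[N]},\F)\geq I(\M;\Z^{[N]})$, so your derivation controls a \emph{lower} bound on the averaged quantity, and the expurgation step does not follow from it. The fix stays entirely within your framework: run the identical decomposition with $\mathcal{A}\cup\mathcal{C}$ in place of $\mathcal{A}$, which bounds $I(\M\F;\Z^{[N]})\geq I(\M;\Z^{[N]}\mid\F)$; since $\mathcal{A}\cup\mathcal{C}=\mathcal{N}(\tilde W)$, Lemma~\ref{lem:Mutualbound} still applies to every surviving summand and yields the same $N\cdot 2^{-N^{\beta'}}$. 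Expurgation then produces a good deterministic $\F$; to obtain the bound for \emph{every} fixed choice of $\F$ (e.g., all zeros), one additionally needs the symmetry of the induced channel $\M\F \to \Z^{[N]}$, which is exactly the ingredient the paper's proof uses.
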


We can also observe that the proportion of the problematic set $\mathcal{D}$ is arbitrarily small when $N$ is sufficiently large. This is because set $\mathcal{D}$ is a subset of the unpolarized set $\{i: 2^{-N^\beta}<\widetilde{Z}(\widetilde{V}_{N}^{(i)}) < 1-2^{-N^\beta}\}$. As has been shown in \cite{polarsecrecy}, the reliability condition cannot be fulfilled with SC decoding due to the existence of $\mathcal{D}$. Fortunately, we can use the Markov block coding technique proposed in \cite{NewPolarSchemeWiretap} to achieve reliability and strong secrecy simultaneously. More details of this Markov block coding technique will be discussed in Section \ref{sec:reliability} and Section \ref{sec:reliabilityshape}.

With regard to the secrecy rate, we show that the modified polar coding scheme can also achieve the secrecy capacity.
\begin{lem}\label{lem:securerate}
Let $C(\widetilde{V})$ and $C(\widetilde{W})$ denote the channel capacity of the main channel $\widetilde{V}$ and wiretap channel $\widetilde{W}$ respectively. Since $\widetilde{W}$ is degraded with respect to $\widetilde{V}$, the secrecy capacity, which is given by $C(\widetilde{V})-C(\widetilde{W})$, is achievable using the modified wiretap coding scheme, i.e.,
\begin{eqnarray}\label{eqn:sececrycap}
\lim_{N\rightarrow \infty} |\mathcal{G}(\widetilde{V})\cap \mathcal{N}(\widetilde{W})|/N=C(\widetilde{V})-C(\widetilde{W}). \notag
\end{eqnarray}
\end{lem}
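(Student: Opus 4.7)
The plan is to reduce the identity to two applications of Ar{\i}kan's rate polarization theorem (the two displayed limits just before Definition~2 in Section~III-A) combined with the fact that channel degradation is preserved by the polar recursion. Because $\tilde{V}$ and $\tilde{W}$ are BMS, their symmetric mutual informations coincide with their Shannon capacities, so the polarization theorem applied to each channel separately gives, for any $0<\beta<1/2$,
\begin{equation*}
\frac{|\mathcal{G}(\tilde{V})|}{N}\longrightarrow C(\tilde{V}),\qquad \frac{|\mathcal{N}(\tilde{W})|}{N}\longrightarrow 1-C(\tilde{W}).
\end{equation*}

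Next I would apply inclusion-exclusion,
\begin{equation*}
\frac{|\mathcal{G}(\tilde{V})\cap\mathcal{N}(\tilde{W})|}{N}=\frac{|\mathcal{G}(\tilde{V})|}{N}+\frac{|\mathcal{N}(\tilde{W})|}{N}-\frac{|\mathcal{G}(\tilde{V})\cup\mathcal{N}(\tilde{W})|}{N}.
\end{equation*}
Using the trivial bound $|\mathcal{G}(\tilde{V})\cup\mathcal{N}(\tilde{W})|\leq N$, the first two terms already deliver the lower bound $C(\tilde{V})-C(\tilde{W})$ in the limit. It therefore suffices to establish the matching upper bound, which is equivalent to $|\mathcal{G}(\tilde{V})^c\cap\mathcal{N}(\tilde{W})^c|/N\to 0$.

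For this residual term I would invoke the classical fact that degradation is preserved by the polar transform: since $\tilde{W}$ is degraded with respect to $\tilde{V}$, each subchannel $\tilde{W}_N^{(i)}$ is degraded with respect to $\tilde{V}_N^{(i)}$. The Bhattacharyya parameter is monotone under degradation, so $\tilde{Z}(\tilde{V}_N^{(i)})\leq \tilde{Z}(\tilde{W}_N^{(i)})$ for every $i$. Any $i\in\mathcal{G}(\tilde{V})^c\cap\mathcal{N}(\tilde{W})^c$ then satisfies $2^{-N^\beta}<\tilde{Z}(\tilde{V}_N^{(i)})\leq \tilde{Z}(\tilde{W}_N^{(i)})<1-2^{-N^\beta}$, placing it in the ``unpolarized'' middle band of the $\tilde{V}$-subchannels. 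The polarization theorem also tells us the fraction of such unpolarized indices tends to $0$, which closes the squeeze and yields the claimed identity.

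The main obstacle is justifying the preservation of degradation under the polar recursion, because without it one cannot rule out indices where $\tilde{V}_N^{(i)}$ is strongly polarized toward a noisy channel while $\tilde{W}_N^{(i)}$ happens to be polarized toward a clean channel; such indices would inflate $|\mathcal{G}(\tilde{V})^c\cap \mathcal{N}(\tilde{W})^c|$ and break the upper bound. Fortunately this preservation is standard in the polar coding literature and already used (implicitly) in \cite{polarsecrecy}, so I would cite it rather than reprove it, noting only the one-line data-processing step needed to pass from the usual mutual-information ordering to the Bhattacharyya ordering invoked above.
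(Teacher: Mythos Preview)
Your argument is correct and uses the same three ingredients as the paper's proof in Appendix~B: the polarization rate theorem for each of $\tilde{V}$ and $\tilde{W}$, preservation of degradation under the polar transform (hence $\tilde{Z}(\tilde{V}_N^{(i)})\le\tilde{Z}(\tilde{W}_N^{(i)})$), and the vanishing of the unpolarized set. The only cosmetic difference is organizational: the paper introduces the auxiliary sets $\bar{\mathcal{G}}(\tilde{V})=\{i:\tilde{Z}(\tilde{V}_N^{(i)})\ge 1-2^{-N^\beta}\}$ and $\bar{\mathcal{N}}(\tilde{W})=\{i:\tilde{Z}(\tilde{W}_N^{(i)})\le 2^{-N^\beta}\}$, shows they are disjoint by degradation, and then computes $|\mathcal{G}(\tilde{V})\cap\mathcal{N}(\tilde{W})|/N\to 1-|\bar{\mathcal{G}}(\tilde{V})\cup\bar{\mathcal{N}}(\tilde{W})|/N$, whereas you go straight to showing $\mathcal{G}(\tilde{V})^c\cap\mathcal{N}(\tilde{W})^c$ lies in the $\tilde{V}$-unpolarized band---an observation the paper itself makes in the paragraph immediately following the lemma. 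Both routes are equivalent and equally short.
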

\begin{proof}
See \cite[Appendix B]{PolarWiretapJ}.
\end{proof}
\color{black}

\subsection{From polar codes to polar lattices}

A sublattice $\Lambda' \subset \Lambda$ induces a partition (denoted by $\Lambda/\Lambda'$) of $\Lambda$ into equivalence classes modulo $\Lambda'$. The order of the partition is denoted by $|\Lambda/\Lambda'|$, which is equal to the number of cosets. If $|\Lambda/\Lambda'|=2$, we call this a binary partition. Let $\Lambda/\Lambda_{1}/\cdots/\Lambda_{r-1}/\Lambda'$ for $r \geq 1$ be an $n$-dimensional self-similar lattice partition chain\footnote{By saying self-similar, we mean that $\Lambda_\ell=T^\ell \Lambda$ for all $\ell$, with $T=\alpha V$ for some scale factor $\alpha>1$ and orthogonal matrix $V$. For example, $\mathbb{Z}/2\mathbb{Z}/.../2^{r}\mathbb{Z}$ is a one-dimensional self-similar partition chain.}.  For each
partition $\Lambda_{\ell-1}/\Lambda_{\ell}$ ($1\leq \ell \leq r$ with convention $\Lambda_0=\Lambda$ and $\Lambda_r=\Lambda'$) a code $C_{\ell}$ over $\Lambda_{\ell-1}/\Lambda_{\ell}$
selects a sequence of representatives $a_{\ell}$ for the cosets of $\Lambda_{\ell}$. Consequently, if each partition is binary, the code $C_{\ell}$ is a binary code.

Polar lattices are constructed by ``Construction D" \cite[p.232]{yellowbook} \cite{forney6} using a set of nested polar codes $C_{1}\subseteq C_{2}\cdot\cdot\cdot\subseteq C_{r}$. \color{black}
Suppose $C_{\ell}$ has block length $N$ and $k_{\ell}$ information bits for $1\leq\ell\leq r$. Choose a basis $\mathbf{g}_{1}, \mathbf{g}_{2},\cdots, \mathbf{g}_{N}$ from the polar generator matrix $G_N$ such that $\mathbf{g}_{1},\cdots \mathbf{g}_{k_{\ell}}$ span $C_{\ell}$. When the dimension $n=1$, we choose the partition chain $\mathbb{Z}/2\mathbb{Z}.../2^r\mathbb{Z}$, then the lattice $L$ admits the form \cite{forney6}
\begin{eqnarray}
L = \left\{\sum_{\ell=1}^{r}2^{\ell-1}\sum_{i=1}^{k_{\ell}}u_{\ell}^{i}\mathbf{g}_{i}+2^{r}\mathbb{Z}^N\mid u_{\ell}^{i}\in\{0,1\}\right\},
\label{constructionD}
\end{eqnarray}
where the addition is carried out in $\mathbb{R}^N$. The fundamental volume of a lattice obtained from this construction is given by
\begin{eqnarray}
\text{Vol}(L)=2^{-NR_{C}}\cdot \text{Vol}(\Lambda_r)^{N}, \notag\
\end{eqnarray}
where $R_{C}=\sum_{\ell=1}^{r} R_{\ell}=\frac{1}{N}\sum_{\ell=1}^{r}k_{\ell}$ denotes the sum rate of component codes. In this paper, we limit ourselves to the one-dimensional binary lattice partition chain and binary polar codes for simplicity.

\section{Secrecy-Good Polar Lattices for the Mod-$\Lambda_s$ GWC}\label{sec:NoPower}

Before considering the Gaussian wiretap channel, we will tackle a simpler problem of constructing secrecy-good polar lattices over the mod-$\Lambda_s$ GWC shown in Fig. \ref{fig:wiretap_mod}. The difference between the mod-$\Lambda_s$ GWC and the genuine GWC is the mod-$\Lambda_s$ operation on the received signal of Bob and Eve. We will assume uniform input messages until we discuss semantic security in the end of this section.

\subsection{Strong secrecy}\label{sec:SecrecyGoodNoShap}

\begin{figure}[h]
    \centering
    \includegraphics[width=12cm]{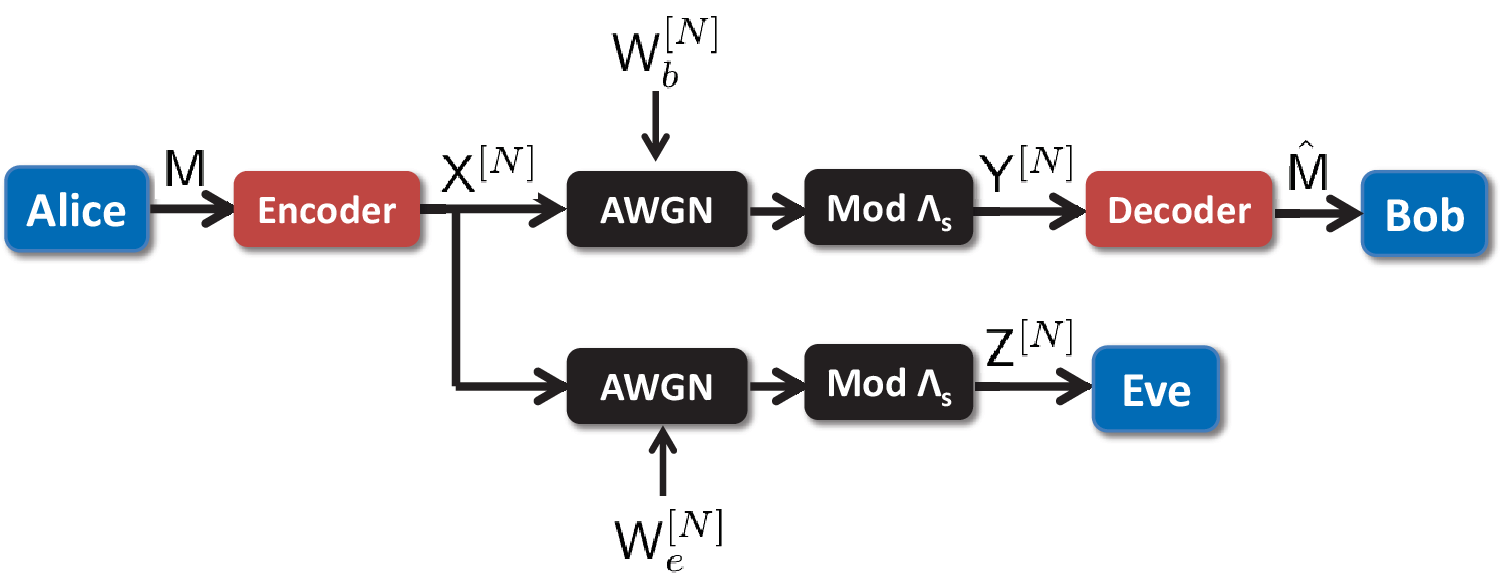}
    \caption{The mod-$\Lambda_s$ Gaussian wiretap channel.}
    \label{fig:wiretap_mod}
\end{figure}

With some abuse of notation, the outputs $\Y^{[N]}$ and $\Z^{[N]}$ at Bob and Eve's ends respectively become
\begin{eqnarray}
\left\{\begin{aligned}
&\Y^{[N]}=\Big[\X^{[N]}+\W_{b}^{[N]}\Big]\text{ mod } \Lambda_s, \notag\\
&\Z^{[N]}=\Big[\X^{[N]}+\W_{e}^{[N]}\Big] \text{ mod } \Lambda_s. \notag\
\end{aligned}\right.
\notag\
\end{eqnarray}

The idea of wiretap lattice coding over the mod-$\Lambda_s$ GWC \cite{cong2} can be explained as follows. Let $\Lambda_b$ and $\Lambda_e$ be the AWGN-good lattice and secrecy-good lattice designed for Bob and Eve accordingly. Let $\Lambda_s \subset \Lambda_e \subset \Lambda_b$ be a nested chain of $N$-dimensional lattices in $\mathbb{R}^N$, where $\Lambda_s$ is the shaping lattice. Note that the shaping lattice $\Lambda_s$ here is employed primarily for the convenience of designing the secrecy-good lattice and secondarily for satisfying the power constraint. Consider a one-to-one mapping: $\mathcal{M} \rightarrow \Lambda_b/\Lambda_e$ which associates each message $m \in \mathcal{M}$ to a coset $\widetilde{\lambda}_m \in \Lambda_b/\Lambda_e$. Alice selects a lattice point $\lambda \in \Lambda_e \cap \mathcal{V}(\Lambda_s)$ uniformly at random and transmits $\X^{[N]}=\lambda+\lambda_m$, where $\lambda_m$ is the coset representative of $\widetilde{\lambda}_m$ in $\mathcal{V}(\Lambda_e)$. This scheme has been proved to achieve both reliability and semantic security in \cite{cong2} by random lattice codes. We will make it explicit by constructing polar lattice codes in this section.

Let $\Lambda_b$ and $\Lambda_e$ be constructed from a binary partition chain $\Lambda/\Lambda_{1}/\cdots/\Lambda_{r-1}/\Lambda_r$, and assume $\Lambda_s \subset \Lambda_r^N$ such that $\Lambda_s \subset \Lambda_r^N \subset \Lambda_e \subset \Lambda_b$\footnote{This is always possible with sufficient power, since the power constraint is not our primary concern in this section. We can scale $\Lambda_s$ as large as possible to make $\Lambda_s \subset \Lambda_r^N$.}. Also, denote by $\X_{1:r}^{[N]}$ the bits encoding $\Lambda^N/\Lambda_r^N$, which include all information bits for message $\M$ as a subset. We have that $\Big[\X^{[N]}+\W_e^{[N]}\Big]$ mod $\Lambda_r^N $ is a sufficient statistic for $\X_{1:r}^{[N]}$. This can be seen from \cite[Lemma 8]{forney6}, rewritten as follows:

\begin{lem}[Sufficiency of mod-$\Lambda$ output \cite{forney6}]\label{lem:modsuffic}
For a partition chain $\Lambda/\Lambda'$ ($\Lambda' \subset \Lambda$), let the input of an AWGN channel be $\X=\mathsf{A}+\mathsf{B}$, where $\mathsf{A}\in \mathcal{R}(\Lambda)$ is a random variable, and $\mathsf{B}$ is uniformly distributed in $\Lambda \cap \mathcal{R}(\Lambda')$. Reduce the output $\mathsf{Y}$ first to $\mathsf{Y}'=\mathsf{Y} \mod \Lambda'$ and then to $\mathsf{Y}''=\mathsf{Y}' \mod \Lambda$. Then the mod-$\Lambda$ map is information-lossless, namely $I(\mathsf{A};\mathsf{Y}')=I(\mathsf{A};\mathsf{Y}'')$, which means that the output $\mathsf{Y}''=\mathsf{Y}' \mod \Lambda$ of mod-$\Lambda$ map is a sufficient statistic for $\mathsf{A}$.
\end{lem}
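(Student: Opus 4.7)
The plan is to establish the identity $I(\mathsf{A};\mathsf{Y}') = I(\mathsf{A};\mathsf{Y}'')$ by isolating the ``extra'' information carried by $\mathsf{Y}'$ beyond $\mathsf{Y}''$ and showing it is independent of $\mathsf{A}$. Since $\mathsf{Y}''$ is a deterministic function of $\mathsf{Y}'$, the data processing inequality immediately yields $I(\mathsf{A};\mathsf{Y}'') \leq I(\mathsf{A};\mathsf{Y}')$, so only the reverse inequality requires work. The intuition is that the uniform ``dither'' $\mathsf{B}$ over the coset representatives of $\Lambda/\Lambda'$ is precisely the randomness that will mask the residual part of $\mathsf{Y}'$, via a group-theoretic (``crypto lemma'') argument.

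First I would write the channel output as $\mathsf{Y} = \mathsf{A} + \mathsf{B} + \mathsf{N}$ with Gaussian noise $\mathsf{N}$, and observe that because $\mathsf{B}\in\Lambda$ the outer mod collapses the dither: $\mathsf{Y}'' = (\mathsf{A}+\mathsf{N}) \bmod \Lambda$. On the other hand, $\mathsf{Y}' \in \mathcal{R}(\Lambda')$ admits a unique decomposition $\mathsf{Y}' = \mathsf{Y}'' + \mathsf{L}$, where $\mathsf{L}$ ranges over a set of coset representatives of $\Lambda/\Lambda'$ inside $\mathcal{R}(\Lambda')$. This yields a one-to-one correspondence $\mathsf{Y}' \leftrightarrow (\mathsf{Y}'',\mathsf{L})$, so the chain rule gives
\begin{equation*}
I(\mathsf{A};\mathsf{Y}') \;=\; I(\mathsf{A};\mathsf{Y}'',\mathsf{L}) \;=\; I(\mathsf{A};\mathsf{Y}'') + I(\mathsf{A};\mathsf{L}\mid \mathsf{Y}''),
\end{equation*}
reducing the lemma to proving $I(\mathsf{A};\mathsf{L}\mid \mathsf{Y}'') = 0$.

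The key step, and the main obstacle, is to identify $\mathsf{L}$ as a shifted copy of $\mathsf{B}$ and invoke the crypto lemma in the finite group $\Lambda/\Lambda'$. Writing $\mathsf{A}+\mathsf{N} = \mathsf{Y}'' + \lambda_0$ with $\lambda_0 \in \Lambda$ determined by $(\mathsf{A},\mathsf{N})$, the mod-$\Lambda'$ reduction yields $\mathsf{L} = (\lambda_0 + \mathsf{B}) \bmod \Lambda'$. Since $\mathsf{B}$ is uniform on the $|\Lambda/\Lambda'|$ coset representatives and independent of $(\mathsf{A},\mathsf{N})$, translating by any fixed $\lambda_0$ permutes the cosets, so $(\lambda_0 + \mathsf{B}) \bmod \Lambda'$ is again uniform on $\Lambda/\Lambda'$ and statistically independent of $\lambda_0$, hence independent of the pair $(\mathsf{A},\mathsf{Y}'')$. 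This forces $I(\mathsf{A};\mathsf{L}\mid \mathsf{Y}'') = 0$ and closes the argument. The delicate point is the geometric matching of fundamental regions so that the equality $\mathsf{Y}' = \mathsf{Y}'' + \mathsf{L}$ holds literally as real vectors; this can be sidestepped by working entirely with cosets (the $\Lambda'$-class of $\mathsf{Y}'$ and the $\Lambda$-class of $\mathsf{Y}''$), since mutual information depends only on the random variables themselves and not on the specific choice of representatives.
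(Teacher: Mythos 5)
Your argument is correct, but note that the paper does not actually prove this lemma --- it is imported verbatim as \cite[Lemma~8]{forney6}, so there is no in-paper proof to match. What you have written is a valid self-contained derivation, and it follows the same underlying mechanism as Forney's: the data-processing inequality gives $I(\mathsf{A};\mathsf{Y}'')\leq I(\mathsf{A};\mathsf{Y}')$; the bijection $\mathsf{Y}'\leftrightarrow(\mathsf{Y}'',[\mathsf{L}])$ with $[\mathsf{L}]\in\Lambda/\Lambda'$ plus the chain rule reduces the converse to $I(\mathsf{A};[\mathsf{L}]\mid\mathsf{Y}'')=0$; and the crypto lemma in the finite quotient group $\Lambda/\Lambda'$ delivers that, since conditioned on any realization of $(\mathsf{A},\mathsf{N})$ the coset $[\lambda_0]+[\mathsf{B}]$ is uniform, hence $[\mathsf{L}]$ is independent of $(\mathsf{A},\mathsf{Y}'')$ jointly. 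Your decision to work with the coset class of $\mathsf{L}$ rather than a specific representative is the right way to avoid the fundamental-region bookkeeping: the map $\mathsf{Y}'\mapsto(\mathsf{Y}'',[\mathsf{L}])$ is invertible because $\mathsf{Y}'$ is the unique point of $\mathsf{Y}''+[\mathsf{L}]+\Lambda'$ in $\mathcal{R}(\Lambda')$. Two small points worth making explicit if this were to be written up: (i) the argument needs $\mathsf{B}$ to be independent of $\mathsf{A}$ (and of the noise); this is implicit in the lemma statement and in how the paper uses it, but the conclusion is false without it; (ii) independence of $[\mathsf{L}]$ from the pair $(\mathsf{A},\mathsf{Y}'')$, not merely from $\lambda_0$, is what yields $I(\mathsf{A};[\mathsf{L}]\mid\mathsf{Y}'')=0$ --- your conditioning on $(\mathsf{A},\mathsf{N})$ does establish this, since $\mathsf{Y}''$ is a function of $(\mathsf{A},\mathsf{N})$.
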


In our context, we identify $\Lambda$ with $\Lambda_r^N$ and $\Lambda'$ with $\Lambda_s$, respectively. Since the bits encoding $\Lambda_r^N/\Lambda_s$ are uniformly distributed\footnote{In fact, all bits encoding $\Lambda_e/\Lambda_s$ are uniformly distributed in wiretap coding.}, the mod-$\Lambda_r^N$ operation is information-lossless in the sense that \[I\Big(\X_{1:r}^{[N]};\Z^{[N]}\Big)=I\Big(\X_{1:r}^{[N]};[\X^{[N]}+\W_{e}^{[N]}]\text{ mod }\Lambda_r^N\Big).\] As far as mutual information $I\Big(\X_{1:r}^{[N]};\Z^{[N]}\Big)$ is concerned, we can use the mod-$\Lambda_r^N$ operator instead of the mod-$\Lambda_s$ operator here. Under this condition, we use the multilevel lattice structure introduced in \cite{forney6} to decompose the mod-$\Lambda_s$ channel into a series of BMS channels according to the partition chain $\Lambda/\Lambda_1/ \cdots/\Lambda_{r-1}/\Lambda_r$. \color{black}Therefore, the afore-mentioned polar coding technique for BMS channels can be employed. Moreover, the channel resulted from the lattice partition chain can be proved to be equivalent to that based on the chain rule of mutual information (See \eqref{eqn:chain_rule}). Following this channel equivalence, we can construct an AWGN-good lattice $\Lambda_b$ and a secrecy-good lattice $\Lambda_e$, using the wiretap coding technique \eqref{eqn:Good&bad} at each partition level.

A mod-$\Lambda$ channel is a Gaussian channel with a modulo-$\Lambda$ operator in the front end \cite{multilevel1,forney6}. The capacity of the mod-$\Lambda$ channel is \cite{forney6}
\begin{eqnarray}\label{eqn:chpt2_mod-capacity}
C(\Lambda, \sigma^{2})=\log (\text{Vol}(\Lambda))-h(\Lambda, \sigma^{2}),
\label{eqn:modcapacity}
\end{eqnarray}
where $h(\Lambda, \sigma^{2})$ is the differential entropy of the $\Lambda$-aliased noise over $\mathcal{R}(\Lambda)$:
\begin{eqnarray}
h(\Lambda,\sigma^{2})=-\int_{\mathcal{R}(\Lambda)}f_{\sigma,\Lambda}(t)\text{ log } f_{\sigma,\Lambda}(t)dt. \notag\
\end{eqnarray}
The differential entropy reaches its maximum $\log (\mathrm{Vol}(\Lambda))$ by the uniform distribution over $\mathcal{R}(\Lambda)$. The $\Lambda_{\ell-1}/\Lambda_{\ell}$ channel is defined as a mod-$\Lambda_{\ell}$ channel whose input is drawn from $\Lambda_{\ell-1}\cap\mathcal{R}(\Lambda_{\ell})$. It is known that the $\Lambda_{\ell-1}/\Lambda_{\ell}$ channel is symmetric\footnote{This is ``regular" in the sense of Delsarte and Piret and symmetric in the sense of Gallager \cite{forney6}.}, and the optimum input distribution is uniform \cite{forney6}. Furthermore, the $\Lambda_{\ell-1}/\Lambda_{\ell}$ channel is binary if $|\Lambda_{\ell-1}/\Lambda_{\ell}|=2$. The capacity of the $\Lambda_{\ell-1}/\Lambda_{\ell}$ channel for Gaussian noise of variance $\sigma^2$ is given by \cite{forney6}
\begin{eqnarray}\notag
\begin{split}
  C(\Lambda_{\ell-1}/\Lambda_{\ell}, \sigma^2) &= C(\Lambda_{\ell}, \sigma^2) - C(\Lambda_{\ell-1}, \sigma^2) \\
  &= h(\Lambda_{\ell-1}, \sigma^2) - h(\Lambda_{\ell}, \sigma^2) + \log (\text{Vol}(\Lambda_{\ell})/\text{Vol}(\Lambda_{\ell-1})).
\end{split}
\end{eqnarray}

The decomposition into a set of $\Lambda_{\ell-1}/\Lambda_{\ell}$ channels is used in \cite{forney6} to construct AWGN-good lattices. Take the partition chain $\mathbb{Z}/2\mathbb{Z}/\cdot\cdot\cdot/2^r\mathbb{Z}$ as an example. Given uniform input $\X_{1:r}$, let $\mathcal{K}_\ell$ denote the coset indexed by $x_{1:\ell}$, i.e., $\mathcal{K}_{\ell}=x_1+\cdot\cdot\cdot+2^{\ell-1}x_{\ell}+2^{\ell}\mathbb{Z}$. Given that $\X_{1:\ell-1}=x_{1:\ell-1}$, the conditional probability distribution function (PDF) of this channel with binary input $\X_\ell$ and output $\bar{\Z}=\Z \mod \Lambda_{\ell}$ is
\begin{eqnarray}\label{eqn:modchannel}
f_{\bar{\Z}|\X_\ell}(\bar{z}|x_\ell)=\frac{1}{\sqrt{2\pi}\sigma_e}\sum\limits_{a\in \mathcal{K}_\ell(x_{1:\ell})}\text{exp}\left(-\frac{1}{2\sigma_e^{2}}\|\bar{z}-a\|^2\right).
\end{eqnarray}
Since the previous input bits $x_{1:\ell-1}$ cause a shift on $\mathcal{K}_\ell$ and will be removed by the multistage decoder at level $\ell$, the code can be designed according to the channel transition probability \eqref{eqn:modchannel} with $x_{1:\ell-1}=0$. Following the notation of \cite{forney6}, we use $V(\Lambda_{\ell-1}/\Lambda_{\ell},\sigma_b^2)$ and $W(\Lambda_{\ell-1}/\Lambda_{\ell},\sigma_e^2)$ to denote the $\Lambda_{\ell-1}/\Lambda_{\ell}$ channel for Bob and Eve respectively. The $\Lambda_{\ell-1}/\Lambda_{\ell}$ channel can also be used to construct secrecy-good lattices. In order to bound the information leakage of the wiretapper's channel, we firstly express $I(\X_{1:r};\Z)$ according to the chain rule of mutual information as
\begin{eqnarray}\label{eqn:chain_rule}
\begin{aligned}
I(\X_{1:r};\Z) =I(\X_1;\Z)+I(\X_2;\Z|\X_1)+\cdot\cdot\cdot +I(\X_{r};\Z|\X_{1:r-1}).
\end{aligned}
\end{eqnarray}
This equation still holds if $\Z$ denotes the noisy signal after the mod-$\Lambda_r$ operation, namely, $\Z=[\X+\W_e] \text{ mod } \Lambda_r$. We will adopt this notation in the rest of this subsection.
We refer to the $\ell$-th channel associated with mutual information $I(\X_{\ell};\Z|\X_{1:\ell-1})$ as the equivalent channel denoted by $W'(\X_{\ell};\Z|\X_{1:\ell-1})$, which is defined as the channel from $\X_\ell$ to $\Z$ given the previous $\X_{1:\ell-1}$. Then the transition probability distribution of $W'(\X_{\ell};\Z|\X_{1:\ell-1})$ is \cite[Lemma 6]{forney6}
{\allowdisplaybreaks\begin{eqnarray}\label{eqn:modchannelcapacity}
\begin{aligned}
f_{\Z|\X_\ell}(z|x_\ell)&=\frac{1}{\text{Pr}(\mathcal{K}_\ell(x_{1:\ell}))}\sum_{a\in\mathcal{K}_\ell(x_{1:\ell})}\text{Pr}(a)f_{\Z}(z|a) \\
&=\frac{1}{|\Lambda_{\ell}/\Lambda_r|}\frac{1}{\sqrt{2\pi}\sigma_e}\sum\limits_{a\in \mathcal{K}_\ell(x_{1:\ell})}\text{exp}\left(-\frac{1}{2\sigma_e^{2}}\|z-a\|^2\right),\:\: z\in\mathcal{V}(\Lambda_r).
\end{aligned}
\end{eqnarray}}

From \eqref{eqn:modchannel} and \eqref{eqn:modchannelcapacity}, we can observe that the channel output likelihood ratio (LR) of the $W(\Lambda_{\ell-1}/\Lambda_{\ell},\sigma_e^2)$ channel is equal to that of the $\ell$-th equivalent channel $W'(\X_{\ell};\Z|\X_{1:\ell-1})$. Then we have the following channel equivalence lemma.

\begin{lem}\label{lem:channelequ}
Consider a lattice $L$ constructed by a binary lattice partition chain $\Lambda/\Lambda_{1}/\cdots/\Lambda_{r-1}/\Lambda_r$. Constructing a polar code for the $\ell$-th equivalent binary-input channel $W'(\X_{\ell};\Z|\X_{1:\ell-1})$ defined by the chain rule \eqref{eqn:chain_rule} is equivalent to constructing a polar code for the $\Lambda_{\ell-1}/\Lambda_{\ell}$ channel $W(\Lambda_{\ell-1}/\Lambda_{\ell},\sigma_e^2)$, i.e., the mutual information and Bhattacharyya parameters of the subchannels resulted from $W'(\X_{\ell};\Z|\X_{1:\ell-1})$ are equivalent to that of the subchannels resulted from $W(\Lambda_{\ell-1}/\Lambda_{\ell},\sigma_e^2)$, respectively.
\end{lem}
\begin{proof}
See Appendix \ref{appendix2}.
\end{proof}
Note that another proof based on direct calculation of the mutual information and Bhattacharyya parameters of the subchannels can be found in \cite{yan3}.
\begin{rem}
Observe that if we define $V'(\X_\ell;\Y|\X_{1:\ell-1})$ as the equivalent channel according to the chain rule expansion of $I(\X;\Y)$ for the main channel, the same result can be obtained between $V(\Lambda_{\ell-1}/\Lambda_{\ell},\sigma_b^2)$ and $V'(\X_\ell;\Y|\X_{1:\ell-1})$. Moreover, this lemma also holds without the mod-$\Lambda_s$ front-end, i.e., without power constraint. The construction of AWGN-good polar lattices was given in \cite{polarlatticeJ}, where nested polar codes were constructed based on a set of $\Lambda_{\ell-1}/\Lambda_{\ell}$ channels. We note that the $\Lambda_{\ell-1}/\Lambda_{\ell}$ channel is degraded with respect to the $\Lambda_{\ell}/\Lambda_{\ell+1}$ channel \cite[Lemma 3]{polarlatticeJ}.
\end{rem}

\begin{figure*}[htp]
    \centering
    \includegraphics[width=15cm,height=6cm,keepaspectratio]{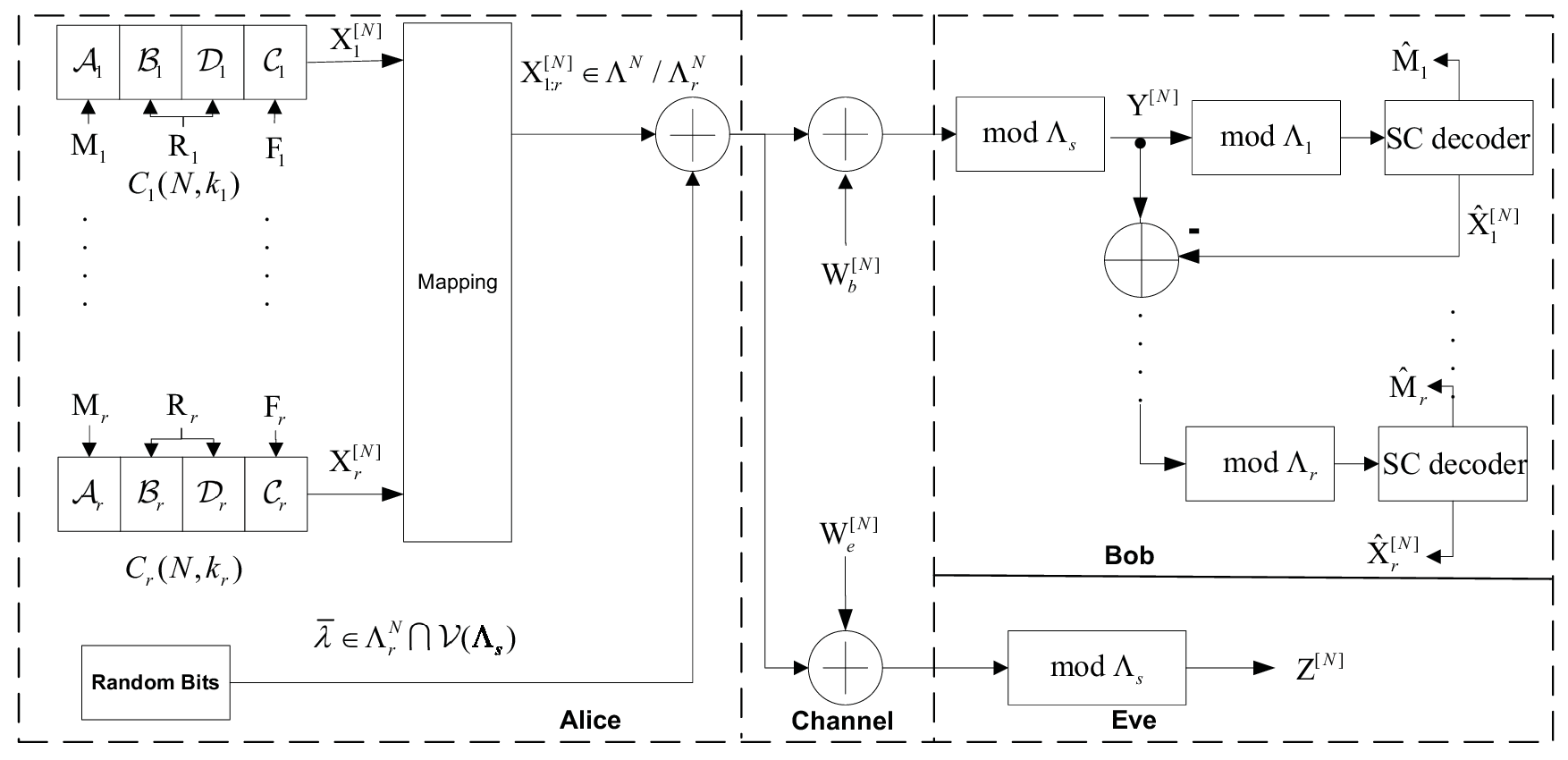}
    \caption{The multilevel lattice coding system over the mod-$\Lambda_s$ Gaussian wiretap channel.}
    \label{fig:eve}
\end{figure*}

Now we are ready to introduce the polar lattice construction for the mod-$\Lambda_s$ GWC shown in Fig. \ref{fig:eve}. A polar lattice $L$ is constructed by a series of nested polar codes $C_{1}(N,k_{1})\subseteq C_{2}(N,k_{2})\subseteq\cdot\cdot\cdot\subseteq C_{r}(N,k_{r})$ and a binary lattice partition chain $\Lambda/\Lambda_1/\cdot\cdot\cdot/\Lambda_r$. The block length of polar codes is $N$. Alice splits the message $\M$ into $\M_1,\cdot\cdot\cdot,\M_{r}$. We follow the same rule \eqref{eqn:assign} to assign bits in the component polar codes to achieve strong secrecy. Note that $W(\Lambda_{\ell-1}/\Lambda_{\ell},\sigma_e^2)$ is degraded with respect to $V(\Lambda_{\ell-1}/\Lambda_{\ell},\sigma_b^2)$ for $1\leq \ell\leq r$ because $\sigma_b^2 \leq \sigma_e^2$. Treating $V(\Lambda_{\ell-1}/\Lambda_{\ell},\sigma_b^2)$ and $W(\Lambda_{\ell-1}/\Lambda_{\ell},\sigma_e^2)$ as the main channel and wiretapper's channel at each level and using the partition rule \eqref{eqn:partition}, we can get four sets $\mathcal{A}_\ell$, $\mathcal{B}_\ell$, $\mathcal{C}_\ell$ and $\mathcal{D}_\ell$. Similarly, we assign the bits as follows
\begin{eqnarray}\label{eqn:newassign}
\begin{aligned}
&\mathcal{A}_\ell\leftarrow \M_\ell, \;\;\mathcal{B}_\ell\leftarrow \mathsf{R}^b_\ell,\\
&\mathcal{C}_\ell \leftarrow \F_\ell, \;\;\;\mathcal{D}_\ell \leftarrow \mathsf{R}^d_\ell
\end{aligned}
\end{eqnarray}
for each level $\ell$, where $\M_\ell$, $\F_\ell$ and $\mathsf{R}^b_\ell$ ($\mathsf{R}^d_\ell$) represent message bits, frozen bits (could be set as all zeros) and uniformly random bits for set $\mathcal{B}_\ell$ ($\mathcal{D}_\ell$) at level $\ell$. Since the $\Lambda_{\ell-1}/\Lambda_{\ell}$ channel is degraded with respect to the $\Lambda_{\ell}/\Lambda_{\ell+1}$ channel. According to \cite[Lemma 4.7]{polarchannelandsource}, when a BMS channel $\widetilde{W}$ is degraded with respect to a BMS channel $\widetilde{V}$, the Bhattacharyya parameters of the subchannels satisfy $\widetilde{Z}(\widetilde{W}_N^{(i)}) \geq \widetilde{Z}(\widetilde{V}_N^{(i)})$. Thus, it is easy to obtain that $\mathcal{C}_\ell\supseteq\mathcal{C}_{\ell+1}$, which means $\mathcal{A}_{\ell}\cup\mathcal{B}_{\ell}\cup\mathcal{D}_\ell\subseteq\mathcal{A}_{\ell+1}\cup\mathcal{B}_{\ell+1}\cup\mathcal{D}_{\ell+1}$. This construction is clearly a lattice construction as polar codes constructed for each level are nested.
We skip the proof of nested polar codes here. A similar proof can be found in \cite{yan2} and \cite{polarlatticeJ}.

As a result, the above multilevel construction yields an AWGN-good lattice $\Lambda_b$ and a secrecy-good lattice $\Lambda_e$ simultaneously. More precisely, $\Lambda_b$ is constructed from a set of nested polar codes $C_{1}(N,|\mathcal{A}_1|+|\mathcal{B}_1|+|\mathcal{D}_1|)\subseteq\cdot\cdot\cdot\subseteq C_{r}(N,|\mathcal{A}_{r}|+|\mathcal{B}_{r}|+|\mathcal{D}_{r}|)$, while $\Lambda_e$ is constructed from a set of nested polar codes $C_{1}(N,|\mathcal{B}_1|+|\mathcal{D}_1|)\subseteq\cdot\cdot\cdot\subseteq C_{r}(N,|\mathcal{B}_{r}|+|\mathcal{D}_{r}|)$ and with the same lattice partition chain. Note that the random bits in set $\mathcal{D}_\ell$ should be shared to Bob to guarantee the AWGN-goodness of $\Lambda_b$. More details will be given in the next subsection. It is clear that $\Lambda_e\subset\Lambda_b$. Thus, our proposed coding scheme instantiates the coset coding scheme introduced in \cite{cong2}, where the confidential message is mapped to the coset $\widetilde{\lambda}_m\in\Lambda_b /\Lambda_e$. However, unlike the work of \cite{cong2}, our scheme does not require an asymptotically vanishing flatness factor, since the upper-bound of the information leakage can be calculated directly. The flatness factor will show up with the lattice Gaussian shaping in the next section.

By using the above assignments and Lemma \ref{lem:strongbound}, we have
\begin{eqnarray}\label{eqn:modchannelbound}
\begin{aligned}
I\Big(\M_\ell\F_\ell;\Z_\ell^{[N]}\Big)\leq N2^{-N^{\beta'}},
\end{aligned}
\end{eqnarray}
where $\Z_\ell^{[N]}=\Z^{[N]} \text{ mod } \Lambda_{\ell}$ is the output of the $\Lambda_{\ell-1}/\Lambda_{\ell}$ channel for Eve. In other words, the employed polar code for the channel $W(\Lambda_{\ell-1}/\Lambda_{\ell},\sigma_e^2)$ can guarantee that the mutual information between the input message and the output is upper bounded by $N2^{-N^{\beta'}}$.

We assume uniform $\M_\ell$ and $\F_\ell$ such that $\X_\ell$ is uniformly distributed at each level. We will remove this restriction to the uniform distribution in Proposition \ref{rmk:semtic}. According to Lemma \ref{lem:channelequ}, the constructed polar code can also guarantee the same upper-bound on the mutual information between the input message and the output of the channel $W'(\X_{\ell};\Z|\X_{1:\ell-1})$, as shown in the following inequality ($\X_\ell$ is independent of the previous $\X_{1:\ell-1}$):
\begin{eqnarray}
\begin{aligned}
I\Big(\M_\ell \F_\ell;\Z^{[N]},\X^{[N]}_{1:\ell-1}\Big)\leq N2^{-N^{\beta'}}.\notag\
\end{aligned} \notag\
\end{eqnarray}

Recall that $\Z^{[N]}$ is the signal received by Eve after the mod-$\Lambda_r$ operation. Let $\F$ denote the combination of $\F_1, \F_2, ..., \F_r$. From the chain rule of mutual information, we obtain
{\allowdisplaybreaks\begin{flalign}\label{eqn:upperbound} \notag
&I\Big(\M \F;\Z^{[N]}\Big)  \\ \notag
&=\sum_{\ell=1}^{r}I\Big(\Z^{[N]};\M_\ell \F_\ell|\M_{1:\ell-1}\F_{1:\ell-1}\Big) \\ \notag
&=\sum_{\ell=1}^{r}H(\M_\ell \F_\ell|\M_{1:\ell-1}\F_{1:\ell-1})-H\Big(\M_\ell\F_{\ell}|\Z^{[N]},\M_{1:\ell-1}\F_{1:\ell-1}\Big)\\
&\leq\sum_{\ell=1}^{r}H(\M_\ell\F_{\ell})-H\Big(\M_\ell \F_\ell|\Z^{[N]},\M_{1:\ell-1}\F_{1:\ell-1}\Big)\\ \notag
&=\sum_{\ell=1}^{r}I\Big(\M_\ell\F_\ell;\Z^{[N]},\M_{1:\ell-1}\F_{1:\ell-1}\Big)\\ \notag
&\leq \sum_{\ell=1}^{r}I\Big(\M_\ell \F_\ell;\Z^{[N]},\X^{[N]}_{1:\ell-1}\Big) \leq rN2^{-N^{\beta'}}, \notag
\end{flalign}}where the second inequality holds because $I\Big(\M_\ell \F_\ell;\Z^{[N]},\X^{[N]}_{1:\ell-1}\Big)=I\Big(\M_\ell \F_\ell;\Z^{[N]},\U^{[N]}_{1:\ell-1}\Big)$ and adding more variables will not decrease the mutual information. Since $\lim_{N\rightarrow\infty}I\Big(\M\F;\Z^{[N]}\Big)=0$, strong secrecy is achieved.

\subsection{Achieving secrecy capacity}\label{sec:reliability}
In the original polar coding scheme for the binary wiretap channel \cite{polarsecrecy}, how to assign set $\mathcal{D}$ is a problem. Assigning frozen bits to $\mathcal{D}$ guarantees reliability but only achieves weak secrecy, whereas assigning random bits to $\mathcal{D}$ guarantees strong secrecy but may violate the reliability requirement because $\mathcal{D}$ may be nonempty. In order to ensure strong secrecy, $\mathcal{D}$ is assigned with random bits ($\mathcal{D} \leftarrow \mathsf{R}$), which makes this scheme failed to accomplish the theoretical reliability. In simple words, to satisfy the strong secrecy and reliability conditions simultaneously, the bits corresponding to $\mathcal{D}$ must be kept frozen to Bob but uniformly random to Eve. For any $\ell$-th level channel $V(\Lambda_{\ell-1}/\Lambda_{\ell}, \sigma_b^2)$ at Bob's end, if set $\mathcal{D}_\ell$ is fed with random bits, the probability of error is upper-bounded by the sum of the Bhattacharyya parameters $\widetilde{Z}(V_N^{(j)}(\Lambda_{\ell-1}/\Lambda_{\ell}, \sigma_b^2))$ of subchannels that are not frozen to zero \cite{arikan2009channel}. For each bit-channel index $j$ and $\beta <0.5$, we have
\begin{eqnarray*}
j \in \mathcal{G}(V(\Lambda_{\ell-1}/\Lambda_{\ell}, \sigma_b^2)) \cup \mathcal{D}_\ell.
\end{eqnarray*}

By the definition \eqref{eqn:Good&bad}, the sum of $\widetilde{Z}(V_N^{(j)}(\Lambda_{\ell-1}/\Lambda_{\ell}, \sigma_b^2))$ over the set $\mathcal{G}(V(\Lambda_{\ell-1}/\Lambda_{\ell}, \sigma_b^2))$ is bounded by $2^{-N^{\beta}}$, therefore the error probability of the $\ell$-th level channel under the SC decoding, denoted by $P_e^{SC}(\Lambda_{\ell-1}/\Lambda_{\ell}, \sigma_b^2)$, can be upper-bounded by \cite{arikan2009channel}
\begin{eqnarray*}
P_e^{SC}(\Lambda_{\ell-1}/\Lambda_{\ell}, \sigma_b^2) \leq N2^{-N^{\beta}} + \sum_{j \in \mathcal{D}_\ell} \widetilde{Z}(V_N^{(j)}(\Lambda_{\ell-1}/\Lambda_{\ell}, \sigma_b^2)).
\end{eqnarray*}
Since multistage decoding is utilized, by the union bound, the final decoding error probability for Bob is bounded as
\begin{eqnarray*}
\text{Pr}\{\widehat{\M} \neq \M\} \leq \sum_{i=1}^{r} P_e^{SC}(\Lambda_{\ell-1}/\Lambda_{\ell}, \sigma_b^2) .
\end{eqnarray*}
Unfortunately, a bound on the sum $\sum_{j \in \mathcal{D}_\ell} \widetilde{Z}(V_N^{(j)}(\Lambda_{\ell-1}/\Lambda_{\ell}, \sigma_b^2))$ is unavailable, making the proof of reliability out of reach. There is numerical evidence of low probabilities of error nonetheless. The proportion of $\mathcal{D}_\ell$ vanishes as $N \to \infty$ \cite[Prop. 22]{polarsecrecy}. In fact, numerical examples in \cite[Sect. VI-F]{polarsecrecy} showed that $\mathcal{D}_\ell=\emptyset$ in most cases of interest. In any case, Bob can run some exhaustive search or form a small list of paths for those unreliable indexes.

\color{black}

The reliability problem was recently solved in \cite{NewPolarSchemeWiretap}, where a new scheme dividing the information message into several blocks was proposed. For a specific block, $\mathcal{D}_\ell$ is still assigned with random bits and transmitted in advance in the set $\mathcal{A}_\ell$ of the previous block. This scheme involves negligible rate loss and finally realizes reliability and strong security simultaneously. In this case, if the reliability of each partition channel can be achieved, i.e., for any $\ell$-th level partition $\Lambda_{\ell-1}/\Lambda_{\ell}$, $P_e^{SC}(\Lambda_{\ell-1}/\Lambda_{\ell}, \sigma_b^2)$ vanishes as $N \rightarrow \infty$, then the total decoding error probability for Bob can be made arbitrarily small. Consequently, based on this new scheme of assigning the problematic set, the error probability on level $\ell$ can be upper-bounded by
\begin{eqnarray}
P_e^{SC}(\Lambda_{\ell-1}/\Lambda_{\ell}, \sigma_b^2) \leq \epsilon_{N'}^\ell +k_\ell \cdot O(2^{-N'^{\beta}}),
\end{eqnarray}
where $k_\ell$ is the number of information blocks on the $\ell$-th level, $N'$ is the length of each block which satisfies $N'\times k_\ell=N$ and $\epsilon_{N'}^\ell$ is caused by the first separate block consisting of the initial bits in $\mathcal{D}_\ell$ at the $\ell$-th level. Since $|\mathcal{D}_\ell|$ is extremely small comparing to the block length $N$, the decoding failure probability for the first block can be made arbitrarily small when $N$ is sufficiently large. Meanwhile, by the analysis in \cite{polarlatticeJ}, when $h(\Lambda, \sigma_b^2) \rightarrow \log(V(\Lambda))$, $h(\Lambda_r, \sigma_b^2) \rightarrow \frac{1}{2}\log(2\pi e \sigma_b^2)$, and $R_C \rightarrow C(\Lambda/\Lambda_r, \sigma_b^2)$, we have $\gamma_{\Lambda_b}(\sigma_b)\rightarrow 2\pi e.$ Therefore, $\Lambda_b$ is an AWGN-good lattice\footnote{More precisely, to make $\Lambda_b$ AWGN-good, we need $P_e(\Lambda_b, \sigma_b^2) \rightarrow 0$ by definition. By \cite[Theorem 2]{polarlatticeJ}, $P_e(\Lambda_b, \sigma_b^2) \leq rN2^{-N^\beta}+N\cdot P_e(\Lambda_r, \sigma_b^2)$. According to the analysis in Remark \ref{rmk:levelnum}, $r=O(\log N)$ is sufficient to guarantee $P_e(\Lambda_r, \sigma_b^2)=e^{-\Omega(N)}$, meaning that a sub-exponentially vanishing $P_e(\Lambda_b, \sigma_b^2)$ can be achieved.}.

Note that the rate loss incurred by repeatedly transmitted bits in $\mathcal{D}_\ell$ is negligible because of its small size. Specifically, the actual secrecy rate in the $\ell$-th level is given by $\frac{k_\ell}{k_\ell+1} [C(\Lambda_{\ell-1}/\Lambda_{\ell}, \sigma_b^2)-C(\Lambda_{\ell-1}/\Lambda_{\ell}, \sigma_e^2)]$. Clearly, this rate can be made close to the secrecy capacity by choosing sufficiently large $k_\ell$ as well.

\begin{ther}[Achieving secrecy capacity of the mod-$\Lambda_s$ GWC]\label{theorem:ratemodwiretap}
Consider a sequence of multi-level polar lattices $L(N)$ of increasing dimensions $N$. Let $L(N)$ be constructed according to \eqref{eqn:newassign} with the binary lattice partition chain $\Lambda/\Lambda_{1}/\cdot\cdot\cdot/\Lambda_r$ and $r$ binary nested polar codes where $r=O(\log N)$. Scale the lattice partition chain to satisfy the following conditions:
\begin{enumerate}[(i)]
  \item\label{ite:fir} $\epsilon_{\Lambda}(\sigma_b) \to 0$,
  \item\label{ite:sec} $\epsilon_e=\frac{1}{2}\log(2\pi e\sigma_e^2)-h(\Lambda_r, \sigma_e^2) \to 0$.
\end{enumerate}
Given $\sigma_e^2>\sigma_b^2$, the secrecy capacity $\frac{1}{2}\log\frac{\sigma_e^2}{\sigma_b^2}$ of the mod-$\Lambda_s$ Gaussian wiretap channel is achievable by using the polar lattices $L(N)$, i.e., for any rate $R<\frac{1}{2}\log\frac{\sigma_e^2}{\sigma_b^2}$, there exists a sufficiently large $N$ such that the realized rate $R(N)$ of $L(N)$ satisfies $R(N)>R$.
\end{ther}
\color{black}
\begin{proof}
By Lemma \ref{lem:securerate} and \eqref{eqn:newassign},
{\allowdisplaybreaks
\begin{eqnarray}\label{eq:achievable-rate}
\begin{aligned}
\lim_{N\rightarrow\infty} R(N)&=\sum_{\ell=1}^{r}\lim_{N\rightarrow\infty}\frac{|\mathcal{A}_\ell|}{N} \\
&=\sum_{\ell=1}^{r}C(V_\ell)-C(W_\ell) \\
&=\sum_{\ell=1}^{r}C(V(\Lambda_{\ell-1}/\Lambda_{\ell},\sigma_b^2))-C(W(\Lambda_{\ell-1}/\Lambda_{\ell},\sigma_e^2)) \\
&=C(V(\Lambda/\Lambda_{r},\sigma_b^2))-C(W(\Lambda/\Lambda_{r},\sigma_e^2))\\
&=C(\Lambda_r, \sigma_b^2)-C(\Lambda, \sigma_b^2)-C(\Lambda_r, \sigma_e^2)+C(\Lambda, \sigma_e^2) \\
&=h(\Lambda_r, \sigma_e^2)-h(\Lambda_r, \sigma_b^2)+h(\Lambda, \sigma_b^2)-h(\Lambda, \sigma_e^2) \\
&=\frac{1}{2}\log\frac{\sigma_e^2}{\sigma_b^2}-(\epsilon_e-\epsilon_b)-\epsilon_1,
\end{aligned}
\end{eqnarray}}
where
\begin{equation} \notag\
\begin{cases} \epsilon_1=C(\Lambda,\sigma_b^2)-C(\Lambda,\sigma_e^2)=h(\Lambda, \sigma_e^2)-h(\Lambda, \sigma_b^2)\geq0, \\
\epsilon_b=h(\sigma_b^2)-h(\Lambda_r, \sigma_b^2)=\frac{1}{2}\log(2\pi e\sigma_b^2)-h(\Lambda_r, \sigma_b^2)\geq0, \\
\epsilon_e=h(\sigma_e^2)-h(\Lambda_r, \sigma_e^2)=\frac{1}{2}\log(2\pi e\sigma_e^2)-h(\Lambda_r, \sigma_e^2)\geq0
\end{cases}
\end{equation}
and $\epsilon_e-\epsilon_b\geq0$.

By scaling $\Lambda$, we can have $h(\Lambda, \sigma_b^2) \rightarrow \log(\text{Vol}(\Lambda))$. Since $\sigma_e^2 > \sigma_b^2$, we also have $h(\Lambda, \sigma_e^2) \rightarrow \log(\text{Vol}(\Lambda))$. More precisely, by \cite[Lemma 1]{polarlatticeJ}, $\epsilon_1$ can be upper-bounded by the flatness factor as
\begin{eqnarray}
\epsilon_1 \leq C(\Lambda,\sigma_b^2) \leq \log(e)\cdot \epsilon_{\Lambda}(\sigma_b). \notag
\end{eqnarray}Then, according to \cite[Corollary 1]{cong2}, we can make $\epsilon_{\Lambda}(\sigma_b) \to 0$ by scaling $\Lambda$.
\color{black}

The number of levels is set such that $h(\Lambda_r, \sigma_e^2) \to \frac{1}{2}\log(2\pi e\sigma_e^2)$. By \cite[Theorem 2]{polarlatticeJ}, $r=O(\log N)$ is sufficient to guarantee $P_e(\Lambda_r, \sigma_b^2)=e^{-\Omega(N)}$, meaning that the volume $\text{Vol}(\Lambda_r)$ is sufficiently large such that $h(\Lambda_r, \sigma_e^2) \to \frac{1}{2}\log(2\pi e\sigma_e^2)$ as $N \to \infty$. Again, since $\sigma_e^2 > \sigma_b^2$, we immediately have $h(\Lambda_r, \sigma_b^2)\to\frac{1}{2}\log(2\pi e\sigma_e^2)$, and $\epsilon_e -\epsilon_b \to 0$. Therefore by scaling $\Lambda$ and adjusting $r$, the secrecy rate can get arbitrarily close to $\frac{1}{2}\log\frac{\sigma_e^2}{\sigma_b^2}$.
\end{proof}

\begin{rem}\label{rmk:sgood}
The constructed lattice $\Lambda_e$ is secrecy-good in the sense of Definition \ref{deft:secrecygood}. Recall that $\Lambda_e$ is constructed from the partition chain $\Lambda/\cdots/\Lambda_r$, which gives us the $N$-dimensional partition chain  $\Lambda^N/\Lambda_e/\Lambda_r^N$. Then,
\begin{eqnarray}
\begin{aligned}
C(\Lambda_e,\sigma_e^2)&=C(\Lambda^N,\sigma_e^2)+C(\Lambda^N/\Lambda_e,\sigma_e^2) \\\notag
&=C(\Lambda^N,\sigma_e^2)+I(\M\F; \Z^{[N]}) \\
&\leq \log(e)\cdot \epsilon_{\Lambda^N}(\sigma_e)+I(\M\F; \Z^{[N]}) \\
&\leq \log(e)\cdot ([1+\epsilon_{\Lambda}(\sigma_e)]^N-1)+I(\M\F; \Z^{[N]}),
\end{aligned}
\end{eqnarray}
where we use \cite[Corollary 1]{cong2} and \cite[Lemma 3]{LingBel13} in the last two inequalities, respectively.

Since $r=O (\log N)$, the top lattice $\Lambda$ can be scaled down so that $\epsilon_{\Lambda}(\sigma_e)$ vanishes as fast as $O(2^{-\sqrt{N}})$ by \cite[Proposition 2]{polarlatticeQZ}. When $N \to \infty$, we have
\begin{eqnarray}
C(\Lambda_e,\sigma_e^2)\leq N\log(e)\cdot \epsilon_{\Lambda}(\sigma_e) + I(\M\F; \Z^{[N]}) + O(2^{-\sqrt{N}}). \notag
\end{eqnarray}
Recalling \eqref{eqn:upperbound}, we immediately have $C(\Lambda_e, \sigma_e^2)  \to 0$. 

Meanwhile, following the analysis of \cite{polarlatticeJ}, we can show that the VNR $\gamma_{\Lambda_e}(\sigma_e^2)\to 2 \pi e$ from below. More precisely, the logarithmic VNR of $\Lambda_e$ satisfies
\begin{eqnarray}
\log\left(\frac{\gamma_{L}(\sigma)}{2\pi e}\right)
=2(\epsilon_{e1}-\epsilon_{e2}-\epsilon_{e3}) \notag\
\end{eqnarray}
where
\begin{equation}
\begin{cases} \epsilon_{e1}=C(\Lambda,\sigma_e^2) \\
\epsilon_{e2}=\frac{1}{2}\log2\pi e\sigma_e^2-h(\Lambda_r,\sigma_e^2) \\
\epsilon_{e3}=\sum_{\ell=1}^{r}{R_{\ell} - C(\Lambda_{\ell-1}/\Lambda_{\ell}, \sigma_e^{2})}.
\end{cases}
\label{eqn:epsilons}
\end{equation}
We note that, $\epsilon_{e1}\leq C(\Lambda,\sigma_b^2)\to 0$, $\epsilon_{e2}\to 0$ (condition (ii) in Theorem \ref{theorem:ratemodwiretap}), and $\epsilon_{3}$ is the total extra rate of component codes to guarantee security. Since $R_{\ell} = |\mathsf{R}_{\ell}|/N = (|\mathcal{B}_{r}|+|\mathcal{D}_{r}|)/N \to C(\Lambda_{\ell-1}/\Lambda_{\ell}, \sigma_e^{2})$, we also have $\epsilon_{3}\to 0$.

Let $U_{\mathcal{R}(\Lambda_e)}$ denote the uniform distribution over a fundamental region $\mathcal{R}(\Lambda_e)$. Note that condition $C(\Lambda_e, \sigma_e^2)  \to 0$ implies the following statements, which all state that the distribution $f_{\sigma_e,\Lambda_e}$ of the mod-$\Lambda_e$ Gaussian noise converges to the uniform distribution:
\begin{enumerate}
  \item Differential entropy $h(\Lambda_e,\sigma_e^2) \to \log (\mathrm{Vol}(\Lambda_e))$;
  \item Kullback-Leibler divergence $\D(f_{\sigma_e,\Lambda_e}\|U_{\mathcal{R}(\Lambda_e)}) \to 0$;
  \item Variational distance $\V(f_{\sigma_e,\Lambda_e},U_{\mathcal{R}(\Lambda_e)}) \to 0$
\end{enumerate}
where 1) is by definition, 2) from the relation between mutual information and Kullback-Leibler divergence\footnote{In fact, it is easy to show that $\D(f_{\sigma_e,\Lambda_e}\|U_{\mathcal{R}(\Lambda_e)}) = \log (\mathrm{Vol}(\Lambda_e)) - h(\Lambda_e,\sigma_e^2) = C(\Lambda_e, \sigma_e^2)$, thanks to the symmetry of the mod-$\Lambda_e$ channel.}, and 3) by Pinsker's inequality.

\end{rem}
\color{black}

\begin{rem}
The secrecy capacity of the mod-$\Lambda_s$ Gaussian wiretap channel per use is given by
\[
C_s = \frac{1}{N}C(\Lambda_s,\sigma_b^2) - \frac{1}{N}C(\Lambda_s,\sigma_e^2)=\frac{1}{N}h(\Lambda_s,\sigma_e^2) - \frac{1}{N} h(\Lambda_s,\sigma_b^2)
\]
since the wiretapper's channel is degraded with respect to the main channel. Because $h(\Lambda_r,\sigma_e^2)\rightarrow\frac{1}{2}\log(2\pi e\sigma_e^2)$ and $\Lambda_s \subset \Lambda_r^N$, we have $\frac{1}{N}h(\Lambda_s,\sigma_e^2) \to \frac{1}{2}\log(2\pi e\sigma_e^2)$ and $\frac{1}{N}h(\Lambda_s,\sigma_b^2) \to \frac{1}{2}\log(2\pi e\sigma_b^2)$. Hence $C_s \to \frac{1}{2}\log\frac{\sigma_e^2}{\sigma_b^2}$. It also equals the secrecy capacity of the Gaussian wiretap channel when the signal power goes to infinity. It is noteworthy that we successfully remove the $\frac{1}{2}$-nat gap in the achievable secrecy rate derived in \cite{cong2} which is caused by the limitation of the $L^{\infty}$ distance associated with the flatness factor.
\end{rem}

\begin{rem}
The mild conditions \eqref{ite:fir} and \eqref{ite:sec} stated in the theorem are easy to meet, by scaling top lattice $\Lambda$ and choosing the number of levels $r$ appropriately. Consider an example for $\sigma_e^2=4$ and $\sigma_b^2=1$. We choose $r=3$ levels and a partition chain $\mathbb{Z}/2\mathbb{Z}/4\mathbb{Z}$ with scaling factor $2.5$. The difference between the achievable rate computed from \eqref{eq:achievable-rate} and the upper bound $\frac{1}{2}\log\frac{\sigma_e^2}{\sigma_b^2}$ on secrecy capacity is about $0.05$.
\end{rem}

\begin{rem}\label{rmk:levelnum}
From conditions \eqref{ite:fir} and \eqref{ite:sec}, we can see that the construction for secrecy-good lattices requires more levels than the construction of AWGN-good lattices. $\epsilon_1$ can be made arbitrarily small by scaling down $\Lambda$ such that both $h(\Lambda, \sigma_e^2)$ and $h(\Lambda, \sigma_b^2)$ are sufficiently close to $\log (\text{Vol}(\Lambda))$. For polar lattices for AWGN-goodness \cite{yan2}, we only need $h(\Lambda_{r'}, \sigma_b^2)\approx\frac{1}{2}\log(2\pi e\sigma_b^2)$ for some $r'<r$. Since $\epsilon_b<\epsilon_e$, $\Lambda_{r'}$ may be not enough for the wiretapper's channel. Therefore, more levels are needed in the wiretap coding context. To satisfy the condition $h(\Lambda_r,\sigma_e^2)\rightarrow\frac{1}{2}\log(2\pi e\sigma_e^2)$, it is sufficient to guarantee that $P_e(\Lambda_r, \sigma_e^2)\rightarrow 0$ by \cite[Theorem 13]{forney6}. When one-dimensional binary partition $\mathbb{Z}/2\mathbb{Z}/4\mathbb{Z}/...$ is used, we have $P_e(\Lambda_r, \sigma_e^2) \leq Q(\frac{2^r}{2\sigma_e}) \leq e^{-\frac{2^{2r}}{8\sigma_e^2}}$, where $Q(\cdot)$ is the Q-function. Letting $r=O(\log N)$, the error probability vanishes as $P_e(\Lambda_r, \sigma_e^2)=e^{-\Omega(N)}$, which implies that $h(\Lambda_r,\sigma_e^2)\rightarrow\frac{1}{2}\log(2\pi e\sigma_e^2)$ as $N \rightarrow \infty$. We also note that when lattice Gaussian shaping is considered in Sect. \ref{sec:SecrecyGoodShap}, the probability of selecting a lattice point from $\Lambda_r$ decays exponentially as $r$ increases. The requirement is relaxed to $r=O(\log\log(N))$ to achieve the secrecy capacity.
\end{rem}

\subsection{Semantic security}

So far we have assumed that the message is uniformly distributed. In fact, this assumption is not needed because of the symmetry of the $\Lambda_b/\Lambda_e$ channel \cite{forney6}. It is well known that the error probability of polar codes in a symmetric channel is independent of the transmitted message \cite{arikan2009channel}; thus the input distribution does not matter for reliability. Moreover, the foregoing security analysis also implies \emph{semantic security}, i.e., \eqref{eqn:upperbound} holds for arbitrarily distributed $\M$ and $\F$. This $\Lambda_b/\Lambda_e$ channel can be seen as the counterpart in lattice coding of the randomness-induced channel defined in \cite{polarsecrecy}.

\begin{prop}\label{rmk:semtic}
Semantic security holds for the polar lattice construction for the mod-$\Lambda_s$ GWC shown in Fig. \ref{fig:eve}, i.e.,
\[
I\Big(\M \F;\Z^{[N]}\Big) \leq rN2^{-N^{\beta'}}
\]
for arbitrarily distributed $\M$ and $\F$.
\end{prop}

\begin{proof}
Since $\M\F$ is drawn from $\mathcal{R}(\Lambda_e)$ and the random bits are drawn from $\Lambda_e \cap \mathcal{R}(\Lambda_s)$, by Lemma \ref{lem:modsuffic}, the mod-$\Lambda_e$ map is information lossless and its output is a sufficient statistic for $\M\F$. Therefore, the channel between  $\M\F$ and the eavesdropper can be viewed as a $\Lambda_b/\Lambda_e$ channel. Because the $\Lambda_b/\Lambda_e$ channel is symmetric, the maximum mutual information is achieved by the uniform input. Consequently,
the mutual information corresponding to other input distributions can also be upper-bounded by $rN2^{-N^{\beta'}}$ as in \eqref{eqn:upperbound}, and we can also freeze the bits $\F$.
\end{proof}
\color{black}

\section{Achieving Secrecy Capacity with Discrete Gaussian Shaping}
\label{sec:SecrecyGoodShap}

In this section, we apply Gaussian shaping on the AWGN-good and secrecy-good polar lattices. The idea of lattice Gaussian shaping was proposed in \cite{LingBel13} and then implemented in \cite{polarlatticeJ} to construct capacity-achieving polar lattices. For wiretap coding, the discrete Gaussian distribution can also be utilized to satisfy the power constraint. In simple terms, after obtaining the AWGN-good lattice $\Lambda_b$ and the secrecy-good lattice $\Lambda_e$, Alice  maps each message $m$ to a coset $\widetilde{\lambda}_m \in \Lambda_b/\Lambda_e$ as mentioned in Sect. \ref{sec:NoPower}. However, instead of the mod-$\Lambda_s$ operation, Alice samples the encoded signal $\X^N$ from $D_{\Lambda_e+\lambda_m,\sigma_s}$, where $\lambda_m$ is the coset representative of $\widetilde{\lambda}_m$ and $\sigma_s^2$ is arbitrarily close to the signal power $P_s$ (see \cite{cong2} for more details). Again, we assume uniform messages until we prove semantic security in the end of this section.

The construction of polar lattices with Gaussian shaping is reviewed in Sect. \ref{sec:PLShape}. With Gaussian shaping, we propose a new partition of the index set for the genuine GWC in Sect. \ref{sec:3partition}. Strong secrecy is proved in Sect. \ref{sec:shapestrong}, and reliability is then discussed in Sect. \ref{sec:reliabilityshape}. Extension to semantical security is given in Sect. \ref{sec:semantic}. Moreover, we will show that this shaping operation does not hurt the secrecy rate and that the secrecy capacity can be achieved.

\subsection{Gaussian shaping over polar lattices}\label{sec:PLShape}
In this subsection, we introduce the lattice shaping technique for polar lattices. The idea is to select the lattice points according to a carefully chosen lattice Gaussian distribution, which makes a non-uniform input distribution for each partition channel.
\color{black}As shown in \cite{polarlatticeJ}, the shaping scheme is based on the technique of polar codes for asymmetric channels. For the paper to be self-contained, a brief review will be presented in this subsection. A more detailed account of Gaussian shaping can be found in \cite{polarlatticeJ}.

Similarly to the polar coding on symmetric channels, the Bhattacharyya parameter for a binary memoryless asymmetric (BMA) channel is defined as follows.
\begin{deft}[Bhattacharyya parameter for BMA channel]\label{deft:asymZ}
Let $W$ be a BMA channel with input $\X \in \mathcal{X}=\{0,1\}$ and output $\Y \in \mathcal{Y}$. The input distribution and channel transition probability is denoted by $P_\X$ and $P_{\Y|\X}$ respectively. The Bhattacharyya parameter $Z$ for $W$ is the defined as
\begin{eqnarray}
Z(\X|\Y)&=&2\sum\limits_{y}P_\Y(y)\sqrt{P_{\X|\Y}(0|y)P_{\X|\Y}(1|y)} \notag\ \\
&=&2\sum\limits_{y}\sqrt{P_{\X,\Y}(0,y)P_{\X,\Y}(1,y)}. \notag
\end{eqnarray}
\end{deft}

 The following lemma, which will be useful for the forthcoming new partition scheme, shows that by adding observable at the output of $W$, $Z$ will not increase. \color{black}
\begin{lem}[Conditioning reduces Bhattacharyya parameter $Z$ \cite{polarlatticeJ}]\label{lem:CondiReduce}
Let $(\X,\Y,\Y')\sim P_{\X,\Y,\Y'}, X\in\mathcal{\X}=\{0,1\}, \Y\in \mathcal{Y},\Y'\in \mathcal{Y}'$, we have
\begin{eqnarray}
Z(\X|\Y,\Y')\leq Z(\X|\Y). \notag
\end{eqnarray}
\end{lem}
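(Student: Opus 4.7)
The plan is to unfold both Bhattacharyya parameters according to Definition \ref{deft:asymZ}, so that the inequality reduces to a statement about sums of square roots of products of joint probabilities, and then apply the Cauchy--Schwarz inequality in its elementary form $\sum_i\sqrt{a_i b_i} \leq \sqrt{(\sum_i a_i)(\sum_i b_i)}$ for nonnegative reals.

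First I would write
\[
Z(\X|\Y,\Y') = 2\sum_{y,y'} \sqrt{P_{\X,\Y,\Y'}(0,y,y')\, P_{\X,\Y,\Y'}(1,y,y')},
\]
and, using $P_{\X,\Y}(x,y) = \sum_{y'} P_{\X,\Y,\Y'}(x,y,y')$, express
\[
Z(\X|\Y) = 2\sum_{y} \sqrt{\Bigl(\sum_{y'} P_{\X,\Y,\Y'}(0,y,y')\Bigr)\Bigl(\sum_{y'} P_{\X,\Y,\Y'}(1,y,y')\Bigr)}.
\]
Then, for each fixed $y$, I would apply the Cauchy--Schwarz inequality to the vectors $a_{y'} = \sqrt{P_{\X,\Y,\Y'}(0,y,y')}$ and $b_{y'} = \sqrt{P_{\X,\Y,\Y'}(1,y,y')}$, which gives
\[
\sum_{y'} \sqrt{P_{\X,\Y,\Y'}(0,y,y')\, P_{\X,\Y,\Y'}(1,y,y')} \leq \sqrt{\Bigl(\sum_{y'} P_{\X,\Y,\Y'}(0,y,y')\Bigr)\Bigl(\sum_{y'} P_{\X,\Y,\Y'}(1,y,y')\Bigr)}.
\]
Summing both sides over $y$ and multiplying by $2$ produces exactly $Z(\X|\Y,\Y') \leq Z(\X|\Y)$.

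There is no real obstacle here: the proof is a one-line application of Cauchy--Schwarz once the definitions have been expanded. The only care needed is to apply Cauchy--Schwarz level-by-level in $y$ rather than globally over $(y,y')$, because the latter would give the much weaker bound $Z(\X|\Y,\Y') \leq 2$; the inner application in $y'$ alone is what yields the tight marginalization identity. It is also worth noting that equality holds precisely when, for each $y$, the vectors $\{P_{\X,\Y,\Y'}(0,y,\cdot)\}$ and $\{P_{\X,\Y,\Y'}(1,y,\cdot)\}$ are proportional, i.e., when $\Y'$ is conditionally independent of $\X$ given $\Y$, matching the intuition that the extra observation $\Y'$ becomes useless exactly when it carries no additional information about $\X$.
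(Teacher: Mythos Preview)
Your proof is correct: expanding the definitions and applying Cauchy--Schwarz in $y'$ for each fixed $y$ is exactly the right move, and your remark on the equality case is accurate. The paper itself does not prove this lemma at all; it is stated with a citation to \cite{polarlatticeJ} and used as a black box, so there is nothing in the paper to compare your argument against.
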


When $\X$ is uniformly distributed, the Bhattacharyya parameter of BMA channels coincides with that of BMS channels defined in Definition \ref{deft:symmZ}. Moreover, the calculation of $Z$ can be converted to the calculation of the Bhattacharyya parameter $\widetilde{Z}$ for a related BMS channel. The following lemma is implicitly considered in \cite{aspolarcodes} and then explicitly expressed in \cite{polarlatticeJ}. We show it here for completeness.

\begin{lem}[From Asymmetric to Symmetric channel \cite{polarlatticeJ}]\label{lem:Asy2sym}
Let $W$ be a binary input asymmetric channel with input $\X \in \mathcal{X}=\{0,1\}$ and $\Y \in \mathcal{Y}$.
We define a new channel $\widetilde{W}$ corresponding to $W$ which has input $\widetilde{\X} \in \mathcal{X}=\{0,1\}$ and output $\widetilde{\Y} \in \mathcal{Y} \times \mathcal{X}$. \color{black}The relationship between $\widetilde{W}$ and $W$ is shown in Fig.  \ref{fig:2Asym2sym}. The input of $\widetilde{W}$ is uniformly distributed, i.e., $P_{\widetilde{\X}} (\widetilde{x}=0)=P_{\widetilde{\X}} (\widetilde{x}=1)=\frac{1}{2}$, and the output of $\widetilde{W}$ is given by $(\Y, \X \oplus \widetilde{\X})$, where $\oplus$ denotes the bitwise XOR operation. Then, $\widetilde{W}$ is a binary symmetric channel in the sense that $P_{\widetilde{\Y}|\widetilde{\X}}(y,x\oplus\widetilde{x}|\widetilde{x})=P_{\Y,\X} (y,x)$.
\end{lem}

\begin{figure}[h]
    \centering
    \includegraphics[width=7cm]{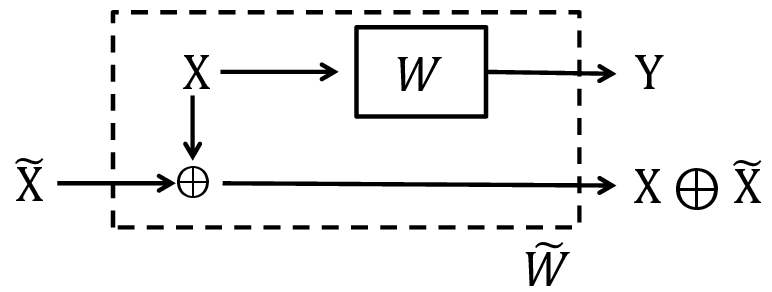}
    \caption{The relationship between $\widetilde{W}$ and $W$.}
    \label{fig:2Asym2sym}
\end{figure}

\color{black}

The following lemma describes how to construct a polar code for a BMA channel $W$ from that for the associated BMS channel $\widetilde{W}$.
\begin{lem}[The equivalence between symmetric and asymmetric Bhattacharyya parameters \cite{aspolarcodes}]\label{lem:conasypolar}
For a BMA channel $W$ with input $\X \sim P_{\X}$, let $\widetilde{W}$ be its symmetrized channel constructed according to Lemma \ref{lem:Asy2sym}. Suppose $\X^{[N]}$ and $\Y^{[N]}$ be the input and output vectors of $W^N$, and let $\widetilde{\X}^{[N]}$ and $\widetilde{\Y}^{[N]}=\left(\X^{[N]}\oplus\widetilde{\X}^{[N]},\Y^{[N]}\right)$ be the input and output vectors of $\widetilde{W}^N$, where $\widetilde{\X}$ is uniform. Consider polarized random variables $\U^{[N]}$=$\X^{[N]}G_N$ and $\widetilde{\U}^{[N]}$=$\widetilde{\X}^{[N]}G_N$, and denote by $W_N$ and $\widetilde{W}_N$ the combining channel of $N$ uses of $W$ and $\widetilde{W}$, respectively. The Bhattacharyya parameter for each subchannel of $W_N$ is equal to that of each subchannel of $\widetilde{W}_N$, i.e.,
\begin{eqnarray}
Z\Big(\U^i|\U^{1:i-1},\Y^{[N]}\Big)=\widetilde{Z}\Big(\widetilde{\U}^i|\widetilde{\U}^{1:i-1},\X^{[N]}\oplus\widetilde{\X}^{[N]},\Y^{[N]}\Big). \notag\
\label{eqn:asymmetricz}
\end{eqnarray}
\end{lem}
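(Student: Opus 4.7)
The plan is to prove the identity by direct computation, exploiting the linearity of the polar transform together with the symmetrization identity established in Lemma \ref{lem:Asy2sym}. First I would check the $N=1$ base case: writing out $\tilde{Z}(\tilde{W})$ using the definition of the symmetric Bhattacharyya parameter for a BMS channel with uniform input and the compound output $\tilde{\Y}=(\X\oplus\tilde{\X},\Y)$, and then applying the identity $P_{\tilde{\Y}|\tilde{\X}}(y,x\oplus\tilde{x}|\tilde{x})=P_{\Y,\X}(y,x)$, the two transition probabilities become $P_{\Y,\X}(y,x)$ and $P_{\Y,\X}(y,x\oplus 1)$. After re-indexing the sum over $x$ by $x\oplus 1$ in one of the factors, the expression collapses to $2\sum_y \sqrt{P_{\X,\Y}(0,y)P_{\X,\Y}(1,y)}=Z(\X|\Y)$ in the notation of Definition \ref{deft:asymZ}, confirming the base case.

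Next I would extend this to the $N$-fold product channel. Since the channel uses are independent, the single-letter identity multiplies to give
\[
P_{\tilde{\Y}^{[N]}|\tilde{\X}^{[N]}}\bigl(y^{[N]},\,x^{[N]}\oplus\tilde{x}^{[N]}\,\big|\,\tilde{x}^{[N]}\bigr)=P_{\Y^{[N]},\X^{[N]}}\bigl(y^{[N]},x^{[N]}\bigr).
\]
The crucial structural observation is the linearity of the polar transform: because $G_N$ is invertible over $\mathbb{F}_2$, we have $\U^{[N]}\oplus\tilde{\U}^{[N]}=(\X^{[N]}\oplus\tilde{\X}^{[N]})G_N$, so the pair $(\tilde{\U}^{[N]},\X^{[N]}\oplus\tilde{\X}^{[N]})$ carries exactly the same information as $(\U^{[N]},\tilde{\U}^{[N]})$, and conditioning on $\tilde{\U}^{1:i-1}$ together with $\X^{[N]}\oplus\tilde{\X}^{[N]}$ amounts, through a deterministic bijection, to conditioning on $\U^{1:i-1}$ together with $\tilde{\X}^{[N]}$.

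With that in hand, I would expand both sides. On the symmetric side, writing $v^{[N]}=x^{[N]}\oplus\tilde{x}^{[N]}$ and using the uniform prior on $\tilde{\X}^{[N]}$, the joint $P_{\tilde{\U}^{[N]},v^{[N]},\Y^{[N]}}$ factors as $2^{-N}P_{\U^{[N]},\Y^{[N]}}(u^{[N]},y^{[N]})$, where $u^{[N]}$ is the image of $(\tilde{u}^{[N]},v^{[N]})$ under the bijection above; toggling $\tilde{u}^i$ in the two conditional densities inside the square root then corresponds exactly to toggling $u^i$. On the asymmetric side, Definition \ref{deft:asymZ} gives the form $2\sum_{u^{1:i-1},y^{[N]}}\sqrt{P_{\U^{1:i},\Y^{[N]}}(u^{1:i-1},0,y^{[N]})\cdot P_{\U^{1:i},\Y^{[N]}}(u^{1:i-1},1,y^{[N]})}$. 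A change of summation variables from $(\tilde{u}^{1:i-1},v^{[N]})$ back to $(u^{1:i-1},\tilde{x}^{[N]})$ collapses the redundant sum over $\tilde{x}^{[N]}$: its $2^{N}$ outcomes cancel the $2^{-N}$ factor pulled out of the square root, leaving precisely the asymmetric expression, together with the factor of $2$ built into Definition \ref{deft:asymZ}.

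The main obstacle, I expect, is bookkeeping rather than conceptual: keeping the bijection $(u^{[N]},\tilde{u}^{[N]})\leftrightarrow(\tilde{u}^{[N]},u^{[N]}\oplus\tilde{u}^{[N]})$ and the cancellation of the uniform $2^{-N}$ prior straight as the sum is reshuffled, and verifying that conditioning on $\tilde{u}^{1:i-1}$ on the symmetric side really does correspond to conditioning on $u^{1:i-1}$ on the asymmetric side once $v^{[N]}$ is fixed. Lemma \ref{lem:CondiReduce} plays no active role here; the equality is a genuine identity, not merely an inequality, which is what makes the asymmetric polar construction faithful to the symmetric one.
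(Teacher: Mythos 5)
The paper itself gives no proof of this lemma --- it is imported wholesale from \cite{aspolarcodes} --- so the only internal point of comparison is the identity $\tilde{W}_1^{(i,N)}(\tilde{u}_1^{1:i-1},\tilde{x}_1^{[N]}\oplus x_1^{[N]},z^{[N]}|\tilde{u}_1^i)=2^{-N+1}P_{\U_1^{1:i},\Z^{[N]}}(u_1^{1:i},z^{[N]})$ invoked in the proof of Lemma \ref{lem:semanticDegrad}. Your computation is exactly the standard argument behind that identity and is correct in substance: the factorization $P_{\tilde{\U}^{[N]},V^{[N]},\Y^{[N]}}=2^{-N}P_{\U^{[N]},\Y^{[N]}}$ with $V^{[N]}=\X^{[N]}\oplus\tilde{\X}^{[N]}$ and $\U^{[N]}=\tilde{\U}^{[N]}\oplus V^{[N]}G_N$, the observation that toggling $\tilde{\U}^i$ toggles $\U^i$ so the product under the square root is always $P(u^{1:i-1},0,y^{[N]})P(u^{1:i-1},1,y^{[N]})$, and the cancellation of $2^{-N}$ against the $2^N$-fold redundancy of the symmetric channel's output alphabet. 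One bookkeeping slip worth fixing: the bijection you invoke is not between $(\tilde{\U}^{1:i-1},V^{[N]})$ and $(\U^{1:i-1},\tilde{\X}^{[N]})$ --- knowing $\tilde{\U}^{1:i-1}$ and $V^{[N]}$ does not determine $\tilde{\X}^{[N]}$ (only the first $i-1$ polarized bits of it), and conversely $(\U^{1:i-1},\tilde{\X}^{[N]})$ does not determine $V^{[N]}$. The correct reindexing keeps $V^{[N]}$ as the free variable: for each fixed $v^{[N]}$ the map $\tilde{u}^{1:i-1}\mapsto u^{1:i-1}=\tilde{u}^{1:i-1}\oplus(v^{[N]}G_N)^{1:i-1}$ is a bijection, the summand depends on $v^{[N]}$ only through $u^{1:i-1}$, and the free sum over the $2^N$ values of $v^{[N]}$ supplies exactly the factor that cancels $2^{-N}$. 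With that correction your argument goes through verbatim, and your $N=1$ sanity check and the irrelevance of Lemma \ref{lem:CondiReduce} are both right.
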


To obtain the desired input distribution of $P_\X$ for $W$, the indices with very small $Z(\U^i|\U^{1:i-1})$ should be removed from the information set of the symmetric channel. Following \cite{polarlatticeJ}, the resultant subset is referred to as the information set $\mathcal{I}$ for the asymmetric channel $W$. For the remaining part $\mathcal{I}^c$, we further find out that there are some bits which can be made independent of the information bits and uniformly distributed. The purpose of extracting such bits is for the interest of our lattice construction. We name the set that includes those independent frozen bits as the independent frozen set $\mathcal{F}$, and the remaining frozen bits are determined by the bits in $\mathcal{F}\cup\mathcal{I}$. We name the set of all those deterministic bits as the shaping set $\mathcal{S}$. The three sets are formally defined as follows:
\begin{eqnarray}\label{eqn:asymdefinition}
&&\hspace{-2em}
\begin{cases}
\begin{aligned}
&\text{the independent frozen set: } \mathcal{F}=\Big\{i\in[N]:Z(\U^i|\U^{1:i-1},\Y^{[N]})\geq1-2^{-N^{\beta}}\Big\}\\
&\text{the information set: } \mathcal{I}=\Big\{i\in[N]:Z(\U^i|\U^{1:i-1},\Y^{[N]})\leq2^{-N^{\beta}}\text{ and }Z(\U^i|\U^{1:i-1})\geq1-2^{-N^{\beta}}\Big\}\\
&\text{the shaping set: } \mathcal{S}=\left(\mathcal{F}\cup\mathcal{I}\right)^c.
\end{aligned}
\end{cases}
\end{eqnarray}

To identify these three sets, one can use Lemma \ref{lem:conasypolar} to calculate $Z(\U^i|\U^{1:i-1},\Y^{[N]},\X^{[N]})$ using the known constructing techniques for symmetric polar codes \cite{Ido}\cite{mori2009performance}. We note that $Z(\U^i|\U^{1:i-1})$ can be computed in a similar way, by constructing a symmetric channel between $\widetilde{\X}$ and $\X \oplus \widetilde{\X}$. Besides the construction, the decoding process for the asymmetric polar codes can also be converted to the decoding for the symmetric polar codes.

The polar coding scheme according to \eqref{eqn:asymdefinition}, which can be viewed as an extension of the scheme proposed in \cite{aspolarcodes}, has been proved to be capacity-achieving in \cite{polarlatticeJ}. Moreover, it can be extended to the construction of multilevel asymmetric polar codes.

Let us describe the encoding strategy for the channel of the $\ell$-th ($\ell \leq r$) level $W_\ell$ with the channel transition probability $P_{\Y|\X_\ell,\X_{1:\ell-1}}(y|x_\ell,x_{1:\ell-1})$ as follows.

\begin{itemize}
\item Encoding: Before sending the codeword $x_\ell^{[N]}=u_\ell^{[N]}G_N$, the index set $[N]$ are divided into three parts: the independent frozen set $\mathcal{F}_\ell$, information set $\mathcal{I}_\ell$, and shaping set $\mathcal{S}_\ell$, which are defined as follows:
    \begin{eqnarray}\label{eqn:asymdefinition1}
    &&\notag\
    \begin{cases}
    \begin{aligned}
    & \mathcal{F}_\ell=\Big\{i\in[N]:Z\Big(\U_\ell^i|\U_\ell^{1:i-1},\X_{1:\ell-1}^{[N]},\Y^{[N]}\Big)\geq1-2^{-N^{\beta}}\Big\}\\
    & \mathcal{I}_\ell=\Big\{i\in[N]:Z\Big(\U_\ell^i|\U_\ell^{1:i-1},\X_{1:\ell-1}^{[N]},\Y^{[N]}\Big)\leq2^{-N^{\beta}}\text{ and }Z\Big(\U_\ell^i|\U_\ell^{1:i-1}, \X_{1:\ell-1}^{[N]}\Big)\geq1-2^{-N^{\beta}}\Big\}\\
    & \mathcal{S}_\ell=\left(\mathcal{F}_\ell\cup\mathcal{I}_\ell\right)^c.
    \end{aligned}
    \end{cases}
    \end{eqnarray}

The encoder first places uniformly distributed information bits in $\mathcal{I}_\ell$. Then the frozen set $\mathcal{F}_\ell$ is filled with a uniform random sequence which is shared between the encoder and the decoder. The bits in $\mathcal{S}_\ell$ are generated by a random mapping $\Phi_{\mathcal{S}_\ell}$, which yields the following distribution:
\begin{equation}
u_\ell^i=
\begin{cases}
0 \;\;\;\; \text{with probability } P_{\U_\ell^i|\U_\ell^{1:i-1},\X_{1:\ell-1}^{[N]}}(0|u_\ell^{1:i-1},x_{1:\ell-1}^{[N]}),\\
1 \;\;\;\; \text{with probability } P_{\U_\ell^i|\U_\ell^{1:i-1},\X_{1:\ell-1}^{[N]}}(1|u_\ell^{1:i-1},x_{1:\ell-1}^{[N]}).
\end{cases}
\label{eqn:encodingside}
\end{equation}
\end{itemize}

\begin{ther}[Construction of multilevel polar codes \cite{polarlatticeJ}]\label{theorem:codingtheoremside}
Consider a polar code with the above encoding strategy. Then, any message rate arbitrarily close to $I(\X_\ell;\Y|\X_{1:\ell-1})$ is achievable using SC
decoding\footnote{It is possible to derandomize the mapping $\Phi_{\mathcal{S}_\ell}$ for the purpose of achieving capacity alone. However, it is tricky to handle the random mapping in order to achieve the secrecy capacity: it requires either to share a secret random mapping or to use the Markov block coding technique (see Sect. \ref{sec:reliabilityshape}).} and the expectation of the decoding error probability over the randomized mappings satisfies $E_{\Phi_{\mathcal{S}_\ell}}[P_e(\phi_{\mathcal{S}_\ell})]=O(2^{-N^{\beta'}})$ for any $\beta'<\beta<0.5$.
\end{ther}

\color{black}

Now let us pick a suitable input distribution $P_{\X_{1:r}}$ to implement the shaping. As shown in Theorem \ref{theorem:capacity}, the mutual information between the discrete Gaussian lattice distribution $D_{\Lambda,\sigma_s}$ and the output of the AWGN channel approaches $\frac{1}{2}\log(1+\SNR)$ as the flatness factor $\epsilon_{\Lambda}(\widetilde{\sigma})\rightarrow0$. Therefore, we use the lattice Gaussian distribution $P_{\X}\sim D_{\Lambda,\sigma_s}$ as the constellation, which gives us $\lim_{r\rightarrow\infty}P_{\X_{1:r}}=P_{\X}\sim D_{\Lambda,\sigma_s}$. By \cite[Lemma 5]{polarlatticeJ}, when $N \rightarrow \infty$, the mutual information $I(\X_{r};\Y|\X_{1:r-1})$ at the bottom level goes to 0 if $r=O(\log\log N)$, and using the first $r$ levels would involve a capacity loss $\sum_{\ell > r} I(\X_{\ell};\Y|\X_{1:\ell-1})\leq O(\frac{1}{N})$.

From the chain rule of mutual information,
\begin{eqnarray}\label{eqn:chainrule}
I(\X_{1:r};\Y)=\sum_{\ell=1}^{r} I(\X_\ell;\Y|\X_{1:\ell-1}), \notag
\end{eqnarray}
we have $r$ binary-input channels and the $\ell$-th channel according to $I(\X_\ell;\Y|\X_{1:\ell-1})$ is generally asymmetric with the input distribution $P_{\X_{\ell}|\X_{1:\ell-1}}$ $(1\leq \ell\leq r)$. Then we can construct the polar code for the asymmetric channel at each level according to Lemma \ref{lem:Asy2sym}. As a result, the $\ell$-th symmetrized channel is equivalent to the MMSE-scaled $\Lambda_{\ell-1}/\Lambda_{\ell}$ channel in the sense of channel polarization. (See \cite{polarlatticeJ} for more details.)

Therefore, when power constraint is taken into consideration, the multilevel polar codes before shaping are constructed according to the symmetric channel $V(\Lambda_{\ell-1}/\Lambda_{\ell}, \widetilde{\sigma}_b^2)$ and $W(\Lambda_{\ell-1}/\Lambda_{\ell}, \widetilde{\sigma}_e^2)$, where $\widetilde{\sigma}_b^2=\Big(\frac{\sigma_s\sigma_b}{\sqrt{\sigma_s^2+\sigma_b^2}}\Big)^2$ and $\widetilde{\sigma}_e^2=\Big(\frac{\sigma_s\sigma_e}{\sqrt{\sigma_s^2+\sigma_e^2}}\Big)^2$ are the MMSE-scaled noise variance of the main channel and of the wiretapper's channel, respectively. This is similar to the mod-$\Lambda_s$ GWC scenario mentioned in the previous section. The difference is that $\sigma_b^2$ and $\sigma_e^2$ are replaced by $\widetilde{\sigma}_b^2$ and $\widetilde{\sigma}_e^2$ accorrdingly. As a result, we can still obtain an AWGN-good lattice $\Lambda_b$ and a secrecy-good lattice $\Lambda_e$ by treating $V(\Lambda_{\ell-1}/\Lambda_{\ell}, \widetilde{\sigma}_b^2)$ and $W(\Lambda_{\ell-1}/\Lambda_{\ell}, \widetilde{\sigma}_e^2)$ as the main channel and wiretapper's channel at each level.

\subsection{Three-dimensional partition}\label{sec:3partition}
When lattice Gaussian shaping is performed over the AWGN-good lattice $\Lambda_b$ and the secrecy-good lattice $\Lambda_e$ simultaneously, we have a new shaping induced partition. The polar coding scheme for the mod-$\Lambda_s$ wiretap channel given in Sect. IV needs to be modified.
\color{black}Now we consider the partition of the index set $[N]$ with shaping involved. According to the analysis of asymmetric polar codes, we have to eliminate those indices with small $Z(\U_{\ell}^i|\U_{\ell}^{1:i-1},\X_{1:l-1}^{[N]})$ from the information set of the symmetric channels. Therefore, Alice cannot send message on those subchannels with $Z(\U_{\ell}^i|\U_{\ell}^{1:i-1},\X_{1:\ell-1}^{[N]}) < 1-2^{-N^\beta}$. Note that this part is the same for $\widetilde{V}_{\ell}$ and $\widetilde{W}_{\ell}$, because it only depends on the shaping distribution. At each level, the index set which is used for shaping is
given as
\begin{eqnarray}
\mathcal{S}_\ell \triangleq \Big\{i \in [N]: Z(\U_{\ell}^i|\U_{\ell}^{1:i-1},\X_{1:\ell-1}^{[N]}) < 1-2^{-N^\beta}\Big\},\notag\
 \notag\
\end{eqnarray}
and the index set which is not for shaping is denoted by $\mathcal{S}_\ell^c$. Recall that for the index set $[N]$, we already have two partition criteria, i.e, reliability-good and information-bad (see \eqref{eqn:Good&bad}). We rewrite the reliability-good index set $\mathcal{G}_\ell$ and information-poor index set $\mathcal{N}_\ell$ at level $\ell$ as
\begin{equation}\label{eqn:Good&bad2}
\begin{split}
\mathcal{G}_\ell &\triangleq \Big\{i \in [N]: Z(\U_{\ell}^i|\U_{\ell}^{1:i-1},\X_{1:\ell-1}^{[N]},\Y^{[N]}) \leq 2^{-N^\beta}\Big\}, \\
\mathcal{N}_\ell &\triangleq \Big\{i \in [N]: Z(\U_{\ell}^i|\U_{\ell}^{1:i-1},\X_{1:\ell-1}^{[N]},\Z^{[N]}) \geq 1-2^{-N^\beta}\Big\}.
\end{split}
\end{equation}

Note that $\mathcal{G}_\ell$ and $\mathcal{N}_\ell$ are defined by the asymmetric Bhattacharyya parameters. Nevertheless, by Lemma \ref{lem:conasypolar} and the channel equivalence, we have $\mathcal{G}_\ell=\mathcal{G}(\widetilde{V}_\ell)$ and $\mathcal{N}_\ell=\mathcal{N}(\widetilde{W}_\ell)$ as defined in \eqref{eqn:Good&bad}, where $\widetilde{V}_\ell$ and $\widetilde{W}_\ell$ are the respective symmetric channels or the MMSE-scaled $\Lambda_{\ell-1}/\Lambda_{\ell}$ channels for Bob and Eve at level $\ell$. The four sets $\mathcal{A}_\ell$, $\mathcal{B}_\ell$, $\mathcal{C}_\ell$, and $\mathcal{D}_\ell$ are defined in the same fashion as \eqref{eqn:partition}, with $\mathcal{G}_\ell$ and $\mathcal{N}_\ell$ replacing $\mathcal{G}(\widetilde{V}_\ell)$ and $\mathcal{N}(\widetilde{W}_\ell)$, respectively. Now the whole index set $[N]$ is divided like a cube in three directions, which is shown in Fig. \ref{fig:3parts}.
\begin{figure}[h]
    \centering
    \includegraphics[width=10cm]{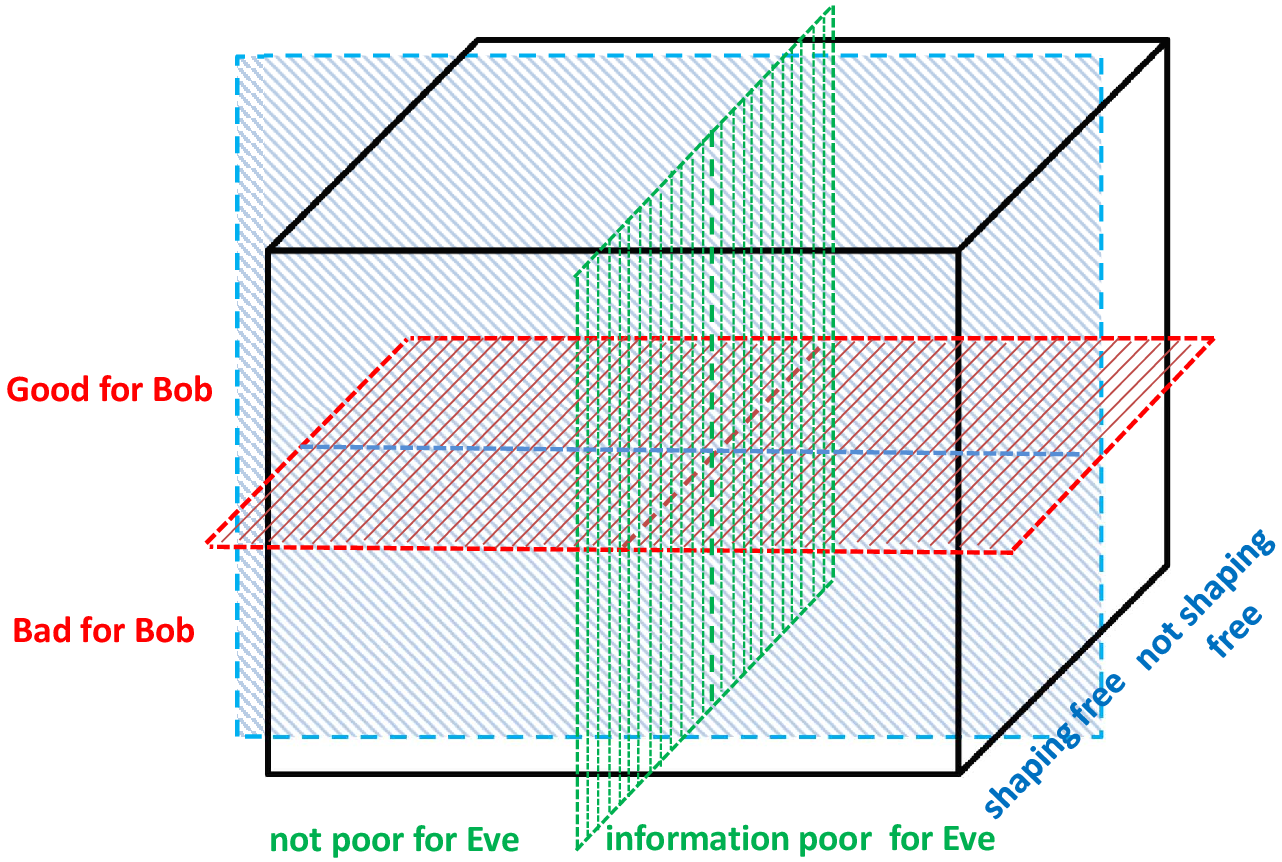}
    \caption{Partitions of the index set $[N]$ with shaping.}
    \label{fig:3parts}
\end{figure}

Clearly, we have eight blocks:
\begin{eqnarray}\label{eqn:shapingassign}
\begin{aligned}
&\mathcal{A}_\ell^{\mathcal{S}}=\mathcal{A}_\ell \cap \mathcal{S}_\ell, \; \mathcal{A}_\ell^{\mathcal{S}^c}=\mathcal{A}_\ell \cap \mathcal{S}_\ell^c\\
&\mathcal{B}_\ell^{\mathcal{S}}=\mathcal{B}_\ell \cap \mathcal{S}_\ell, \; \mathcal{B}_\ell^{\mathcal{S}^c}=\mathcal{B}_\ell \cap \mathcal{S}_\ell^c\\
&\mathcal{C}_\ell^{\mathcal{S}}=\mathcal{C}_\ell \cap \mathcal{S}_\ell, \; \mathcal{C}_\ell^{\mathcal{S}^c}=\mathcal{C}_\ell \cap \mathcal{S}_\ell^c\\
&\mathcal{D}_\ell^{\mathcal{S}}=\mathcal{D}_\ell \cap \mathcal{S}_\ell, \; \mathcal{D}_\ell^{\mathcal{S}^c}=\mathcal{D}_\ell \cap \mathcal{S}_\ell^c\\
\end{aligned}
\end{eqnarray}

By Lemma \ref{lem:CondiReduce}, we observe that $\mathcal{A}_\ell^{\mathcal{S}}=\mathcal{C}_\ell^{\mathcal{S}}=\emptyset$, $\mathcal{A}_\ell^{\mathcal{S}^c}=\mathcal{A}_\ell$, and $\mathcal{C}_\ell^{\mathcal{S}^c}=\mathcal{C}_\ell$. The shaping set $\mathcal{S}_\ell$ is divided into two sets $\mathcal{B}_\ell^{\mathcal{S}}$ and $\mathcal{D}_\ell^{\mathcal{S}}$. The bits in $\mathcal{S}_\ell$ are determined by the bits in $\mathcal{S}_\ell^c$ according to the mapping. Similarly, $\mathcal{S}_\ell^c$ is divided into the four sets $\mathcal{A}_\ell^{\mathcal{S}^c}=\mathcal{A}_\ell$, $\mathcal{B}_\ell^{\mathcal{S}^c}$, $\mathcal{C}_\ell^{\mathcal{S}^c}=\mathcal{C}_\ell$, and $\mathcal{D}_\ell^{\mathcal{S}^c}$. Note that for wiretap coding, the frozen set becomes $\mathcal{C}_\ell^{\mathcal{S}^c}$, which is slightly different from the frozen set for channel coding. To satisfy the reliability condition, the frozen set $\mathcal{C}_\ell^{\mathcal{S}^c}$ and the problematic set $\mathcal{D}_\ell^{\mathcal{S}^c}$ cannot be set uniformly random any more. Recall that only the independent frozen set $\mathcal{F}_\ell$ at each level, which is defined as $\{i\in[N]:Z(\U_\ell^i|\U_\ell^{1:i-1},\Y^{[N]},\X_{1:\ell-1}^{[N]})\geq1-2^{-N^{\beta}}\}$, can be set uniformly random (which are already shared between Alice and Bob), and the bits in the unpolarized frozen set $\bar{\mathcal{F}}_\ell$, defined as $\{i\in[N]:2^{-N^{\beta}}<Z(\U_\ell^i|\U_\ell^{1:i-1},\Y^{[N]},\X_{1:\ell-1}^{[N]})<1-2^{-N^{\beta}}\}$, should be determined according to the mapping. Moreover, we can observe that $\mathcal{F}_\ell \subset \mathcal{C}_\ell^{\mathcal{S}^c}$ and $\mathcal{D}_\ell^{\mathcal{S}^c}\subset \mathcal{D}_\ell \subset \bar{\mathcal{F}}_\ell$. Here we make the bits in $\mathcal{F}_\ell$ uniformly random and the bits in $\mathcal{C}_\ell^{\mathcal{S}^c}\setminus\mathcal{F}_\ell$ and $\mathcal{D}_\ell^{\mathcal{S}^c}$ determined by the mapping. Therefore, from now on, we adjust the definition of the shaping bits as:
\begin{eqnarray}\label{eqn:newS}
\mathcal{S}_\ell \triangleq \Big\{i \in [N]: Z(\U_{\ell}^i|\U_{\ell}^{1:i-1},\X_{1:\ell-1}^{[N]}) < 1-2^{-N^\beta} \,\text{or}\,\, 2^{-N^{\beta}}<Z(\U_\ell^i|\U_\ell^{1:i-1},\Y^{[N]},\X_{1:\ell-1}^{[N]})<1-2^{-N^{\beta}}\Big\},
\end{eqnarray}
which is essentially equivalent to the definition of the shaping set given in Theorem \ref{theorem:codingtheoremside}.

To sum up, at level $\ell$, we assign the sets $\mathcal{A}_\ell^{\mathcal{S}^c}$, $\mathcal{B}_\ell^{\mathcal{S}^c}$, and $\mathcal{F}_\ell$ with message bits $\M_\ell$, uniformly random bits $\mathsf{R}_\ell$, and uniform frozen bits $\F_\ell$, respectively. The rest bits $\mathsf{S}_\ell$ (in $\mathcal{S}_\ell$) will be fed with random bits according to $P_{\U_{\ell}^i|\U_{\ell}^{1:i-1},\X_{1:l-1}^{[N]}}$. Clearly, this shaping operation will make the input distribution arbitrarily close to $P_{\X_\ell|\X_{1:\ell-1}}$, for $\beta$ fixed and $N$ tending to infinity. In this case, we can obtain the equality between the Bhattacharyya parameter of asymmetric setting and symmetric setting (see Lemma \ref{lem:conasypolar}). This provides us a convenient way to prove the strong secrecy of the wiretap coding scheme with shaping because we have already proved the strong secrecy of a symmetric wiretap coding scheme using the Bhattacharyya parameter of the symmetric setting. A detailed proof will be presented in the following subsection. Before this, we show that the shaping will not change the message rate.

\begin{lem}\label{lem:shapdoesnotmatter}
For the symmetrized main channel $\widetilde{V}_\ell$ and wiretapper's channel $\widetilde{W}_\ell$, consider the reliability-good indices set $\mathcal{G}_\ell$ and information-bad indices set $\mathcal{N}_\ell$ defined as in \eqref{eqn:Good&bad2}. By eliminating the shaping set $\mathcal{S}_\ell$ from the original message set defined in \eqref{eqn:partition}, we get the new message set $\mathcal{A}_\ell^{\mathcal{S}^c}=\mathcal{G}_\ell\cap \mathcal{N}_\ell\cap \mathcal{S}^c_\ell$. The proportion of $|\mathcal{A}_\ell^{\mathcal{S}^c}|$ equals to that of $|\mathcal{A}_\ell|$, and the message rate after shaping can still be arbitrarily close to $\frac{1}{2}\log\frac{\widetilde{\sigma}_e^2}{\widetilde{\sigma}_b^2}$.
\end{lem}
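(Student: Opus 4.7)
The plan is to show that the shaping restriction is automatically satisfied on the indices that already lie in the information-bad set $\mathcal{N}_\ell$, so that removing $\mathcal{S}_\ell$ from $\mathcal{A}_\ell = \mathcal{G}_\ell \cap \mathcal{N}_\ell$ actually costs nothing. Concretely, I would prove the containment
\[
\mathcal{N}_\ell \subseteq \mathcal{S}_\ell^c,
\]
by invoking Lemma \ref{lem:CondiReduce}: for any $i\in\mathcal{N}_\ell$, conditioning additionally on $\Z^{[N]}$ can only reduce the Bhattacharyya parameter, so
\[
Z(\U_\ell^i\mid \U_\ell^{1:i-1},\X_{1:\ell-1}^{[N]})
\;\geq\;
Z(\U_\ell^i\mid \U_\ell^{1:i-1},\X_{1:\ell-1}^{[N]},\Z^{[N]})
\;\geq\;1-2^{-N^{\beta}},
\]
which by the first clause of \eqref{eqn:newS} places $i$ outside $\mathcal{S}_\ell$. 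Consequently $\mathcal{A}_\ell^{\mathcal{S}^c} = \mathcal{G}_\ell\cap\mathcal{N}_\ell\cap\mathcal{S}_\ell^c = \mathcal{G}_\ell\cap\mathcal{N}_\ell = \mathcal{A}_\ell$, so the two sets have identical cardinality (not merely identical proportion).

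Next, I would identify the per-level rate. Since $\mathcal{G}_\ell$ and $\mathcal{N}_\ell$ are defined by the asymmetric Bhattacharyya parameter but, by Lemma \ref{lem:conasypolar}, these quantities coincide with those of the symmetrized channels $\tilde V_\ell$ and $\tilde W_\ell$, Lemma \ref{lem:securerate} applies directly and yields
\[
\lim_{N\to\infty}\frac{|\mathcal{A}_\ell^{\mathcal{S}^c}|}{N}
\;=\;\lim_{N\to\infty}\frac{|\mathcal{A}_\ell|}{N}
\;=\;C(\tilde V_\ell) - C(\tilde W_\ell).
\]
Because the symmetrized $\ell$-th channel is equivalent to the MMSE-scaled $\Lambda_{\ell-1}/\Lambda_\ell$ channel, we have $C(\tilde V_\ell) = C(\Lambda_{\ell-1}/\Lambda_\ell,\tilde\sigma_b^2)$ and $C(\tilde W_\ell) = C(\Lambda_{\ell-1}/\Lambda_\ell,\tilde\sigma_e^2)$.

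Finally, I would sum across the $r$ levels and repeat the telescoping computation from \eqref{eq:achievable-rate} verbatim, but with $\sigma_b^2,\sigma_e^2$ replaced by $\tilde\sigma_b^2,\tilde\sigma_e^2$. Under the analogues of conditions \eqref{ite:fir} and \eqref{ite:sec} of Theorem \ref{theorem:ratemodwiretap}, i.e.\ $h(\Lambda,\tilde\sigma_b^2)\to\log\mathrm{Vol}(\Lambda)$ and $h(\Lambda_r,\tilde\sigma_e^2)\to\tfrac12\log(2\pi e\tilde\sigma_e^2)$, the differential-entropy terms collapse and give a total message rate arbitrarily close to $\tfrac12\log(\tilde\sigma_e^2/\tilde\sigma_b^2)$.

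The only subtle point is the first step: one must be careful that the containment $\mathcal{N}_\ell \subseteq \mathcal{S}_\ell^c$ uses the form of $\mathcal{S}_\ell$ in \eqref{eqn:newS} that refers to $Z(\U_\ell^i\mid \U_\ell^{1:i-1},\X_{1:\ell-1}^{[N]})$, not the symmetric parameter; Lemma \ref{lem:CondiReduce} is exactly the tool that bridges these conditional Bhattacharyya quantities, and it is really what makes the cardinality identity (not just an asymptotic equality of proportions) hold. Everything after that is a rewrite of the mod-$\Lambda_s$ rate calculation with the MMSE-scaled noise variances.
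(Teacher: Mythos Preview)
Your approach is essentially the paper's own: show $\mathcal{A}_\ell\cap\mathcal{S}_\ell=\emptyset$ (so $\mathcal{A}_\ell^{\mathcal{S}^c}=\mathcal{A}_\ell$) and then invoke Theorem~\ref{theorem:ratemodwiretap} with the MMSE-scaled variances $\tilde\sigma_b^2,\tilde\sigma_e^2$ in place of $\sigma_b^2,\sigma_e^2$.

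There is, however, a small logical slip in your first step. Under the definition \eqref{eqn:newS}, the shaping set $\mathcal{S}_\ell$ is a \emph{union} of two conditions, so to place $i$ outside $\mathcal{S}_\ell$ you must rule out both. Your inequality chain via Lemma~\ref{lem:CondiReduce} only negates the first clause $Z(\U_\ell^i\mid\U_\ell^{1:i-1},\X_{1:\ell-1}^{[N]})<1-2^{-N^\beta}$ for $i\in\mathcal{N}_\ell$; it says nothing about the second clause $2^{-N^\beta}<Z(\U_\ell^i\mid\U_\ell^{1:i-1},\Y^{[N]},\X_{1:\ell-1}^{[N]})<1-2^{-N^\beta}$. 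Hence the containment $\mathcal{N}_\ell\subseteq\mathcal{S}_\ell^c$ is not established (and indeed need not hold: indices in $\mathcal{C}_\ell=\mathcal{G}_\ell^c\cap\mathcal{N}_\ell$ may well satisfy the second clause, which is precisely why the paper moves $\mathcal{C}_\ell^{\mathcal{S}^c}\setminus\mathcal{F}_\ell$ into the mapping). The fix is immediate: for $i\in\mathcal{A}_\ell$ you also have $i\in\mathcal{G}_\ell$, i.e.\ $Z(\U_\ell^i\mid\U_\ell^{1:i-1},\X_{1:\ell-1}^{[N]},\Y^{[N]})\leq 2^{-N^\beta}$, which directly excludes the second clause. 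So argue $\mathcal{A}_\ell\subseteq\mathcal{S}_\ell^c$ using both $\mathcal{G}_\ell$ and $\mathcal{N}_\ell$, rather than the stronger and unproved $\mathcal{N}_\ell\subseteq\mathcal{S}_\ell^c$. With this correction, your proof is complete and matches the paper's reasoning.
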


\begin{proof}
By Theorem \ref{theorem:ratemodwiretap}, when shaping is not involved, the message rate can be made arbitrarily close to $\frac{1}{2}\log\frac{\widetilde{\sigma}_e^2}{\widetilde{\sigma}_b^2}$. By the new definition \eqref{eqn:newS} of $\mathcal{S}_\ell$, we still have $\mathcal{A}_\ell^{\mathcal{S}}=\emptyset$, which means the shaping operation will not affect the message rate.
\end{proof}

\subsection{Strong secrecy}\label{sec:shapestrong}
In this subsection, we prove that strong secrecy can still be achieved when shaping is involved. To this end, we introduce a new induced channel from Eve's perspective and prove that the information leakage over this channel vanishes at each level. \color{black}
Then, strong secrecy is proved by using the chain rule of mutual information as in \eqref{eqn:upperbound}.

\begin{figure}[h]
    \centering
    \includegraphics[width=10cm]{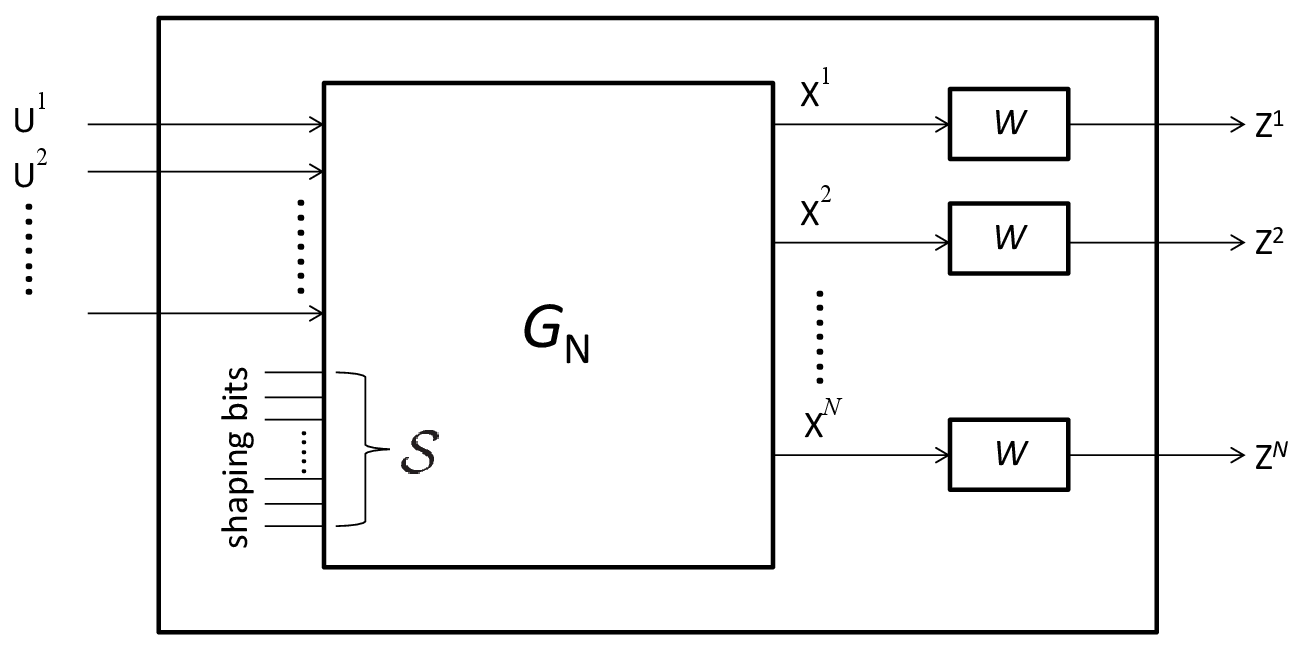}
    \caption{Block diagram of the shaping-induced channel $\mathcal{Q}_N(W,\mathcal{S})$.}
    \label{fig:shapingindu}
\end{figure}

In \cite{polarsecrecy}, an induced channel is defined in order to prove strong secrecy. Here we call it the randomness-induced channel because it is induced by feeding the subchannels in the sets $\mathcal{B}_\ell$ and $\mathcal{D}_\ell$ with uniformly random bits. However, when shaping is involved, the set $\mathcal{B}_\ell$ and $\mathcal{D}_\ell$ are no longer fed with uniformly random bits. In fact, some subchannels (covered by the shaping mapping) should be fed with bits according to a random mapping. We define the channel induced by the shaping bits as the shaping-induced channel.

\begin{deft}[Shaping-induced channel]
The shaping-induced channel $\mathcal{Q}_N(W,\mathcal{S})$ is defined in terms of $N$ uses of an asymmetric channel $W$, and a shaping subset $\mathcal{S}$ of $[N]$ of size $|\mathcal{S}|$. The input alphabet of $\mathcal{Q}_N(W,\mathcal{S})$ is $\{0,1\}^{N-|\mathcal{S}|}$ and the bits in $\mathcal{S}$ are determined by the input bits according to a random shaping $\Phi_{\mathcal{S}}$. A block diagram of the shaping induced channel is shown in Fig. \ref{fig:shapingindu}.
\end{deft}

Based on the shaping-induced channel, we define a new induced channel, which is caused by feeding a part of the input bits of the shaping-induced channel with uniformly random bits.

\begin{deft}[New induced channel]
Based on a shaping induced channel $\mathcal{Q}_N(W,\mathcal{S})$, the new induced channel $\mathcal{Q}_N(W,\mathcal{S}, \mathcal{R})$ is specified in terms of a randomness subset $\mathcal{R}$ of size $|\mathcal{R}|$. The randomness is introduced into the input set of the shaping-induced channel. The input alphabet of $\mathcal{Q}_N(W,\mathcal{S}, \mathcal{R})$ is $\{0,1\}^{N-|\mathcal{S}|-|\mathcal{R}|}$ and the bits in $\mathcal{R}$ are uniformly and independently random. A block diagram of the new induced channel is shown in Fig. \ref{fig:randomindu}.
\end{deft}

\begin{figure}[h]
    \centering
    \includegraphics[width=10cm]{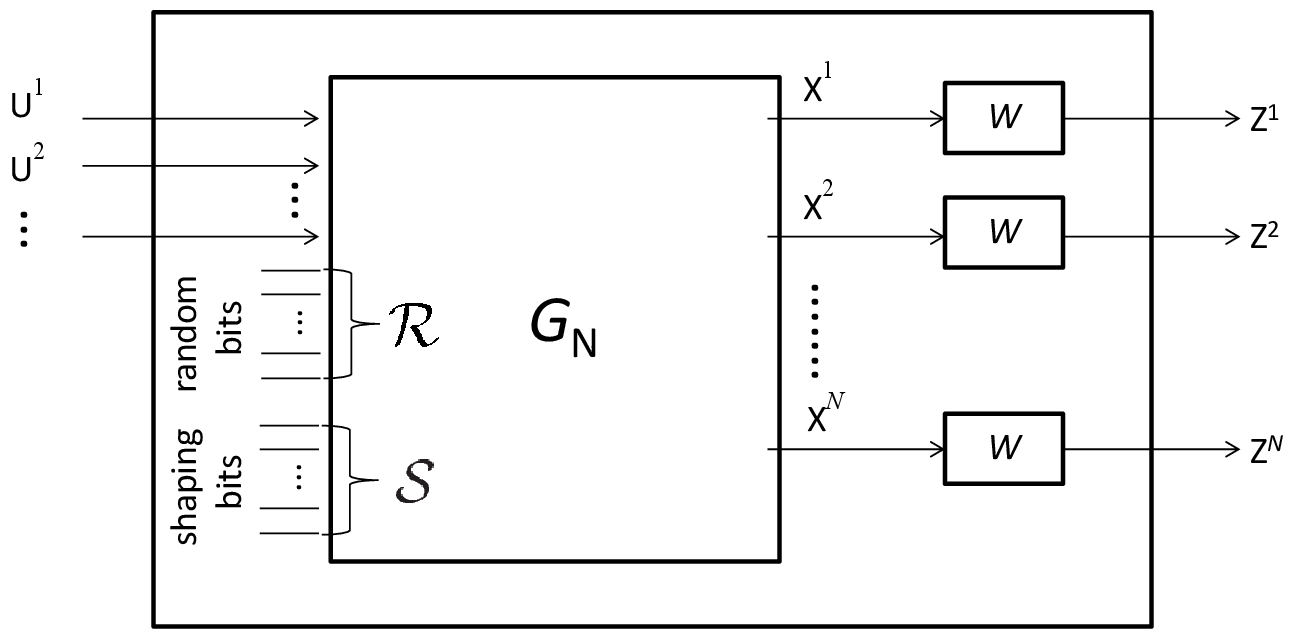}
    \caption{Block diagram of the new induced channel $\mathcal{Q}_N(W,\mathcal{S},\mathcal{R})$.}
    \label{fig:randomindu}
\end{figure}

The new induced channel is a combination of the shaping-induced channel and randomness-induced channel. This is different from the definition given in \cite{polarsecrecy} because the bits in $\mathcal{S}$ are neither independent to the message bits nor uniformly distributed. As long as the input bits of the new induced channel are uniform and the shaping bits are chosen according to the random mapping, the new induced channel can still generate $2^N$ possible realizations $x_\ell^{[N]}$ of $\X_\ell^{[N]}$ as $N$ goes to infinity, and those $x_\ell^{[N]}$ can be viewed as the output of $N$ i.i.d binary sources with input distribution $P_{\X_\ell|\X_{1:\ell-1}}$. These are exactly the conditions required by Lemma \ref{lem:conasypolar}. Specifically, we have $Z\Big(\U_\ell^i|\U_\ell^{1:i-1},\X_{1:\ell-1}^{[N]},\Z^{[N]}\Big)=\widetilde{Z}\Big(\widetilde{\U}_\ell^i|\widetilde{\U}_\ell^{1:i-1},\X_{1:\ell-1}^{[N]},\X_\ell^{[N]}\oplus\widetilde{\X}_\ell^{[N]},\Z^{[N]}\Big)$. In simple words, this equation holds when $x_\ell^{[N]}$ and $x_\ell^{[N]}\oplus\widetilde{x}_\ell^{[N]}$ are all selected from $\{0,1\}^N$ according to their respective distributions. Then we can exploit the relation between the asymmetric channel and the corresponding symmetric channel to bound the mutual information of the asymmetric channel. Therefore, we have to stick to the input distribution (uniform) of our new induced channel and also the distribution of the random mapping. This is similar to the setting of the randomness induced channel in \cite{polarsecrecy}, where the input distribution and the randomness distribution are both set to be uniform. In \cite{polarsecrecy}, the randomness-induced channel is further proved to be symmetric; then any other input distribution can also achieve strong secrecy and the symmetry finally results in semantic security. In this work, however, we do not have a proof of the symmetry of the new induced channel. For this reason, we assume for now that the message bits are uniform distributed. To prove semantic security, we will show that the information leakage of the symmetrized version of the new induced channel is vanishing in Sect. \ref{sec:semantic}.

\begin{lem}\label{lem:secrecybound}
Let $\M_\ell$ be the uniformly distributed message bits and $\F_\ell$ be the independent frozen bits at the input of the channel at the $\ell$-th level. When shaping bits $\mathsf{S}_\ell$ are selected according to the random mapping $\Phi_{\mathcal{S}_\ell}$ \footnote{We will further show that the number of shaping bits $\mathsf{S}_\ell$ covered by random mapping can be significantly reduced in Sect. V-E. Then, to achieve reliability, $\mathsf{S}_\ell$ can be shared between Alice and Bob, or we can use the Markov block coding technique to hide $\mathsf{S}_\ell$ with negligible rate loss.} and $N$ is sufficiently large, the mutual information can be upper-bounded as
\begin{eqnarray}
I\Big(\M_\ell \F_\ell;\Z^{[N]},\X_{1:\ell-1}^{[N]}\Big)\leq O(N^2 2^{-N^ {\beta'}}).\notag\
\end{eqnarray}
\end{lem}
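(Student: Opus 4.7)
The plan is to mimic the proof of Lemma \ref{lem:strongbound} from the symmetric case, but now handle (i)~the extra conditioning on $\X_{1:\ell-1}^{[N]}$ coming from the multilevel decomposition and (ii)~the asymmetric input distribution induced by the Gaussian shaping, using Lemmas \ref{lem:conasypolar} and \ref{lem:Mutualbound} as the bridge back to the symmetric world. First I would check that the positions carrying $(\M_\ell,\F_\ell)$, namely $\tilde{\mathcal{I}}_\ell := \mathcal{A}_\ell^{\mathcal{S}^c}\cup\mathcal{F}_\ell$, all lie in $\mathcal{N}_\ell$. The inclusion $\mathcal{A}_\ell^{\mathcal{S}^c}\subseteq\mathcal{A}_\ell\subseteq\mathcal{N}_\ell$ is immediate from the partition \eqref{eqn:partition}. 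For $\mathcal{F}_\ell\subseteq\mathcal{N}_\ell$ I would use that the wiretapper's subchannel is a degradation of the main subchannel (inherited from $\sigma_e^2>\sigma_b^2$ through the polar transform): this forces $Z(\U_\ell^i\mid\U_\ell^{1:i-1},\X_{1:\ell-1}^{[N]},\Z^{[N]})\geq Z(\U_\ell^i\mid\U_\ell^{1:i-1},\X_{1:\ell-1}^{[N]},\Y^{[N]})\geq 1-2^{-N^\beta}$ for every $i\in\mathcal{F}_\ell$.

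Next I would decompose the mutual information. Because $(\M_\ell,\F_\ell)=\U_\ell^{\tilde{\mathcal{I}}_\ell}$ is uniform of entropy $|\tilde{\mathcal{I}}_\ell|$, the chain rule followed by conditioning-reduces-entropy (enlarging the conditioning from $\U_\ell^{\tilde{\mathcal{I}}_\ell\cap[1:i-1]}$ to the full $\U_\ell^{1:i-1}$) yields
\[
I\bigl((\M_\ell,\F_\ell);\Z^{[N]},\X_{1:\ell-1}^{[N]}\bigr)\;\leq\;\sum_{i\in\tilde{\mathcal{I}}_\ell}\Bigl(1-H\bigl(\U_\ell^i\bigm|\U_\ell^{1:i-1},\X_{1:\ell-1}^{[N]},\Z^{[N]}\bigr)\Bigr).
\]
For each $i\in\tilde{\mathcal{I}}_\ell\subseteq\mathcal{N}_\ell$, Lemma \ref{lem:conasypolar} equates the asymmetric $Z(\U_\ell^i\mid\U_\ell^{1:i-1},\X_{1:\ell-1}^{[N]},\Z^{[N]})$ with the symmetric $\tilde{Z}$ of the symmetrized bit-channel with uniform input $\tilde{\U}_\ell^i$ and output $(\tilde{\U}_\ell^{1:i-1},\X_{1:\ell-1}^{[N]},\X_\ell^{[N]}\oplus\tilde{\X}_\ell^{[N]},\Z^{[N]})$. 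Since this $\tilde{Z}\geq 1-2^{-N^\beta}$, Lemma \ref{lem:Mutualbound} gives mutual information at most $2^{-N^{\beta'}}$ and hence symmetric conditional entropy at least $1-2^{-N^{\beta'}}$. To move this back to the asymmetric entropy I would use two observations: (a)~$\tilde{\X}_\ell^{[N]}$, and hence $\X_\ell^{[N]}\oplus\tilde{\X}_\ell^{[N]}$, is an auxiliary uniform vector independent of all original quantities, so conditioning on it does not change the asymmetric entropy; (b)~once $\X_\ell^{[N]}\oplus\tilde{\X}_\ell^{[N]}$ is conditioned on, the tuples $(\U_\ell^{1:i})$ and $(\tilde{\U}_\ell^{1:i})$ are coordinate-wise deterministic transformations of each other via $\U_\ell^j\oplus\tilde{\U}_\ell^j=((\X_\ell\oplus\tilde{\X}_\ell)G_N)^j$. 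Combining (a) and (b) gives $H(\U_\ell^i\mid\U_\ell^{1:i-1},\X_{1:\ell-1}^{[N]},\Z^{[N]})\geq 1-2^{-N^{\beta'}}$, and summing over $|\tilde{\mathcal{I}}_\ell|\leq N$ indices produces a bound of order $N\cdot 2^{-N^{\beta'}}$, which is comfortably inside the claimed $O(N^2 2^{-N^{\beta'}})$; the slack in the stated exponent absorbs low-order factors from the two lemmas.

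The hard part will be the asymmetric-to-symmetric entropy identification under the random shaping. I have to argue carefully that once we average over the random mapping $\Phi_{\mathcal{S}_\ell}$, the induced joint law on $\U_\ell^{[N]}$ genuinely coincides with the target asymmetric distribution used to define $\mathcal{N}_\ell$; otherwise the Bhattacharyya bounds designed against the ideal lattice Gaussian distribution are not the ones actually realized by the code. A secondary technical point is verifying that polarization preserves degradation of the underlying main and wiretap channels, which is what underpins the inclusion $\mathcal{F}_\ell\subseteq\mathcal{N}_\ell$ in the very first step.
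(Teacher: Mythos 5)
Your first stage reproduces the paper's argument: expand $I(\M_\ell\F_\ell;\Z^{[N]},\X_{1:\ell-1}^{[N]})$ by the chain rule over the non-shaping, non-random indices (your $\tilde{\mathcal{I}}_\ell=\mathcal{A}_\ell^{\mathcal{S}^c}\cup\mathcal{F}_\ell$ is exactly the paper's $(\mathcal{S}_\ell\cup\mathcal{R}_\ell)^c$), enlarge the conditioning to $\U_\ell^{1:i_j-1}$, pass to the symmetrized subchannel via Lemma \ref{lem:conasypolar}, and invoke Lemma \ref{lem:Mutualbound} together with the degradation of Eve's subchannels (which puts $\mathcal{F}_\ell$ inside $\mathcal{N}_\ell$). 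The paper bridges the asymmetric and symmetric conditional entropies with the inequality $H(\X|\Y)-H(\X|\Y,\Z)\leq Z(\X|\Y)-Z(\X|\Y,\Z)^2$ rather than your observations (a)--(b), but that is a cosmetic difference.

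The genuine gap is the issue you flag at the end and then leave unresolved. All of the Bhattacharyya-parameter identities and bounds in the first stage are statements about the \emph{ideal} joint law $P$ in which every $\U_\ell^i$ (including the message, frozen and random positions) is drawn from $P_{\U_\ell^i|\U_\ell^{1:i-1},\X_{1:\ell-1}^{[N]}}$. The encoder instead produces a law $Q$ in which $\M_\ell\F_\ell\mathsf{R}_\ell$ are exactly uniform and only $\mathsf{S}_\ell$ follows the random mapping; these two laws do \emph{not} coincide, they are only close. The paper's proof therefore has a second stage: it bounds the total variation $\|Q-P\|\leq N2^{-N^{\beta'}}$ (via the arguments of Theorems 5 and 6 of the shaping paper) and then uses a continuity estimate for mutual information to transfer the bound from $I_P$ to $I_Q$, at a cost of $7N2^{-N^{\beta'}}\log 2^N+h_2(N2^{-N^{\beta'}})+h_2(4N2^{-N^{\beta'}})=O(N^22^{-N^{\beta'}})$. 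This correction term is precisely why the lemma is stated with $N^2$ rather than $N$; your remark that the extra factor of $N$ merely ``absorbs low-order factors from the two lemmas'' misattributes its origin, and without the total-variation plus continuity step your argument only bounds the mutual information of a code that is not the one actually transmitted.
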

\begin{proof}
We firstly assume that $\U_\ell^{i}$ is selected according to the distribution $P_{\U_\ell^i|\U_\ell^{1:i-1},\X_{1:\ell-1}^{[N]}}$ for all $i \in [N]$, i.e.,
\begin{equation}
u_\ell^i=
\begin{cases}
0 \;\;\;\; \text{with probability } P_{\U_\ell^i|\U_\ell^{1:i-1},\X_{1:\ell-1}^{[N]}}(0|u_\ell^{1:i-1},x_{1:\ell-1}^{[N]}),\\
1 \;\;\;\; \text{with probability } P_{\U_\ell^i|\U_\ell^{1:i-1},\X_{1:\ell-1}^{[N]}}(1|u_\ell^{1:i-1},x_{1:\ell-1}^{[N]}).
\end{cases}
\label{eqn:rdnmappingS}
\end{equation} for all $i \in [N]$. In this case, the input distribution $P_{\X_\ell|\X_{1:\ell-1}}$ at each level is exactly the optimal input distribution obtained from the lattice Gaussian distribution. The mutual information between $\M_\ell \F_\ell$ and
$\Big(\Z^{[N]},\X_{1:\ell-1}^{[N]}\Big)$ in this case is denoted by $I_P\Big(\M_\ell \F_\ell;\Z^{[N]},\X_{1:\ell-1}^{[N]}\Big)$.

For the shaping induced channel $\mathcal{Q}_N(W_\ell,\mathcal{S}_\ell,\mathcal{R}_\ell)$ ($\mathcal{R}_\ell$ is $\mathcal{B}_\ell^{\mathcal{S}^c}$ according to the above analysis), we write the indices of the input bits $(\mathcal{S}_\ell \cup \mathcal{R}_\ell)^c=[N]\setminus(\mathcal{S}_\ell \cup \mathcal{R}_\ell)$ as $\{i_1,i_2,...,i_{N-s_\ell-r_\ell}\}$, where $|\mathcal{R}|=r_\ell$ and $|\mathcal{S}_\ell|=s_\ell$, and assume that $i_1<i_2<\cdot\cdot\cdot< i_{N-s_\ell-r_\ell}$. We have
{\allowdisplaybreaks\begin{eqnarray}
\begin{aligned}
I_P\Big(\M_\ell \F_\ell;\Z^{[N]},\X_{1:\ell-1}^{[N]}\Big)&=I_P\Big(\U^{(\mathcal{S}_\ell \cup \mathcal{R}_\ell)^c}_\ell;\Z^{[N]},\X_{1:\ell-1}^{[N]}\Big) \\ \notag
&=I_P\Big(\U_\ell^{i_1}, \U_\ell^{i_2},..., \U_\ell^{i_{N-r_\ell-s_\ell}};\Z^{[N]},\X_{1:\ell-1}^{[N]}\Big) \\
&= \sum_{j=1}^{N-r_\ell-s_\ell} I_P\Big(\U_\ell^{i_j};\Z^{[N]},\X_{1:\ell-1}^{[N]}|\U_\ell^{i_1}, \U_\ell^{i_2},..., \U_\ell^{i_{j-1}}\Big)\\
&= \sum_{j=1}^{N-r_\ell-s_\ell} I_P\Big(\U_\ell^{i_j};\Z^{[N]},\X_{1:\ell-1}^{[N]},\U_\ell^{i_1}, \U_\ell^{i_2},..., \U_\ell^{i_{j-1}}\Big) \\
&\stackrel{(a)}\leq \sum_{j=1}^{N-r_\ell-s_\ell} I_P\Big(\U_\ell^{i_j};\Z^{[N]},\X_{1:\ell-1}^{[N]},\U_\ell^{1}, \U_\ell^{2},..., \U_\ell^{i_{j}-1}\Big), \\
\end{aligned}
\end{eqnarray}}where $(a)$ holds because adding more variables will not decrease the mutual information.

Then the above mutual information can be bounded by the mutual information of the symmetric channel plus an infinitesimal term as follows:
{\allowdisplaybreaks\begin{flalign*}
\sum_{j=1}^{N-r_\ell-s_\ell}&I_P\Big(\U_\ell^{i_j};\Z^{[N]},\X_{1:\ell-1}^{[N]},\U_\ell^{1:i_j-1}\Big)\\
\stackrel{(a)}\leq&\sum_{j=1}^{N-r_\ell-s_\ell} I\Big(\widetilde{\U}_\ell^{i_j};\Z^{[N]},\X_{1:\ell-1}^{[N]},\widetilde{\X}_\ell^{[N]} \oplus \X_\ell^{[N]}, \widetilde{\U}_\ell^{1:i_j-1}\Big)+H\Big(\widetilde{\U}_\ell^{i_j}|\Z^{[N]},\X_{1:\ell-1}^{[N]},\widetilde{\X}_\ell^{[N]} \oplus \X_\ell^{[N]}, \widetilde{\U}_\ell^{1:i_j-1}\Big)\\
-&\sum_{j=1}^{N-r_\ell-s_\ell} H\Big(\U_\ell^{i_j}|\Z^{[N]},\X_{1:\ell-1}^{[N]},\U_\ell^{1:i_j-1}\Big)\\
\stackrel{(b)}\leq& \sum_{j=1}^{N-r_\ell-s_\ell} I\Big(\widetilde{\U}_\ell^{i_j};\Z^{[N]},\X_{1:\ell-1}^{[N]},\widetilde{\X}_\ell^{[N]} \oplus X_\ell^{[N]}, \widetilde{\U}_\ell^{1:i_j-1}\Big)\\
+&\sum_{j=1}^{N-r_\ell-s_\ell} Z\Big(\U_\ell^{i_j}|\Z^{[N]},\X_{1:\ell-1}^{[N]},\U_\ell^{1:i_j-1}\Big)-\Big(Z(\U_\ell^{i_j}|\Z^{[N]},\X_{1:\ell-1}^{[N]},\U_\ell^{1:i_j-1})\Big)^2\\
\stackrel{(c)}\leq& \sum_{j=1}^{N-r_\ell-s_\ell} I\Big(\widetilde{\U}_\ell^{i_j};\Z^{[N]},\X_{1:\ell-1}^{[N]},\widetilde{\X}_\ell^{[N]} \oplus \X_\ell^{[N]}, \widetilde{\U}_\ell^{1:i_j-1}\Big)+N2^{-N^{\beta}} \\
\stackrel{(d)}\leq& \:\:N2^{-N^{\beta'}}+N2^{-N^{\beta}} \\
\leq& \:\:2N2^{-N^{\beta'}}
\end{flalign*}}
for $0<\beta'<\beta<0.5$. Inequalities $(a)$-$(d)$ follow from
\begin{itemize}
\item[] $(a)$ uniformly distributed $\widetilde{\U}_\ell^{i_j}$,
\item[] $(b)$ \cite[Proposition 2]{polarsource} which gives $H(\X|\Y)-H(\X|\Y,\Z)\leq Z(\X|\Y)-(Z(\X|\Y,\Z)^2)$ and Lemma \ref{lem:conasypolar},
\item[] $(c)$ our coding scheme guaranteeing that $Z\Big(\U_\ell^{i_j}|\Z^{[N]},\X_{1:\ell-1}^{[N]},\U_\ell^{1:i_j-1}\Big)$ is greater than $1-2^{-N^{\beta}}$ for the frozen bits and information bits,
\item[] $(d)$ Lemma \ref{lem:Mutualbound}.
\end{itemize}

For wiretap coding, the message $\M_\ell$, frozen bits $\F_\ell$ and random bits $\mathsf{R}_\ell$ are all uniformly random, and the shaping bits $\mathsf{S}_\ell$ are determined by $\mathsf{S}^c_\ell$ according to $\Phi_{\mathcal{S}_\ell}$. Let $Q_{\U_\ell^{[N]}, \X_{1:\ell-1}^{[N]},\Z^{[N]}}$ denote the joint distribution of $(\U_\ell^{[N]}, \X_{1:\ell-1}^{[N]},\Z^{[N]})$ resulted from uniformly distributed $\M_\ell\F_\ell\mathsf{R}_\ell$ and $\mathsf{S}_\ell$ according to $\Phi_{\mathcal{S}_\ell}$. By the proofs of \cite[Th. 5]{polarlatticeJ} and \cite[Th. 6]{polarlatticeJ}, the total variation distance can be bounded as
\begin{eqnarray}\label{eq:vd}
\Big\|Q_{\U_\ell^{[N]}, \X_{1:\ell-1}^{[N]},\Z^{[N]}}-P_{\U_\ell^{[N]}, \X_{1:\ell-1}^{[N]},\Z^{[N]}}\Big\| \leq N 2^{-N^ {\beta'}}
\end{eqnarray}
for sufficiently large $N$.

By \cite[Proposition 5]{howtoachieveAsym}, the mutual information $I(\M_\ell \F_\ell;\Z^{[N]},\X_{1:\ell-1}^{[N]})$ due to $Q_{\U_\ell^{[N]}, \X_{1:\ell-1}^{[N]},\Z^{[N]}}$ satisfies
\begin{eqnarray}
\Big|I(\M_\ell \F_\ell;\Z^{[N]},\X_{1:\ell-1}^{[N]})-I_P(\M_\ell \F_\ell;\Z^{[N]},\X_{1:\ell-1}^{[N]})\Big| &\leq& 7N2^{-N^ {\beta'}}\log 2^N+h_2\Big(N2^{-N^ {\beta'}}\Big)+h_2\Big(4N2^{-N^ {\beta'}}\Big) \notag \\
&=&O\Big(N^2 2^{-N^ {\beta'}}\Big), \notag
\end{eqnarray}
where $h_2(\cdot)$ denotes the binary entropy function.

\end{proof}

Finally, strong secrecy (for uniform message bits) can be proved in the same fashion as shown in \eqref{eqn:upperbound} as:
\begin{eqnarray}\label{eqn:strongsec}
I\Big(\M \F;\Z^{[N]}\Big) \leq \sum_{\ell=1}^{r}I\Big(\M_\ell\F_\ell;\Z^{[N]},\X^{[N]}_{1:\ell-1}\Big)= O\Big(rN^2 2^{-N^{\beta'}}\Big).
\end{eqnarray}
Therefore we conclude that the whole shaping scheme is secure in the sense that the mutual information leakage between $\M$ and $\Z^{[N]}$ vanishes with the block length $N$.

\subsection{Reliability}\label{sec:reliabilityshape}
The reliability analysis in Sect. \ref{sec:reliability} holds for the wiretap coding without shaping. When shaping is involved, the problematic set $\mathcal{D}_\ell$ at each level is included in the shaping set $\mathcal{S}_\ell$ and hence determined by the random mapping $\Phi_{\mathcal{S}_\ell}$. In this subsection, we propose two decoders to achieve reliability for the shaping case. The first one requires a private link between Alice and Bob to share a vanishing fraction of the random mapping $\Phi_{\mathcal{S}_\ell}$ and the second one uses the Markov block coding technique \cite{NewPolarSchemeWiretap} without sharing the random mapping.


\textbf{Decoder 1:} If $\Phi_{\mathcal{S}_\ell}$ is secretly shared between Alice and Bob (we will show in a moment that only a vanishing fraction of $\Phi_{\mathcal{S}_\ell}$ needs to be shared), the bits in $\mathcal{D}_\ell$ can be recovered by Bob simply by the shared mapping but not requiring the Markov block coding technique. By Theorem \ref{theorem:codingtheoremside}, the reliability at each level can be guaranteed by uniformly distributed independent frozen bits and a random mapping $\Phi_{\mathcal{S}_\ell}$ according to $P_{\U_\ell^i|\U_\ell^{1:i-1},X_{1:\ell-1}^{[N]}}$ at each level. The decoding rule is given as follows.
\begin{itemize}
\item Decoding: The decoder receives $y^{[N]}$ and estimates $\widehat{u}_\ell^{[N]}$ based on the previously recovered $x_{1:\ell-1}^{[N]}$ according to the rule
\begin{equation}
\widehat{u}_\ell^i=
\begin{cases}
u_\ell^i, \;\;\;\;\;\;\;\;\;\;\;\;\;\;\;\;\text{if } i\in\mathcal{F}_\ell \\
\phi_i(\widehat{u}_\ell^{1:i-1}, x_{1:\ell-1}^{[N]}),   \;\;\;\;\; \text{if } i\in\mathcal{S}_\ell \\
\underset{u}{\operatorname{argmax}} \; P_{\U_\ell^i|\U_\ell^{1:i-1},\X_{1:\ell-1}^{[N]},\Y^{[N]}}(u|\widehat{u}_\ell^{1:i-1},x_{1:\ell-1}^{[N]},y^{[N]}), \,\text{if } i\in\mathcal{I}_\ell \notag\
\end{cases}.
\label{eqn:decoding}
\end{equation}
\end{itemize}

Note that probability $P_{\U_\ell^i|\U_\ell^{1:i-1},\X_{1:\ell-1}^{[N]},\Y^{[N]}}(u|\widehat{u}_\ell^{1:i-1},x_{1:\ell-1}^{[N]},y^{[N]})$ can be calculated by the SC decoding algorithm efficiently, treating $\Y$ and $\X_{1:\ell-1}$ (already decoded by the SC decoder at previous levels) as the outputs of the asymmetric channel. As a result, the expectation of the decoding error probability over the randomized mappings satisfies $E_{\Phi_{\mathcal{S}_\ell}}[P_e(\phi_{\mathcal{S}_\ell})]=O(2^{-N^{\beta'}})$ for any $\beta'<\beta<0.5$.

Consequently, by the multilevel decoding and union bound, the expectation of the block error probability of our wiretap coding scheme is vanishing as $N \rightarrow \infty$. However, this result is based on the assumption that the mapping $\Phi_{\mathcal{S}_\ell}$ is only shared between Alice and Bob. To share this mapping, we can let Alice and Bob have access to the same source of randomness, which may be achieved by a private link between Alice and Bob. \color{black}Fortunately, the rate of this private link can be made vanishing since the proportion of the shaping bits covered by the mapping $\Phi_{\mathcal{S}_\ell}$ can be significantly reduced.

Recall that the shaping set $\mathcal{S}_\ell$ is defined by
\begin{eqnarray}
\mathcal{S}_\ell \triangleq \Big\{i \in [N]: Z(\U_{\ell}^i|\U_{\ell}^{1:i-1},\X_{1:\ell-1}^{[N]}) < 1-2^{-N^\beta} \,\text{or}\,\, 2^{-N^{\beta}}<Z(\U_\ell^i|\U_\ell^{1:i-1},\Y^{[N]},\X_{1:\ell-1}^{[N]})<1-2^{-N^{\beta}}\Big\}.
\end{eqnarray}
It has been shown in \cite[Th. 2]{polarlatticeQZ} and \cite[Th. 15]{polarflashmemo} that the shaping bits in the subset  $\{i \in [N]: Z(\U_{\ell}^i|\U_{\ell}^{1:i-1},\X_{1:\ell-1}^{[N]}) \leq 2^{-N^\beta} \}$ can be recovered according to the rule
\begin{eqnarray}\label{eqn:MAPshaping}
u_\ell^i= \argmax_u P_{\U_\ell^i|\U_\ell^{1:i-1},\X_{1:\ell-1}^{1:N}}(u|u_\ell^{1:i-1},x_{1:\ell-1}^{1:N}) \,\,\,\,\,\text{if} \,\,\,\,\, Z(\U_{\ell}^i|\U_{\ell}^{1:i-1},\X_{1:\ell-1}^{[N]}) \leq 2^{-N^\beta}, \notag
\end{eqnarray}
instead of mapping.  This modification has negligible impact on strong secrecy. Let us explain it briefly. For the shaping bits in $\mathcal{S}_\ell$ with $Z(\U_{\ell}^i|\U_{\ell}^{1:i-1},\X_{1:\ell-1}^{[N]}) \leq 2^{-N^\beta}$, we also have $H(\U_{\ell}^i|\U_{\ell}^{1:i-1},\X_{1:\ell-1}^{[N]}) \leq 2^{-N^\beta}$. This means that $\U_{\ell}^i$ in $\mathcal{S}_\ell$ is almost determined by $\U_{\ell}^{1:i-1}$ and $\X_{1:\ell-1}^{[N]}$ when $N$ is sufficiently large. The probability $P_{\U_\ell^i|\U_\ell^{1:i-1},\X_{1:\ell-1}^{1:N}}(u|u_\ell^{1:i-1},x_{1:\ell-1}^{1:N})$ for those bits can be arbitrarily close to either 0 or 1. Therefore, replacing the random rounding rule with the MAP decision rule for those bits will yield another vanishing term $N2^{-N^{\beta'}}$ on the right hand side of the upper bound of the total variation distance as shown in \eqref{eq:vd}, which results in negligible difference on the information leakage when $N$ grows large. Moreover, since $Z(\U_{\ell}^i|\U_{\ell}^{1:i-1},\X_{1:\ell-1}^{[N]}) \leq 2^{-N^\beta}$ for theses shaping bits, using the MAP decision rule will also yield an additional vanishing term $N2^{-N^{\beta'}}$ on the upper bound of the decoding error probability for Bob.
\color{black}
As a result, the deterministic mapping has only to cover the unpolarized set
\begin{eqnarray}\label{eqn:setdS}
d\mathcal{S}_\ell=\Big\{i\in[N]:2^{-N^{\beta}}<Z(\U_\ell^i|\U_\ell^{1:i-1},\X_{1:\ell-1}^{1:N})<1-2^{-N^{\beta}}\text{ or } \notag \\
2^{-N^{\beta}}<Z(\U_\ell^i|\U_\ell^{1:i-1},\Y^{1:N},\X_{1:\ell-1}^{1:N})<1-2^{-N^{\beta}}\Big\}, \notag
\end{eqnarray}
whose proportion $\frac{|d\mathcal{S}_\ell|}{N} \to 0$ as $N\to \infty$.

\begin{rem}
By the channel equivalence, when $\Phi_{\mathcal{S}_\ell}$ is shared to Bob, the decoding of $\Lambda_b$ is equivalent to the MMSE lattice decoding proposed in \cite{cong2} for random lattice codes. When instantiated with a polar lattice, we use multistage lattice decoding. More explicitly, by \cite[Lemma 7]{polarlatticeJ}, the SC decoding of the asymmetric channel can be converted to the SC decoding of its symmetrized channel, which is equivalent to the MMSE-scaled partition channel in the lattice Gaussian shaping case \cite[Lemma 9]{polarlatticeJ}.
\end{rem}

\textbf{Decoder 2:} Alternatively, one can also use the block Markov coding technique \cite{NewPolarSchemeWiretap} to achieve reliability without sharing $\Phi_{\mathcal{S}_\ell}$. As shown in Fig. \ref{fig:markovblock}, the message at $\ell$-th level is divided into $k_\ell$ blocks. Denote by $\Delta\mathsf{S}_\ell$ the bits in unpolarized set $d\mathcal{S}_\ell$. The shaping bits $\mathsf{S}_\ell$ for each block is further divided into unpolarized bits $\Delta\mathsf{S}_\ell$ and polarized shaping bits $\mathsf{S}_\ell \setminus \Delta\mathsf{S}_\ell$. As mentioned above, only $\Delta\mathsf{S}_\ell$ needs to be covered by mapping and its proportion is vanishing. We can sacrifice some message bits to convey $\Delta\mathsf{S}_\ell$ for the next block without involving significant rate loss. These wasted message bits are denoted by $\mathsf{E}_\ell$. For encoding, we start with the last block (Block $k_\ell$). Given $\F_\ell$, $\M_\ell$ (no $\mathsf{E}_\ell$ for the last block) and $\mathsf{R}_\ell$, we can obtain $\Delta\mathsf{S}_\ell$ according to $\Phi_{\mathcal{S}_\ell}$. Then we copy $\Delta\mathsf{S}_\ell$ of the last block to the bits $\mathsf{E}_\ell$ of its previous block and do encoding to get the $\Delta\mathsf{S}_\ell$ of block $k_\ell-1$. This process ends until we get the $\Delta\mathsf{S}_\ell$ of the first block. This scheme is similar to the one we discussed in Sect. \ref{sec:reliability}. To achieve reliability, we need a secure code with vanishing rate to convey the bits $\Delta\mathsf{S}_\ell$ of the first block to Bob. See \cite{MatthieuBlochCovert} for an example of such codes. To guarantee an insignificant rate loss, $k_\ell$ is required to be sufficiently large. We may set $k_\ell=O(N^\alpha)$ for some $\alpha>0$.

\begin{figure}[h]
    \centering
    \includegraphics[width=10cm]{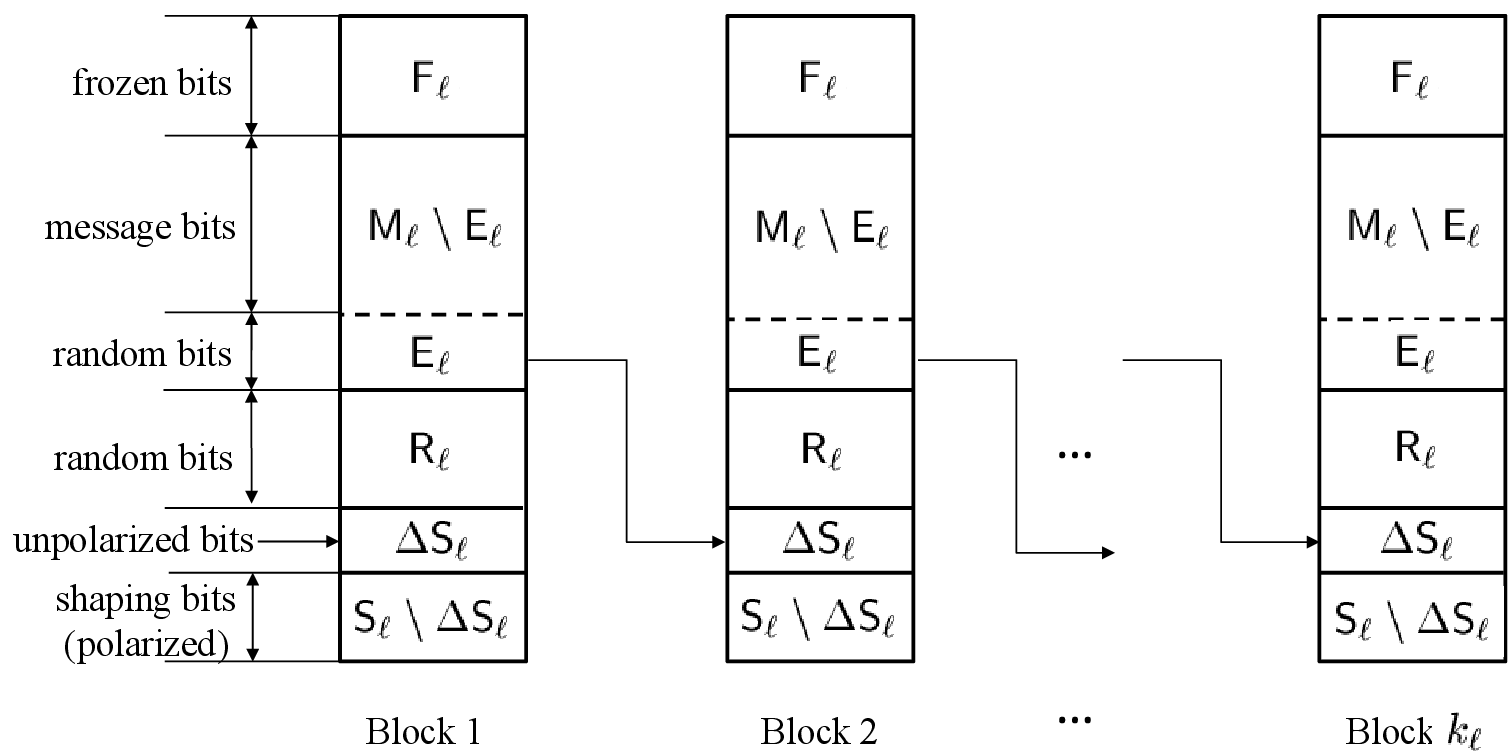}
    \caption{Markov block coding scheme without sharing the secret mapping.}
    \label{fig:markovblock}
\end{figure}

Now we present the main theorem of the paper.

\begin{ther}[Achieving secrecy capacity of the GWC]
Consider a multilevel lattice code constructed from polar codes based on asymmetric channels and lattice Gaussian shaping $D_{\Lambda,\sigma_s}$. Given $\sigma_e^2>\sigma_b^2$, let $\epsilon_{\Lambda}(\widetilde{\sigma}_e)$ be negligible and set the number of levels $r=O(\log\log N)$ for $N\rightarrow\infty$. Then all strong secrecy rates $R$ satisfying $R<\frac{1}{2}\log\left(\frac{1+\SNR_b}{1+\SNR_e}\right)$ are achievable for the Gaussian wiretap channel, where $\SNR_b$ and $\SNR_e$ denote the $\SNR$ of the main channel and wiretapper's channel, respectively.
\end{ther}
\begin{proof}
The reliability condition and the strong secrecy condition are satisfied by Theorem \ref{theorem:codingtheoremside} and Lemma \ref{lem:secrecybound}, respectively. It remains to illustrate that the secrecy rate approaches the secrecy capacity. For some $\epsilon'\to 0$, we have
\begin{eqnarray}
\begin{aligned}
\lim_{N\rightarrow\infty}R&=\sum_{\ell=1}^{r}\lim_{N\rightarrow\infty}\frac{|\mathcal{A}_\ell^{\mathcal{S}^c}|}{N} \\
&=\sum_{\ell=1}^{r}I(\X_\ell;\Y|\X_1,\cdot\cdot\cdot,\X_{\ell-1})-I(\X_\ell;\Z|\X_1,\cdot\cdot\cdot,\X_{\ell-1}) \\
&\stackrel{(a)}=\frac{1}{2}\log\left(\frac{\widetilde{\sigma}_e^2}{\widetilde{\sigma}_b^2}\right)-\epsilon'\\
&\stackrel{(b)}\geq\frac{1}{2}\log\left(\frac{1+\SNR_b}{1+\SNR_e}\right)-\epsilon',
\end{aligned}
\end{eqnarray}
where ($a$) is due to Lemma \ref{lem:shapdoesnotmatter}, and ($b$) is because the signal power $P_s\leq \sigma_s^2$ \cite[Lemma 1]{LingBel13}\footnote{Of course, $R$ cannot exceed the secrecy capacity, so this inequality implies that $P_s \to \sigma_s^2$.}, respectively.
\end{proof}

\subsection{Semantic security}\label{sec:semantic}

In this subsection, we extend strong secrecy of the constructed polar lattices to semantic security, namely the resulted strong secrecy does not rely on the distribution of the message. We take the level-1 wiretapper's channel $W_1$ as an example. Our goal is to show that the maximum mutual information between $\M_1 \F_1$ and $\Z^{[N]}$ is vanishing for any input distribution as $N \rightarrow \infty$. Unlike the symmetric randomness induced channel introduced in \cite{polarsecrecy}, the new induced channel is generally asymmetric with transition probability
\begin{equation}
Q(z|v)=\frac{1}{2^{r_1}} \sum_{\Phi_{\mathcal{S}_1}}P(\Phi_{\mathcal{S}_1})\sum_{e \in \{0,1\}^{r_1}} W_1^N(z|(v,e,\Phi_{\mathcal{S}_1}(v,e))G_N), \notag
\end{equation}
where $\Phi_{\mathcal{S}_1}(v,e)$ represents the shaping bits determined by $v$ (the frozen bits and message bits together) and $e$ (the random bits) according to the random mapping $\Phi_{\mathcal{S}_1}$. It is difficult to find the optimal input distribution to maximize the mutual information for the new induced channel.

To prove the semantic security, we investigate the relationship between the $i$-th subchannel of $W_{1,N}$ and the $i$-th subchannel of its symmetrized version $\widetilde{W}_{1,N}$, which are denoted by $W_{1}^{(i,N)}$ and $\widetilde{W}_{1}^{(i,N)}$, respectively. According to Lemma \ref{lem:Asy2sym}, the asymmetric wiretap channel $W_1: \X_1 \rightarrow \Z$ is symmetrized to channel $\widetilde{W}_1: \widetilde{\X}_1 \rightarrow (Z, \widetilde{\X}_1 \oplus \X_1)$. After the $N$-by-$N$ polarization transform, we obtain $W_{1}^{(i,N)}: \U_1^i \rightarrow (\U_1^{1:i-1}, \Z^{[N]})$ and $\widetilde{W}_{1}^{(i,N)}: \widetilde{\U}_1^i \rightarrow (\widetilde{\U}_1^{1:i-1}, \widetilde{\X}_1^{[N]} \oplus \X_1^{[N]}, \Z^{[N]} )$. The next lemma shows that if we symmetrize $W_{1}^{(i,N)}$ directly, i.e., construct a symmetric channel $\widetilde{W_{1}^{(i,N)}}: \widetilde{\U}_1^i \rightarrow (\U_1^{1:i-1}, \Z^{[N]}, \widetilde{\U}_1^i \oplus \U_1^i)$ in the sense of Lemma \ref{lem:Asy2sym}, $\widetilde{W_{1}^{(i,N)}}$ is degraded with respect to $\widetilde{W}_{1}^{(i,N)}$.

\begin{lem}\label{lem:semanticDegrad}
The symmetrized channel $\widetilde{W_{1}^{(i,N)}}$ derived directly from $W_{1}^{(i,N)}$ is degraded with respect to the $i$-th subchannel $\widetilde{W}_{1}^{(i,N)}$ of $\widetilde{W}_1$.
\end{lem}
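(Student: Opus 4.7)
The plan is to exhibit an explicit deterministic function $f$ from the output of $\tilde{W}_{1}^{(i,N)}$ to the output of $\widetilde{W_{1}^{(i,N)}}$ and verify that pushing the conditional distribution of $\tilde{W}_{1}^{(i,N)}$ forward through $f$ reproduces $\widetilde{W_{1}^{(i,N)}}$. A deterministic map is a (trivial) special case of a stochastic degradation kernel, so this suffices. The key structural fact I would exploit is that the polarization transform $G_N$ satisfies $G_N^2 = I$ over $\mathbb{F}_2$, so $\tilde{\U}_1^{[N]} \oplus \U_1^{[N]} = (\tilde{\X}_1^{[N]} \oplus \X_1^{[N]})\,G_N$; hence the XOR of the two $\X$-sequences, which appears in the output of $\tilde{W}_{1}^{(i,N)}$, encodes the whole XOR of the two $\U$-sequences.

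Given a candidate output $(\tilde{u}_1^{1:i-1}, e^{[N]}, z^{[N]})$ of $\tilde{W}_{1}^{(i,N)}$, I would define
\begin{equation}
f(\tilde{u}_1^{1:i-1}, e^{[N]}, z^{[N]}) \;\triangleq\; \bigl(\,\tilde{u}_1^{1:i-1} \oplus (e^{[N]} G_N)^{1:i-1},\; z^{[N]},\; (e^{[N]} G_N)^{i}\,\bigr).
\end{equation}
The first coordinate recovers $u_1^{1:i-1} = \tilde{u}_1^{1:i-1} \oplus (\tilde{u}_1^{1:i-1} \oplus u_1^{1:i-1})$, which is exactly what the output of $\widetilde{W_{1}^{(i,N)}}$ demands; the third is $\tilde{u}_1^i \oplus u_1^i$; and $z^{[N]}$ passes through. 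So $f$ is a well-defined deterministic map producing the output alphabet of $\widetilde{W_{1}^{(i,N)}}$.

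It remains to show that $f$ sends the conditional law of $\tilde{W}_{1}^{(i,N)}$ to the conditional law of $\widetilde{W_{1}^{(i,N)}}$, which is the only substantive calculation. I would work from the common underlying joint distribution of $(\tilde{\X}_1^{[N]}, \X_1^{[N]}, \Z^{[N]})$ in which $\tilde{\X}_1^{[N]}$ is uniform and independent of $\X_1^{[N]}\sim P_{\X_1}^{\otimes N}$ and $\Z^{[N]} \mid \X_1^{[N]} \sim W_1^N$. Both $\tilde{W}_{1}^{(i,N)}$ and $\widetilde{W_{1}^{(i,N)}}$ are obtained as marginals of this joint law after the $G_N$ transform: the former reveals $(\tilde{\U}_1^{1:i-1}, \tilde{\U}_1^{[N]}\oplus \U_1^{[N]}, \Z^{[N]})$, the latter reveals $(\U_1^{1:i-1}, \Z^{[N]}, \tilde{\U}_1^i \oplus \U_1^i)$, and the second triple is a deterministic function of the first. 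After substituting $(u_1^{[N]}, e_u^{[N]})$ for $(\tilde{u}_1^{[N]}, e^{[N]})$ via $u_1^{[N]} = \tilde{u}_1^{[N]} \oplus e_u^{[N]}$ and summing over the $2^{i-1}$ choices of $e_u^{1:i-1}$ and the $2^{N-i}$ pairs $(\tilde{u}_1^{i+1:N}, e_u^{i+1:N})$ consistent with each $u_1^{i+1:N}$, the factors $\frac{2 \cdot 2^{i-1} \cdot 2^{N-i}}{2^N}$ cancel and what remains is $\sum_{u_1^{i+1:N}} \prod_j P_{\X_1}(x_1^j) W_1(z^j|x_1^j)$ with $u_1^i = v \oplus \tilde{u}_1^i$ fixed, which is exactly $P_{\widetilde{W_{1}^{(i,N)}}}(u_1^{1:i-1}, z^{[N]}, v \mid \tilde{u}_1^i)$ by the definition in Lemma~\ref{lem:Asy2sym}. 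The principal obstacle is purely bookkeeping — keeping track of the two independent polarization transforms, the symmetrization shift, and the correct counting of fibers of $f$ — rather than any deep inequality; once the substitution $u_1^{[N]} = \tilde{u}_1^{[N]} \oplus e_u^{[N]}$ is made, the identification is almost immediate, and the Markov chain $\tilde{\U}_1^i \to (\tilde{\U}_1^{1:i-1}, \tilde{\X}_1^{[N]}\oplus \X_1^{[N]}, \Z^{[N]}) \to (\U_1^{1:i-1}, \Z^{[N]}, \tilde{\U}_1^i \oplus \U_1^i)$ established by $f$ yields the claimed degradation.
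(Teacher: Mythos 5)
Your proposal is correct and follows essentially the same route as the paper: both arguments realize the degradation by a deterministic map that merges the $2^{N-1}$ output symbols of $\tilde{W}_{1}^{(i,N)}$ sharing the transition probability $2^{-N+1}P_{\U_1^{1:i},\Z^{[N]}}(u_1^{1:i},z^{[N]})$ into the single symbol $(u_1^{1:i-1},z^{[N]},\tilde{u}_1^i\oplus u_1^i)$ of $\widetilde{W_{1}^{(i,N)}}$. The only difference is presentational --- you write the merging map $f$ explicitly via $\tilde{\U}_1^{[N]}\oplus\U_1^{[N]}=(\tilde{\X}_1^{[N]}\oplus\X_1^{[N]})G_N$ and verify the pushforward by direct fiber counting, whereas the paper imports the key identity from the cited result of Honda--Yamamoto and merges implicitly; your version is, if anything, the more careful one.
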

\begin{proof}
According to the proof of \cite[Theorem 2]{aspolarcodes}, we have the relationship
\begin{equation}
\widetilde{W}_1^{(i,N)}(\widetilde{u}_1^{1:i-1},  \widetilde{x}_1^{[N]}\oplus x_1^{[N]}, z^{[N]}|\widetilde{u}_1^i)=2^{-N+1}P_{\U_1^{1:i},\Z^{[N]}}(u_1^{1:i},z^{[N]}). \notag
\end{equation}
Letting $\widetilde{x}_1^{[N]}\oplus x_1^{[N]}=0^{[N]}$, the equation becomes $\widetilde{W}_1^{(i,N)}(u_1^{1:i-1}, 0^{[N]}, z^{[N]}     |u_1^i)=2^{-N+1}P_{\U_1^{1:i},\Z^{[N]}}(u_1^{1:i},z^{[N]})$, which has already been addressed in \cite{aspolarcodes}. However, for a fixed $x_1^{[N]}$ and $\widetilde{u}_1^i=u_1^i$, since $G_N$ is full rank, there are $2^{N-1}$ choices of $\widetilde{x}_1^{[N]}$ remaining, which means that there exists $2^{N-1}$ outputs symbols of $\widetilde{W}_1^{(i,N)}$ having the same transition probability $2^{-N+1}P_{\U_1^{1:i},\Z^{[N]}}(u_1^{1:i},z^{[N]})$. Suppose a middle channel which maps all these output symbols to one single symbol, which is with transition probability $P_{\U_1^{1:i},\Z^{[N]}}(u_1^{1:i},z^{[N]})$. The same operation can be done for $\widetilde{u}_1^i=u_1^i\oplus 1$, making another symbol with transition probability $P_{\U_1^{1:i},\Z^{[N]}}(u_1^{1:i},z^{[N]})$ corresponding to the input $u_1^i\oplus 1$. This is a channel degradation process, and the degraded channel is symmetric.

Then we show that the symmetrized channel $\widetilde{W_{1}^{(i,N)}}$ is equivalent to the degraded channel mentioned above.
By Lemma \ref{lem:Asy2sym}, the channel transition probability of $\widetilde{W_{1}^{(i,N)}}$ is
\begin{equation}
\widetilde{W_{1}^{(i,N)}}(u_1^{1:i-1},  \widetilde{u}_1^{i}\oplus u_1^{i}, z^{[N]}|\widetilde{u}_1^i)=P_{\U_1^{1:i},\Z^{[N]}}(u_1^{1:i},z^{[N]}), \notag
\end{equation}
which is equal to the transition probability of the degraded channel discussed in the previous paragraph. Therefore, $\widetilde{W_{1}^{(i,N)}}$ is degraded with respect to $\widetilde{W}_{1}^{(i,N)}$.
\end{proof}

\begin{rem}
In fact, a stronger relationship that $\widetilde{W_{1}^{(i,N)}}$ is equivalent to $\widetilde{W}_{1}^{(i,N)}$ can be proved. This is because that the output symbols combined in the channel degradation process have the same LR. An evidence of this result can be found in \cite[Equation (36)]{aspolarcodes}, where $\widetilde{Z}(\widetilde{W}_{1}^{(i,N)})=Z(\U_1^i|\U_1^{1:i-1},\Z^{[N]})=\widetilde{Z}(\widetilde{W_{1}^{(i,N)}})$. Nevertheless, the degradation relationship is sufficient for this work. Notice that Lemma \ref{lem:semanticDegrad} can be generalized to high level $\ell$, with outputs $\Z^{[N]}$ replaced by $(\Z^{[N]}, \X_{1:\ell-1}^{[N]})$.
\end{rem}

Illuminated by Lemma \ref{lem:semanticDegrad}, we can also symmetrize the new induced channel at level $\ell$ and show that it is degraded with respect to the randomness-induced channel constructed from $\widetilde{W}_{\ell}$. For simplicity, letting $\ell=1$, the new induced channel at level 1 is $\mathcal{Q}_N(W_1,\mathcal{S}_1,\mathcal{R}_1): \U_1^{(\mathcal{S}_1\cup\mathcal{R}_1)^c}\rightarrow \Z^{[N]}$, which is symmetrized to $\widetilde{\mathcal{Q}}_N(W_1,\mathcal{S}_1,\mathcal{R}_1): \widetilde{\U}_1^{(\mathcal{S}_1\cup\mathcal{R}_1)^c}\rightarrow (\Z^{[N]}, \widetilde{\U}_1^{(\mathcal{S}_1\cup\mathcal{R}_1)^c} \oplus \U_1^{(\mathcal{S}_1\cup\mathcal{R}_1)^c})$ in the same fashion as in Lemma \ref{lem:Asy2sym}. Recall that the randomness-induced channel of $\widetilde{W}_1$ defined in \cite{polarsecrecy} can be denoted as $\mathcal{Q}_N(\widetilde{W}_1,\mathcal{R}_1 \cup \mathcal{S}_1): \widetilde{\U}_1^{(\mathcal{S}_1\cup\mathcal{R}_1)^c}\rightarrow (\Z^{[N]}, \widetilde{\X}_1^{[N]} \oplus \X_1^{[N]})$. Note that for the randomness-induced channel $\mathcal{Q}_N(\widetilde{W}_1,\mathcal{R}_1 \cup \mathcal{S}_1)$, set $\mathcal{R}_1 \cup \mathcal{S}_1$ is fed with uniformly random bits, which is different from the shaping-induced channel.
\begin{lem}\label{lem:semanticDegrad1}
For an asymmetric channel $W_1: \X_1 \rightarrow \Z$ and its symmetrized channel $\widetilde{W}_1: \widetilde{\X}_1 \rightarrow (\Z, \widetilde{\X}_1\oplus\X_1)$, the symmetrized version of the new induced channel $\widetilde{\mathcal{Q}}_N(W_1,\mathcal{S}_1,\mathcal{R}_1)$ is degraded with respect to the randomness-induced channel $\mathcal{Q}_N(\widetilde{W}_1,\mathcal{R}_1 \cup \mathcal{S}_1)$.
\end{lem}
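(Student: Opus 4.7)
The plan is to mirror the argument of Lemma \ref{lem:semanticDegrad}, but at the level of the ``composite'' induced channels rather than individual subchannels. I would exhibit an explicit (deterministic) post-processing map that converts the output of $\mathcal{Q}_N(\tilde{W}_1,\mathcal{R}_1\cup\mathcal{S}_1)$ into that of $\tilde{\mathcal{Q}}_N(W_1,\mathcal{S}_1,\mathcal{R}_1)$ while preserving the induced conditional distribution on the input.

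First I would observe that the two channels share the same input alphabet $\{0,1\}^{|(\mathcal{S}_1\cup\mathcal{R}_1)^c|}$; the only mismatch is in the output, namely $\mathcal{Q}_N(\tilde{W}_1,\mathcal{R}_1\cup\mathcal{S}_1)$ emits the full length-$N$ vector $\tilde{\X}_1^{[N]}\oplus\X_1^{[N]}$ along with $\Z^{[N]}$, whereas $\tilde{\mathcal{Q}}_N(W_1,\mathcal{S}_1,\mathcal{R}_1)$ emits only the shorter vector $\tilde{\U}_1^{(\mathcal{S}_1\cup\mathcal{R}_1)^c}\oplus\U_1^{(\mathcal{S}_1\cup\mathcal{R}_1)^c}$ along with $\Z^{[N]}$. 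Since $G_N$ is self-inverse over $\mathrm{GF}(2)$ and linear, the identity $(\tilde{\X}_1^{[N]}\oplus\X_1^{[N]})G_N=\tilde{\U}_1^{[N]}\oplus\U_1^{[N]}$ holds, so the candidate degrading channel is the deterministic map
\[
(z^{[N]},d^{[N]})\longmapsto\bigl(z^{[N]},\,(d^{[N]}G_N)|_{(\mathcal{S}_1\cup\mathcal{R}_1)^c}\bigr).
\]

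Second, I would verify that this map produces the correct transition probabilities. Expanding $\mathcal{Q}_N(\tilde{W}_1,\mathcal{R}_1\cup\mathcal{S}_1)(z^{[N]},d^{[N]}\mid\tilde{v})$ via the definition of $\tilde{W}_1$ (Lemma \ref{lem:Asy2sym}) and the fact that $\tilde{\U}_1^{\mathcal{R}_1\cup\mathcal{S}_1}$ is uniform, and then summing over all $d^{[N]}$ that map to a fixed $e$, the uniform shift by $\tilde{\X}_1^{[N]}$ averages out an affine subspace of size $2^{|\mathcal{R}_1|+|\mathcal{S}_1|}$; the change of variable $x_1^{[N]}=\tilde{x}_1^{[N]}\oplus d^{[N]}$ collapses the double sum to $P_{\Z^{[N]},\U_1^{(\mathcal{S}_1\cup\mathcal{R}_1)^c}}(z^{[N]},\tilde{v}\oplus e)$ under the target distribution $P_{\X_1^{[N]}}$. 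On the other side, applying Lemma \ref{lem:Asy2sym} directly to the asymmetric new induced channel gives $\tilde{\mathcal{Q}}_N(W_1,\mathcal{S}_1,\mathcal{R}_1)(z^{[N]},e\mid\tilde{v})=Q(z^{[N]}\mid\tilde{v}\oplus e)\,P_V(\tilde{v}\oplus e)$, where $Q$ is the new induced channel transition and $P_V$ the (uniform) input distribution assigned to the message and frozen positions. My task then reduces to matching the joint $Q(z^{[N]}\mid v)P_V(v)$ with $P_{\Z^{[N]},\U_1^{(\mathcal{S}_1\cup\mathcal{R}_1)^c}}(z^{[N]},v)$.

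The main technical obstacle will be this last matching step, because the shaping rule \eqref{eqn:rdnmappingS} generates $u_1^{\mathcal{S}_1}$ from $P_{\U_1^{\mathcal{S}_1}\mid\U_1^{(\mathcal{S}_1)^c}}$ while treating $u_1^{\mathcal{R}_1}$ as uniform; these two together reproduce $P_{\U_1^{[N]}}$ only if $\U_1^{\mathcal{R}_1}$ is itself conditionally near-uniform given $\U_1^{(\mathcal{S}_1\cup\mathcal{R}_1)^c}$. This holds because the indices in $\mathcal{R}_1\subseteq(\mathcal{S}_1)^c$ satisfy $Z(\U_1^i\mid\U_1^{1:i-1})\geq 1-2^{-N^\beta}$ by the definition of $\mathcal{S}_1$ in \eqref{eqn:newS}, so I would invoke the same total-variation estimate as in the proof of Lemma \ref{lem:secrecybound} (via \cite[Th.~5,6]{polarlatticeJ}) to conclude that the two transition kernels coincide up to a vanishing $O(N2^{-N^{\beta'}})$ perturbation, which suffices for stochastic degradation in the limit. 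The extension to a general level $\ell>1$ is immediate: one replaces $\Z^{[N]}$ by the pair $(\Z^{[N]},\X_{1:\ell-1}^{[N]})$ throughout, exactly as flagged in the remark following Lemma \ref{lem:semanticDegrad}.
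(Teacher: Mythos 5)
Your degrading map is exactly the paper's: the paper likewise builds a ``middle channel'' that, for a fixed realization $x_1^{[N]}$ and fixed input $\tilde{u}_1^{(\mathcal{S}_1\cup\mathcal{R}_1)^c}$, merges the $2^{|\mathcal{S}_1\cup\mathcal{R}_1|}$ output symbols $(z^{[N]},\tilde{x}_1^{[N]}\oplus x_1^{[N]})$ of $\mathcal{Q}_N(\tilde{W}_1,\mathcal{R}_1\cup\mathcal{S}_1)$ into a single symbol, and your explicit formula $(z^{[N]},d^{[N]})\mapsto(z^{[N]},(d^{[N]}G_N)|_{(\mathcal{S}_1\cup\mathcal{R}_1)^c})$ is precisely that merging (via $G_N^{-1}=G_N$ over $\mathrm{GF}(2)$). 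Your evaluation of the merged kernel is also correct: averaging over the uniform $\tilde{x}_1^{[N]}$ and summing over all $x_1^{[N]}$ with a fixed $u_1^{(\mathcal{S}_1\cup\mathcal{R}_1)^c}$ collapses it to $P_{\Z^{[N]},\U_1^{(\mathcal{S}_1\cup\mathcal{R}_1)^c}}(z^{[N]},\tilde{v}\oplus e)$ under the target i.i.d.\ distribution $P$. The only place you diverge is the final ``matching'' step. In the paper the symmetrized new induced channel is taken, exactly as in Lemma~\ref{lem:semanticDegrad} (where $\widetilde{W_{1}^{(i,N)}}$ has kernel $P_{\U_1^{1:i},\Z^{[N]}}$), with respect to this same target joint $P$; under that reading your first computation already closes the proof with an exact identity, and no appeal to total variation is needed. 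By instead symmetrizing the actual encoder kernel $Q$ with uniform $P_V$, you create a mismatch that you can only repair approximately, so you end up proving ``degradation up to an $O(N2^{-N^{\beta'}})$ perturbation,'' which is strictly weaker than the lemma as stated (exact degradation) and would force an extra continuity-of-mutual-information step when the lemma is applied in Theorem~\ref{lem:semanticBound}. The total-variation estimate you invoke is the business of Lemma~\ref{lem:secrecybound}, where the actual encoder distribution genuinely enters; it is not needed here.
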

\begin{proof}
The proof is similar to that of Lemma \ref{lem:semanticDegrad}. For a fixed realization $x_1^{[N]}$ and input $\widetilde{u}_1^{(\mathcal{S}_1\cup\mathcal{R}_1)^c}$, there are $2^{|\mathcal{S}_1\cup\mathcal{R}_1|}$ choice of $\widetilde{x}_1^{[N]}$ remaining. Since $z^{[N]}$ is only dependent on $x_1^{[N]}$, we can build a middle channel which merges the $2^{|\mathcal{S}_1\cup\mathcal{R}_1|}$ output symbols of $\mathcal{Q}_N(\widetilde{W}_1,\mathcal{R}_1 \cup \mathcal{S}_1)$ to one output symbol of $\widetilde{\mathcal{Q}}_N(W_1,\mathcal{S}_1,\mathcal{R}_1)$, which means that $\widetilde{\mathcal{Q}}_N(W_1,\mathcal{S}_1,\mathcal{R}_1)$ is degraded with respect to $\mathcal{Q}_N(\widetilde{W}_1,\mathcal{R}_1 \cup \mathcal{S}_1)$. Again, this result can be generalized to higher levels.
\end{proof}

Finally, we are ready to prove the semantic security of our wiretap coding scheme. For brevity, let $\M_\ell\F_\ell$ and $\widetilde{\M}_\ell\widetilde{\F}_\ell$ denote $\U^{(\mathcal{S}_\ell \cup \mathcal{R}_\ell)^c}_\ell$ and $\widetilde{\U}^{(\mathcal{S}_\ell \cup \mathcal{R}_\ell)^c}_\ell$, respectively. Recall that $\M$ is divided into $\M_1, ..., \M_r$ at each level. We express $\M\F$ and $\widetilde{\M}\widetilde{\F}$ as the collection of message and frozen bits on all levels of the new induced channel and the symmetric randomness-induced channel, respectively. We also define $\widetilde{\M}\widetilde{\F}\oplus \M\F$ as the operation $\widetilde{\M}_\ell \widetilde{\F}_\ell \oplus \M_\ell\F_\ell$ from level $1$ to level $r$.

\begin{ther}[Semantic security]\label{lem:semanticBound}
For arbitrarily distributed message $\M$, the information leakage $I(\M; \Z^{[N]})$ of the proposed wiretap lattice code is upper-bounded as
\begin{eqnarray}\notag
I\Big(\M; \Z^{[N]}\Big) \leq I\Big(\widetilde{\M}\widetilde{\F}; \Z^{[N]}, \widetilde{\M}\widetilde{\F}\oplus \M\F\Big) \leq rN2^{-N^{\beta'}},
\end{eqnarray} where $I\Big(\widetilde{\M}\widetilde{\F}; \Z^{[N]}, \widetilde{\M}\widetilde{\F}\oplus \M\F\Big)$ is the capacity of the symmetrized channel derived from the non-binary channel $\M\F \rightarrow \Z^{[N]}$ \footnote{The symmetrization of a non-binary channel is similar to that of a binary channel as shown in Lemma \ref{lem:Asy2sym}. When $\X$ and $\widetilde{\X}$ are both non-binary, $\X \oplus \widetilde{\X}$ denotes the result of the exclusive or (xor) operation of the binary expressions of $\X$ and $\widetilde{\X}$.}.
\end{ther}
\begin{proof}
\sloppy
By \cite[Proposition 16]{polarsecrecy}, the channel capacity of the randomness-induced channel $\mathcal{Q}_N(\widetilde{W}_1,\mathcal{S}_1,\mathcal{R}_1)$ is upper-bounded by $N2^{-N^{\beta'}}$ when partition rule \eqref{eqn:Good&bad} is used. By channel degradation, the channel capacity of the symmetrized new induced channel $\widetilde{\mathcal{Q}}_N(W_1,\mathcal{S}_1,\mathcal{R}_1)$ can also be upper-bounded by $N2^{-N^{\beta'}}$. Since this result can be generalized to higher level $\ell$ ($\ell \geq 1 $), we obtain $C(\widetilde{\mathcal{Q}}_N(W_\ell,\mathcal{S}_\ell,\mathcal{R}_\ell)) \leq N2^{-N^{\beta'}}$, which means
$I\Big(\widetilde{\M}_\ell\widetilde{\F}_\ell;\Z^{[N]},\X^{[N]}_{1:\ell-1}, \widetilde{\M}_{\ell}\widetilde{\F}_{\ell} \oplus \M_{\ell}\F_{\ell}\Big) \leq N2^{-N^{\beta'}}$. Similarly to \eqref{eqn:upperbound}, we have
{\allowdisplaybreaks\begin{flalign*}\label{eqn:upperboundend}
&I\Big(\widetilde{\M}\widetilde{\F};\Z^{[N]}, \widetilde{\M}\widetilde{\F}\oplus \M\F\Big)  \\
&=\sum_{\ell=1}^{r}I\Big(\widetilde{\M}_\ell\widetilde{\F}_\ell;\Z^{[N]}, \widetilde{\M}\widetilde{\F}\oplus \M\F|\widetilde{\M}_{1:\ell-1}\widetilde{\F}_{1:\ell-1}\Big) \\
&=\sum_{\ell=1}^{r}H\Big(\widetilde{\M}_\ell\widetilde{\F}_\ell|\widetilde{\M}_{1:\ell-1}\widetilde{\F}_{1:\ell-1}\Big)-H\Big(\widetilde{\M}_\ell\widetilde{\F}_\ell|\Z^{[N]}, \widetilde{\M}\widetilde{\F}\oplus \M\F ,\widetilde{\M}_{1:\ell-1}\widetilde{\F}_{1:\ell-1}\Big)\\
&\leq\sum_{\ell=1}^{r}H\Big(\widetilde{\M}_\ell\widetilde{\F}_\ell\Big)-H\Big(\widetilde{\M}_\ell\widetilde{\F}_\ell|\Z^{[N]}, \widetilde{\M}\widetilde{\F}\oplus \M\F ,\widetilde{\M}_{1:\ell-1}\widetilde{\F}_{1:\ell-1}\Big)\\
&=\sum_{\ell=1}^{r}I\Big(\widetilde{\M}_\ell\widetilde{\F}_\ell;\Z^{[N]},\widetilde{\M}\widetilde{\F}\oplus \M\F ,\widetilde{\M}_{1:\ell-1}\widetilde{\F}_{1:\ell-1}\Big)\\
&\stackrel{(a)}=\sum_{\ell=1}^{r}I\Big(\widetilde{\M}_\ell\widetilde{\F}_\ell;\Z^{[N]},\M_{1:\ell-1}\F_{1:\ell-1} ,\widetilde{\M}_{\ell}\widetilde{\F}_{\ell} \oplus \M_{\ell}\F_{\ell}\Big)\\
&\stackrel{(b)}\leq \sum_{\ell=1}^{r}I\Big(\widetilde{\M}_\ell\widetilde{\F}_\ell;\Z^{[N]},\X^{[N]}_{1:\ell-1}, \widetilde{\M}_{\ell}\widetilde{\F}_{\ell} \oplus \M_{\ell}\F_{\ell}\Big) \\
&\leq rN2^{-N^{\beta'}},
\end{flalign*}}where equality $(a)$ holds because $\Z^{[N]}$ is determined by $\M\F\mathsf{R}$ and $\widetilde{\M}_\ell\widetilde{\F}_\ell$ is independent of $\widetilde{\M}_{\ell+1:r}\widetilde{\F}_{\ell+1:r} \oplus \M_{\ell+1:r}\F_{\ell+1:r}$, and inequality $(b)$ holds because adding more variables will not decrease the mutual information.

Therefore, we have
\begin{eqnarray}\begin{aligned} \notag\
I\Big(\M; \Z^{[N]}\Big) & \leq I\Big(\M\F; \Z^{[N]}\Big) \\
& \stackrel{(a)} \leq H\Big(\widetilde{\M}\widetilde{\F}\oplus \M\F\Big)-H(\M\F)+ I\Big(\M\F; \Z^{[N]}\Big) \\
& \stackrel{(b)} = I\Big(\widetilde{\M}\widetilde{\F};\Z^{[N]}, \widetilde{\M}\widetilde{\F}\oplus \M\F\Big) \\
& \leq  rN2^{-N^{\beta'}},
\end{aligned}\end{eqnarray}
where the equality in $(a)$ holds iff $\M\F$ is also uniform, and $(b)$ is due to the chain rule.
\end{proof}


\section{Discussion}
We would like to elucidate our coding scheme for the Gaussian wiretap channel in terms of the lattice structure. In Sect. \ref{sec:NoPower}, we constructed the AWGN-good lattice $\Lambda_b$ and the secrecy-good lattice $\Lambda_e$ without considering the power constraint. When the power constraint is taken into consideration, the lattice Gaussian shaping was implemented in Sect. \ref{sec:SecrecyGoodShap}. $\Lambda_b$ and $\Lambda_e$ were then constructed according to the MMSE-scaled main channel and wiretapper's channel, respectively. We note that these two lattices themselves are generated only if the independent frozen bits on all levels are $0$s. Since the independent frozen set of the polar codes at each level is filled with random bits, we actually obtain a coset $\Lambda_b+\chi$ of $\Lambda_b$ and a coset $\Lambda_e+\chi$ of $\Lambda_e$ simultaneously, where $\chi$ is a uniformly distributed shift. This is because we are unable to fix the independent frozen bits $\F_{\ell}$ in our scheme (due to the lack of the proof that the shaping-induced channel is symmetric). By using the lattice Gaussian $D_{\Lambda,\sigma_s}$ as our constellation in each lattice dimension, we would obtain $D_{\Lambda^N,\sigma_s}$ without coding. Since $\Lambda_e+\chi \subset \Lambda_b+\chi \subset \Lambda^N$, we actually implemented the lattice Gaussian shaping over both $\Lambda_b+\chi$ and $\Lambda_e+\chi$.  To summarize, Alice firstly assigns each message $m\in \mathcal{M}$ to a coset $\widetilde{\lambda}_m \in \Lambda_b/\Lambda_e$, then randomly sends a point in the coset $\Lambda_e+\chi+\lambda_m$ ($\lambda_m$ is the coset leader of $\widetilde{\lambda}_m$) according to the distribution $D_{\Lambda_e+\chi+\lambda_m, \sigma_s}$. This scheme is consistent with the theoretical model proposed in \cite{cong2}.

On the mod-$\Lambda_s$ wiretap channel, semantic security was obtained for free due to the channel symmetry. On the power-constrained wiretap channel, a symmetrized new induced channel from $\widetilde{\M}\widetilde{\F}$ to $(\Z^{[N]}, \widetilde{\M}\widetilde{\F}\oplus \M\F)$ was constructed to upper-bound the information leakage. This channel is directly derived from the new induced channel from $\M\F$ to $\Z^{[N]}$. According to Lemma \ref{lem:semanticDegrad}, this symmetrized new induced channel is degraded with respect to the symmetric randomness-induced channel from $\widetilde{\M}\widetilde{\F}$ to $(\Z^{[N]}, \widetilde{\X}^{[N]}_{1:r} \oplus \X^{[N]}_{1:r})$. Moreover, when $\widetilde{\F}$ is frozen, the randomness-induced channel from $\widetilde{\M}$ to $(\Z^{[N]}, \widetilde{\X}^{[N]}_{1:r} \oplus \X^{[N]}_{1:r})$ corresponds to the $\Lambda_b/\Lambda_e$ channel given in Sect. \ref{sec:NoPower} (with MMSE scaling).
\color{black}

\appendices

\section{Proof of Lemma \ref{lem:strongbound}} \label{appendix0}
\begin{proof}
It is sufficient to show $I(\M\F;\Z^{[N]}) \leq N\cdot 2^{-N^{\beta'}}$ since $I(\M;\Z^{[N]}) \leq I(\M\F;\Z^{[N]})$. As has been shown in \cite{polarsecrecy}, the induced channel $\M\F \rightarrow \Z^{[N]}$ is symmetric when $\mathcal{B}$ and $\mathcal{D}$ are fed with random bits $\mathsf{R}$. For a symmetric channel, the maximum mutual information is achieved by uniform input distribution. Let $\widetilde{\U}^\mathcal{A}$ and $\widetilde{\U}^\mathcal{C}$ denote independent and uniform versions of $\M$ and $\F$ and $\widetilde{\Z}^{[N]}$ be the corresponding channel output. Assuming $i_1<i_2<...<i_{|\mathcal{A}\cup \mathcal{C}|}$ are the indices in $\mathcal{A} \cup \mathcal{C}$,
{\allowdisplaybreaks\begin{eqnarray}
I(\M\F;\Z^{[N]}) &\leq& I(\widetilde{\U}^\mathcal{A}\widetilde{\U}^\mathcal{C}; \widetilde{\Z}^{[N]}) \notag \\
 &=& \sum_{j=1}^{|\mathcal{A}\cup \mathcal{C}|} I(\widetilde{\U}^{i_j};\widetilde{\Z}^{[N]}|\widetilde{\U}^{i_1},...,\widetilde{\U}^{i_{j-1}})\notag \\
 &=& \sum_{j=1}^{|\mathcal{A}\cup \mathcal{C}|} I(\widetilde{\U}^{i_j};\widetilde{\Z}^{[N]},\widetilde{\U}^{i_1},...,\widetilde{\U}^{i_{j-1}}) \notag \\
 &\leq& \sum_{j=1}^{|\mathcal{A}\cup \mathcal{C}|} I(\widetilde{\U}^{i_j};\widetilde{\Z}^{[N]},\widetilde{\U}^{1:i_j-1}) \notag \\
 &=& \sum_{j=1}^{|\mathcal{A}\cup \mathcal{C}|} I(\widetilde{W}_N^{(i_j)}) \leq N\cdot 2^{-N^{\beta'}}. \notag
\end{eqnarray}}
\end{proof}

\section{Proof of Lemma \ref{lem:securerate}} \label{appendix1}
\begin{proof}
According to the definitions of $\mathcal{G}(\widetilde{V})$ and $\mathcal{N}(\widetilde{W})$ presented in \eqref{eqn:Good&bad},
\begin{eqnarray}
\lim_{N\rightarrow \infty} \frac{|\mathcal{G}(\widetilde{V})|}{N}&=& \lim_{N\rightarrow \infty} \frac{1}{N} |\{i: \widetilde{Z}(\widetilde{V}_N^{(i)})\leq 2^{-N^\beta}\}|=C(\widetilde{V}), \notag \\
\lim_{N\rightarrow \infty} \frac{|\mathcal{N}(\widetilde{W})|}{N}&=& \lim_{N\rightarrow \infty} \frac{1}{N} |\{i: \widetilde{Z}(\widetilde{W}_N^{(i)})\geq 1-2^{-N^\beta}\}|=1-C(\widetilde{W}). \notag
\end{eqnarray}
Here we define another two sets $\bar{\mathcal{G}}(\widetilde{V})$ and $\bar{\mathcal{N}}(\widetilde{W})$ as
\begin{eqnarray}
\bar{\mathcal{G}}(\widetilde{V})&=\{i:\widetilde{Z}(\widetilde{V}_{N}^{(i)}) \geq 1-2^{-N^\beta}\}, \notag\ \\
\bar{\mathcal{N}}(\widetilde{W})&=\{i:\widetilde{Z}(\widetilde{W}_{N}^{(i)}) \leq 2^{-N^\beta}\}. \notag
\end{eqnarray}
Similarly, we have $\lim_{N\rightarrow \infty} \frac{|\bar{\mathcal{G}}(\widetilde{V})|}{N}=1-C(\widetilde{V})$ and $\lim_{N\rightarrow \infty} \frac{|\bar{\mathcal{N}}(\widetilde{W})|}{N}=C(\widetilde{W})$. Since $\widetilde{W}$ is stochastically degraded with respect to $\widetilde{V}$, $\bar{\mathcal{G}}(\widetilde{V})$ and $\bar{\mathcal{N}}(\widetilde{W})$ are disjoint with each other \cite{polarchannelandsource}, then we have
\begin{eqnarray}
\lim_{N\rightarrow \infty} \frac{|\bar{\mathcal{G}}(\widetilde{V})\cup \bar{\mathcal{N}}(\widetilde{W})|}{N}=1-C(\widetilde{V})+C(\widetilde{W}). \notag
\end{eqnarray}
By the property of polarization, the proportion of the unpolarized part is vanishing as $N$ goes to infinity, i.e.,
\begin{eqnarray}
\lim_{N\rightarrow \infty} \frac{|\mathcal{G}(\widetilde{V})\cup \bar{\mathcal{G}}(\widetilde{V})|}{N}=1, \notag \\
\lim_{N\rightarrow \infty} \frac{|\mathcal{N}(\widetilde{W})\cup \bar{\mathcal{N}}(\widetilde{W})|}{N}=1, \notag
\end{eqnarray}
Finally, we have
\begin{eqnarray}
\lim_{N\rightarrow \infty} \frac{|\mathcal{G}(\widetilde{V})\cap \mathcal{N}(\widetilde{W})|}{N}=1-\lim_{N\rightarrow \infty} \frac{|\bar{\mathcal{G}}(\widetilde{V})\cup \bar{\mathcal{N}}(\widetilde{W})|}{N}=C(\widetilde{V})-C(\widetilde{W}). \notag
\end{eqnarray}
\end{proof}

\section{Proof of Lemma \ref{lem:channelequ}} \label{appendix2}
\begin{proof}
It is sufficient to demonstrate that channel $W(\Lambda_{\ell-1}/\Lambda_{\ell},\sigma_e^2)$ is degraded with respect to $W'(\X_{\ell};\Z|\X_{1:\ell-1})$ and $W'(\X_{\ell};\Z|\X_{1:\ell-1})$ is degraded with respect to $W(\Lambda_{\ell-1}/\Lambda_{\ell},\sigma_e^2)$ as well. To see this, we firstly construct a middle channel $\widehat{W}$ from $\Z \in \mathcal{V}(\Lambda_r)$ to $\bar{\Z} \in \mathcal{V}(\Lambda_\ell)$. For a specific realization $\bar{z}$ of $\bar{\Z}$, this $\widehat{W}$ maps $\bar{z}+[\Lambda_\ell/\Lambda_r]$ to $\bar{z}$ with probability 1, where $[\Lambda_\ell/\Lambda_r]$ represents the set of the coset leaders of the partition $\Lambda_\ell/\Lambda_r$. Then we obtain channel $W(\Lambda_{\ell-1}/\Lambda_{\ell},\sigma_e^2)$ by concatenating $W'(\X_{\ell};\Z|\X_{1:\ell-1})$ and $\widehat{W}$, which means $W(\Lambda_{\ell-1}/\Lambda_{\ell},\sigma_e^2)$ is degraded to $W'(\X_{\ell};\Z|\X_{1:\ell-1})$. Similarly, we can also construct a middle channel $\check{W}$ from $\bar{\Z}$ to $\Z$. For a specific realization $\bar{z}$ of $\bar{\Z}$, this $\widehat{W}$ maps $\bar{z}$ to $\bar{z}+[\Lambda_\ell/\Lambda_r]$ with probability $\frac{1}{|\Lambda_{\ell}/\Lambda_r|}$, where $|\Lambda_{\ell}/\Lambda_r|$ is the order of this partition. This means that $W'(\X_{\ell};\Z|\X_{1:\ell-1})$ is also degraded to $W(\Lambda_{\ell-1}/\Lambda_{\ell},\sigma_e^2)$.

By channel degradation and \cite[Lemma 1]{Ido}, letting channel $W$ and $W'$ denote $W(\Lambda_{\ell-1}/\Lambda_{\ell},\sigma_e^2)$ and $W'(\X_{\ell};\Z|\X_{1:\ell-1})$ for short, we have
\begin{eqnarray}\label{eqn:channeldegradation}
\begin{aligned}
&\widetilde{Z}(W_N^{(i)}) \leq \widetilde{Z}({W'}_N^{(i)}) \text{ and }\widetilde{Z}(W_N^{(i)}) \geq \widetilde{Z}({W'}_N^{(i)}),\\
&I(W_N^{(i)}) \leq I({W'}_N^{(i)}) \;\;\text{ and }I(W_N^{(i)}) \geq I({W'}_N^{(i)}), \notag
\end{aligned}
\end{eqnarray}
meaning that $\widetilde{Z}(W_N^{(i)})=\widetilde{Z}({W'}_N^{(i)})$ and $I(W_N^{(i)}) = I({W'}_N^{(i)})$.
\end{proof}

\bibliographystyle{IEEEtran}
\bibliography{Myreff}
\end{document}